\newcommand{\ot}{\leftarrow}
\newcommand{\argmin}{\mathop{\rm arg\,min}}
\newcommand{\argmax}{\mathop{\rm arg\,max}}
\newcommand{\OPT}{\mathrm{OPT}}
\newcommand{\ALG}{\mathrm{ALG}}
\pgfplotsset{compat=newest}
\theoremstyle{definition}
\newtheorem{theorem}{Theorem}[section]
\newtheorem{lemma}{Lemma}[section]
\newtheorem{corollary}{Corollary}[section]
\newtheorem{claim}{Claim}[section]
\newtheorem{observation}{Observation}[section]
\title{Online Knapsack Problems with a Resource Buffer}
\author{Xin Han\thanks{Dalian University of Technology, China, Email: {\tt hanxin@dlut.edu.cn}} \and
  Yasushi Kawase\thanks{Tokyo Institute of Technology, Japan, Email: {\tt kawase.y.ab@m.titech.ac.jp}} \and
  Kazuhisa Makino\thanks{Kyoto University, Japan, Email: {\tt makino@kurims.kyoto-u.ac.jp}} \and
  Haruki Yokomaku\thanks{NTT DATA Mathematical Systems, Japan, Email: {\tt dsm4up2c@gmail.com}}
}
\date{}
\begin{document}
\maketitle
\begin{abstract}
In this paper, we introduce online knapsack problems with a resource buffer.
In the problems, we are given a knapsack with capacity $1$, a buffer with capacity $R\ge 1$, and items that arrive one by one.
Each arriving item has to be taken into the buffer or discarded on its arrival irrevocably.
When every item has arrived, we transfer a subset of items in the current buffer into the knapsack.
Our goal is to maximize the total value of the items in the knapsack.
We consider four variants depending on whether items in the buffer are
removable (i.e., we can remove items in the buffer) or non-removable, and proportional (i.e., the value of each item is proportional to its size) or general.
For the general\&non-removable case, we observe that no constant competitive algorithm exists for any $R\ge 1$.
For the proportional\&non-removable case, we show that a simple greedy algorithm is optimal for every $R\ge 1$.
For the general\&removable and the proportional\&removable cases, we present optimal algorithms for small $R$ and give asymptotically nearly optimal algorithms for general $R$.
\end{abstract}

\section{Introduction}
Online knapsack problem is one of the most fundamental problems in online optimization~\cite{KelPfePis04,Komm2016}.
In the problem, we are given a knapsack with a fixed capacity, and items with sizes and values, which arrive one by one.
Upon arrival, we must decide whether to accept the arrived item into the knapsack, and this decision is irrevocable.

In this paper, we introduce a variant of the online knapsack problem, which we call \emph{online knapsack problems with a resource buffer}.
Suppose that we have a buffer with fixed capacity in addition to a knapsack with fixed capacity, and items arrive online.
Throughout this paper, we assume that the knapsack capacity is $1$, and the buffer capacity is $R~(\ge 1)$.
In addition, assume that each item $e$ has a size $s(e)$ and a value $v(e)$.
When an item $e$ has arrived, we must decide whether to take it into the buffer or not. 
The total size of the selected items must not exceed the capacity of the buffer $R$.
Further, we cannot change the decisions that we made past, i.e., once an item is rejected, it will never be put into the buffer.
We consider two settings:
(i) \emph{non-removable}, i.e., we cannot discard items in the buffer, and
(ii) \emph{removable}, i.e., we can discard some items in the buffer, and once an item is discarded, it will never be put into the buffer again.
After the end of the item sequence, we transfer a subset of items from the buffer into the knapsack.
Our goal is to maximize the total value of the items in the knapsack under the capacity constraint.
It is worth mentioning that, if $R=1$, our problem is equivalent to the standard online knapsack problem.

Our model can be regarded as a ``partial'' resource augmentation model. 
That is, in the resource augmentation model, the online algorithm can use the buffer for the final result.
On the other hand, in our model, the online algorithm uses the buffer only to temporary store items, and it must use the knapsack to output the final result.
Moreover, our model can be viewed as a streaming setting: we process items in a streaming fashion, and we can keep only a small portion of the items in memory at any point.

To make things more clear, let us see an example of the online knapsack problem with a resource buffer.
Let $R=1.5$.
Suppose that three items $e_1,e_2,e_3$ with
$(s(e_1),v(e_1))=(0.9,4)$, $(s(e_2),v(e_2))=(0.7,3)$, $(s(e_3),v(e_3))=(0.2,2)$
%$((s(e_i),v(e_i)))_{i=1,2,3}=((0.9,4), (0.7,3), (0.2,2))$
are given in this order,
but we do not know the items in advance.
When $e_1$ has arrived, suppose that we take it into the buffer.
Then, for the non-removable case, we need to reject $e_2$ because we cannot put it together with $e_1$.
In contrast, for the removable case, we have another option---take $e_2$ into the buffer by removing $e_1$.
If $\{e_1,e_3\}$ is selected in the buffer at the end, the resulting value is $4$ by transferring $\{e_1\}$ to the knapsack.
Note that, in the resource augmentation model, we can obtain a solution with value $6$ by selecting $\{e_1,e_3\}$.
%This is hard to interpret because the value is larger than the offline optimal value $5$.

\subsection*{Related work}
%\bigskip\noindent\textbf{Related work}\quad
For the non-removable online knapsack problem (i.e., non-removable case with $R=1$),
Marchetti-Spaccamela and Vercellis~\cite{MV1995} showed that no constant competitive algorithm exists.
Iwama and Taketomi~\cite{IT2002} showed that there is no constant competitive algorithm even for the proportional case (i.e., the value of each item is proportional to its size).
The problem has also studied under some restrictions on the input~\cite{KP1998,BIKK2007,ZCL2008,AKL2019}.
% KP1998: dynamic and stochastic
% BIKK2007: knapsack secretary
% ZCL2008: each item is in $[L,U]$ and the size of each item is much smaller than the size of the knapsack

The removable variant of the online knapsack problem (i.e., removable case with $R=1$) is introduced by Iwama and Taketomi~\cite{IT2002}.
They proved that no constant competitive deterministic algorithm exists in general,
but presented an optimal $(1+\sqrt 5)/2$-competitive algorithm for the proportional case.
The competitive ratios can be improved by using randomization~\cite{HKM2015,CJS2016}.
In addition, the problem with removal cost has been studied under the name of the \emph{buyback problem}~\cite{BHK2009,AK2009,HKM2014,KHM2016,KHM2018}.

An online knapsack problem with resource augmentation is studied by Iwama and Zhang~\cite{IZ2007}.
In their setting, an online algorithm is allowed to use a knapsack with capacity $R\ge 1$, while the offline algorithm has a knapsack with capacity $1$.
They developed optimal $\max\{1,\,1/(R-1)\}$-competitive algorithms for the general\&removable and proportional\&non-removable cases and
an optimal $\max\bigl\{1,\,\min\{\frac{1+\sqrt{4R+1}}{2R},\frac{2}{2R-1}\}\bigr\}$-competitive algorithm for the proportional\&removable case.
All of their algorithms are based on simple greedy strategies.
%This implies that the online algorithm can always achieve the optimal value if it can use twice large knapsack than the offline one.
The competitive ratios except for the general\&non-removable cases become exactly $1$ when $R$ is a sufficiently large real.

In addition, 
there exist several papers that apply online algorithms to approximately solve the constrained stable matching problems~\cite{KI2017,KI2018,KI2019}.

\subsection*{Our results}
%\bigskip\noindent\textbf{Our results}\quad
We consider four variants depending on whether removable or non-removable, and proportional or general.
In this paper, we focus on deterministic algorithms.
Our results are summarized in Table~\ref{table:summary}.
To compare our model to the resource augmentation model, we list the competitive ratio for both models in the table.
It should be noted that each competitive ratio in our model is at least the corresponding one in the resource augmentation model.
Hence, lower bounds for the resource augmentation model are also valid to our model.
%Note that, when $R=1$, our problem is equivalent to the standard online knapsack problem.

%%% general\&non-removable
For the general\&non-removable case, we show that there is no constant competitive algorithm.
%
%%% proportional\&non-removable
For the proportional\&non-removable case, we show that a simple greedy is optimal and its competitive ratio is $\max\{2,1/(R-1)\}$.
Interestingly, the competitive ratio is equal to the ratio in resource augmentation model for $1<R\le 3/2$.
%
%%% general\&removable
For the general\&removable case, we present an optimal algorithm for $1<R\le 2$. %and show that its competitive ratio is $\max\{2,1/(R-1)\}$.
Furthermore, for large $R$, we provide an algorithm that is optimal up to a logarithmic factor. %a $1+O(\log R/ R)$-competitive algorithm and there exists no online algorithm with competitive ratio less than $1+1/(R+1)$.
The algorithm partitions the input items into groups according to sizes and values, and it applies a greedy strategy for each group that meets a dynamically adjusted threshold.
We will see that the competitive ratio is larger than $1$ for any $R$ but it converges to $1$ as $R$ goes to infinity.
%
%%% proportional\&removable
For the proportional\&removable case, we develop optimal algorithms for $1\le R\le 3/2$.
The basic idea of the algorithms is similar to that of the algorithm for $R=1$ given by Iwama and Taketomi~\cite{IT2002}.
Our algorithms classify the items into three types---\emph{small}, \emph{medium}, and \emph{large}---and the algorithms carefully treat medium items.
We observe that, as $R$ becomes large, we need to handle more patterns to obtain an optimal algorithm.
In addition, for large $R$, we show that the algorithm for the general\&removable case is also optimal up to a logarithmic factor.
%% our model is equals to the classical proportional\&removable case studied in~\cite{IT2002} when $R=1$.
%% We develop optimal algorithms for $1\le R\le 3/2$.
%% There are four cases, $1\le R\le 10/9$, $10/9\le R\le (1+\sqrt{2})/2$, $2-\sqrt{2}/2 \leq R \le 17 - 9\sqrt{3}$, and $2\sqrt{3}-2\le R\le 3/2$. 
%% The competitive ratios are $(1+\sqrt{4R+1})/(2R)$, $(1+\sqrt{4R+1})/(2R)$, $(\sqrt{16R+1}-1)/(2R)$, and $2/R$, respectively. 
%% Note that for $1\le R\le (1+\sqrt{2})/2$, the competitive ratio equals to the ratio in resource augmentation model.
%% And when $R=1$, our ratio is also equals to the ratio in the classical proportional\&removable case studied by Iwama and Taketomi~\cite{IT2002}.
%% We also consider asymptotic case
%% % and show that there exists a $1+O((\log R)/R)$-competitive algorithm
%% and there exists no online algorithm with competitive ratio less than $1+1/(2R)$. 

\begingroup
\newcommand{\thm}[1]{{\footnotesize(Thm.~\ref{#1})}}
\newcommand{\thms}[2]{{\footnotesize(Thms.~\ref{#1},\,\ref{#2})}}
\newcommand{\cor}[1]{{\footnotesize(Cor.~\ref{#1})}}
% \centering
\renewcommand{\arraystretch}{1.2}
\setlength{\tabcolsep}{5pt}
\begin{table}[ht]
\centering
\caption{Summary of the competitive ratios for our model and the resource augmentation model.}
%\vskip-4pt
\label{table:summary}
\begin{threeparttable}
	\scalebox{0.77}{
	\begin{tabular}{c|c||ccc|ccc}\toprule
		\multicolumn{2}{c||}{\multirow{2}{*}{variants}} & \multicolumn{3}{c|}{Our model}& \multicolumn{3}{c}{Resource augmentation}\\\cline{3-8}
		\multicolumn{2}{c||}{}   & $R$ & \makecell{lower\\bound} & \makecell{upper\\bound} & $R$ & \makecell{lower\\bound} & \makecell{upper\\bound} \\ \midrule
		\multirow{4}{*}{\rotatebox[origin=c]{90}{\small non-removable}}& \multirow{3}{*}{prop.}& $1$ & $\infty$~\cite{MV1995}&--- & $1$ & $\infty$~\cite{MV1995} & --- \\
		&  & $(1,\,\frac{3}{2}]$ & $\frac{1}{R-1}$~\cite{IZ2007} & $\frac{1}{R-1}$~\thm{thm:non_removable} & $(1,\, 2]$ & $\frac{1}{R-1}$~\cite{IZ2007}&$\frac{1}{R-1}$~\cite{IZ2007} \\

		& & $[\frac{3}{2},\,\infty)$ & 2~\thm{thm:lower_non_ge} & 2~\cor{cor:non_removable} & $[2,\, \infty)$ & 1 & 1~\cite{IZ2007}\\ \cline{2-8}

		& gen. &  $[1,\,\infty)$ &$\infty$~\cite{MV1995} & --- & $[1,\, \infty)$ &$\infty$~\cite{MV1995} & --- \\\hline

		\multirow{11}{*}{\rotatebox[origin=c]{90}{\small removable}}& \multirow{7}{*}{prop.} &$1$ & $\frac{1+\sqrt{5}}{2}$~\cite{IT2002}& $\frac{1+\sqrt{5} }{2}$~\cite{IT2002} &$1$ & $\frac{1+\sqrt{5} }{2}$~\cite{IT2002}& $\frac{1+\sqrt{5} }{2}$~\cite{IT2002}\\

		& & $[1,\,\frac{1+\sqrt{2}}{2}]$ & $\frac{1+\sqrt{4R+1}}{2R}$~\thm{thm:lower_1leRle2} & $\frac{1+\sqrt{4R+1}}{2R} $~\thms{thm:proportional_removable_<10/9}{thm:10/9_to_(1+sqrt 2)/2} &$[1,\, \frac{1+\sqrt{2}}{2}]$ & $\frac{1+\sqrt{4R+1}}{2R}$~\cite{IZ2007} & $\frac{1+\sqrt{4R+1}}{2R}$~\cite{IZ2007}\\

		& & $[\frac{1+\sqrt{2}}{2},\, 2- \frac{\sqrt{2}}{2}]$ & $\sqrt 2$~\thm{thm:prop/removable_lower_ii} & $\sqrt 2$~\thm{thm:10/9_to_(1+sqrt 2)/2} & \multirow{4}{*}{$[\frac{1+\sqrt{2}}{2},\, \frac{3}{2}]$} & \multirow{4}{*}{$\frac{2}{2R-1}$~\cite{IZ2007}} & \multirow{4}{*}{$\frac{2}{2R-1}$~\cite{IZ2007}}\\

		& & $[2- \frac{\sqrt{2}}{2},\, 17-9\sqrt{3}]$ & $\frac{\sqrt{16R+1}-1}{2R}$~\thm{thm:prop/removable_lower_ii} & $\frac{\sqrt{16R+1}-1}{2R}$~\thm{thm:17-9sqrt(3)} & & & \\

		& & $[17-9\sqrt{3},\, 2\sqrt{3}-2]$ & $\frac{1+\sqrt{3}}{2}$~\thm{thm:prop/removable_lower_iii} & $\frac{1+\sqrt{3}}{2}$~\thm{thm:17-9sqrt(3)} & & & \\

		& & $[2\sqrt{3}-2,\, \frac{3}{2}]$ & $\frac{2}{R}$~\thm{thm:prop/removable_lower_iii} & $\frac{2}{R}$~\thm{thm:2/R} & & & \\

		& & $[1,\, \infty)$ & $1+\frac{1}{\lceil 2R\rceil +1}$~\thm{thm:lower_proportional_removable_general} & $1+O(\frac{\log R}{R})$~\thm{thm:general_removable_general} & $[\frac{3}{2},\,\infty)$ & $1$ & $1$~\cite{IZ2007} \\\cline{2-8}

		&\multirow{4}{*}{gen.} &  $1$& $\infty$~\cite{MV1995} & --- & $1$ &$\infty$~\cite{MV1995} & ---  \\

		&  & $(1,\, \frac{3}{2}]$ & $\frac{1}{R-1}$~\thm{thm:general/removable_lower_1<R<3/2} & $\frac{1}{R-1}$~\thm{thm:weighted_removable_to_2} &\multirow{2}{*}{$(1,\,2]$} & \multirow{2}{*}{$\frac{1}{R-1}$~\cite{IZ2007}} & \multirow{2}{*}{$\frac{1}{R-1}$~\cite{IZ2007}}\\

		& & $[\frac{3}{2},\, 2)$ & 2~\thm{thm:general/removable_lower_3/2<R<2} & 2~\thm{thm:weighted_removable_to_2} & & & \\

		& & $[1,\, \infty)$ & $1+\frac{1}{R+1}$~\thm{thm:general/removable_lower_R>1} & $1+O(\frac{\log R}{R})$~\thm{thm:general_removable_general} &$[2,\,\infty)$ & 1 & 1~\cite{IZ2007}\\\bottomrule
	\end{tabular}
                    }
\end{threeparttable}
\end{table}
\endgroup

\section{Preliminaries}\label{section:preliminaries}
We denote the size and the value of an item $e$ as $s(e)$ and $v(e)$, respectively.
We assume that $1\ge s(e)>0$ and $v(e)\ge 0$ for any $e$.
For a set of items $B$, we abuse notation, and let $s(B)=\sum_{e\in B}s(e)$ and $v(B)=\sum_{e\in B}v(e)$.

For an item $e$, the ratio $v(e)/s(e)$ is called the \emph{density} of $e$.
If all the given items have the same density, we call the problem \emph{proportional}.
Without loss of generality, we assume that $v(e) = s(e)$ for the proportional case.
We sometimes represent an item $e$ as the pair of its size and value $(s(e),v(e))$.
Also, for the proportional case, we sometimes represent an item $e$ as its size $s(e)$.

Let $I=(e_1,\dots,e_n)$ be the input sequence of the online knapsack problem with a resource buffer.
For a deterministic online algorithm $\ALG$, let $B_i$ be the set of items in the buffer at the end of the round $i$.
Note that $B_0=\emptyset$.
In the removable setting, they must satisfy $B_i\subseteq B_{i-1}\cup\{e_i\}$ and $s(B_i)\le R$ $(i=1,\dots,n)$.
In the non-removable setting, they additionally satisfy $B_{i-1}\subseteq B_i$ $(i=1,\dots,n)$.
Without loss of generality, we assume that the algorithm transfers the optimal subset of items from the buffer into the knapsack since we do not require the online algorithm to run in polynomial time.
We denote the outcome value of $\ALG$ by $\ALG(I)~(\coloneqq \max\{v(B)\mid B\subseteq B_n,~s(B)\le 1\})$
and the offline optimal value $\OPT(I)~(\coloneqq \max\{v(B)\mid B\subseteq \{e_1,\dots,e_n\},~s(B)\le 1\})$.
Then, the \emph{competitive ratio} of $\ALG$ for $I$ is defined as $\OPT(I)/\ALG(I)~(\ge 1)$.
In addition, the competitive ratio of a problem is defined as $\inf_{\ALG}\sup_{I}\OPT(I)/\ALG(I)$,
where the infimum is taken over all (deterministic) online algorithms and the supremum is taken over all input sequences.

\section{General\&Non-removable Case}\label{section:general/non-removable}
To make the paper self-contained, we show that the general\&non-removable case admits no constant competitive algorithm.
To see this, we observe an input sequence given by Iwama and Zhang~\cite{IZ2007},
which was used to prove the corresponding result for the resource augmentation setting.

\begin{theorem}
  For any $R\ge 1$,
  there exists no constant competitive algorithm for the general\&non-removable online knapsack problem with a buffer.
\end{theorem}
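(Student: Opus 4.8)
The plan is to give a self-contained adversary argument in the spirit of the instance of Iwama and Zhang~\cite{IZ2007}. I will use items whose size equals the knapsack capacity and whose values explode geometrically: fix a large constant $M>1$ and let the $i$-th item be $e_i=(1,M^i)$. Two structural facts drive everything. First, since every size equals $1$, the knapsack (capacity $1$) can hold exactly one item, so the offline optimum always equals the largest value revealed so far. Second, since every size equals $1$ and the buffer has capacity $R$, at any moment the buffer contains at most $\lfloor R\rfloor$ items; moreover, in the non-removable setting the buffer contents are monotone ($B_0\subseteq B_1\subseteq\cdots$), so once $\lfloor R\rfloor$ items have been accepted, no later item can ever enter the buffer.

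The adversary is adaptive: it reveals $e_1$; as long as the algorithm accepts the current item into the buffer it reveals the next one; and it stops the sequence at the first round in which the algorithm rejects an item. First I would check that such a round always occurs. The algorithm cannot accept all of $e_1,\dots,e_{\lfloor R\rfloor},e_{\lfloor R\rfloor+1}$, because after accepting the first $\lfloor R\rfloor$ of them the buffer has total size $\lfloor R\rfloor>R-1$ and cannot admit another size-$1$ item; and if the algorithm rejects some $e_i$ with $i\le\lfloor R\rfloor$, the sequence has already stopped. Hence the sequence terminates at some round $j\le\lfloor R\rfloor+1$ at which $e_j$ is rejected (possibly $j=1$).

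Next I would evaluate both values at termination. By non-removability, $e_j\notin B_j$ forces $B_j=B_{j-1}\subseteq\{e_1,\dots,e_{j-1}\}$, and since each of these items has size $1$ the algorithm can transfer at most one of them into the knapsack, so $\ALG(I)\le M^{j-1}$ when $j\ge 2$ and $\ALG(I)=0$ when $j=1$. On the other hand, the single item $e_j$ fits in the knapsack, so $\OPT(I)\ge v(e_j)=M^{j}$. Therefore $\OPT(I)/\ALG(I)\ge M$ for $j\ge 2$, and $\OPT(I)/\ALG(I)=\infty$ for $j=1$. Since $M$ may be chosen arbitrarily large while the whole construction uses at most $\lfloor R\rfloor+1$ items, no deterministic online algorithm can be $c$-competitive for any constant $c$.

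The argument is largely routine; the one point that deserves care is the termination claim, namely that the adaptive adversary is guaranteed to elicit a rejection no matter how the algorithm behaves, which is exactly where the buffer-capacity bound above is used. (If one finds the integral slack uncomfortable when $R\notin\mathbb{Z}$, one may instead take all sizes slightly below $1$ but above $1/2$, so the knapsack still admits only one item, with no other change.) As a sanity check on the statement itself, it also follows abstractly from the $\infty$ lower bound for the general\&non-removable resource-augmentation model (which in turn specializes the classical bound of Marchetti-Spaccamela and Vercellis~\cite{MV1995}) together with the observation, noted in the introduction, that our model is at least as hard; the explicit instance above is given only to keep the paper self-contained.
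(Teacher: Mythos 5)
Your proof is correct and uses essentially the same construction as the paper: an adaptive adversary feeding unit-size items with geometrically increasing values $(1,c^i)$, stopping at the first rejection, which must occur within $\lfloor R\rfloor+1$ rounds because of the buffer capacity. The only difference is that you spell out the termination argument and the $\ALG(I)\le M^{j-1}$ bound more explicitly, whereas the paper states them more tersely.
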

\begin{proof}
Let $ALG$ be an online algorithm and let $R\ge 1$ and $c$ be positive reals. 
Consider the input sequence \(I\coloneqq ((1,c^1),\,(1,c^2),\dots,\,(1,c^k))\),
where $(1,c^k)$ is the first item so that $\ALG$ does not take into the buffer.
Note that $k\le \lfloor R\rfloor + 1$ since the buffer size is $R$.
If $k=1$, $\ALG$ is not competitive, since $\ALG(I) = 0$ and $\OPT(I) = c$.
If $k>1$, since $\ALG(I) = c^{k-1}$ and $\OPT(I) = c^k$, the competitive ratio is $c$, which is unbounded as $c$ goes to infinity.
\end{proof}

\section{Proportional\&Non-removable Case}\label{section:prop/non-removable}
In this section, we consider the proportional\&non-removable case.
We show that the competitive ratio is $\max\{\frac{1}{R-1},\,2\}$ for the case.

\subsection{Lower bounds}
For lower bounds, we consider two cases separately: $1<R\le 3/2$ and $R>3/2$.

\begin{theorem}\label{lower_bound_non_removable_Rle3_2}
For all $R$ with $1<R \le 3/2$ and all $\epsilon > 0$, 
the competitive ratio of the proportional\&non-removable online knapsack problem with a buffer is 
at least $1/(R-1)-\epsilon$.
\end{theorem}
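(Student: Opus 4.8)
The plan is to exhibit, for a given small $\delta>0$, a short adversarial instance on which every deterministic online algorithm has competitive ratio at least $1/(R-1+\delta)$; letting $\delta\to 0$ then gives the claimed bound. I would start the sequence with a single item $e_1$ of size (and, by proportionality, value) $R-1+\delta$. Because $1<R\le 3/2$, this size lies in $(0,1]$ for all sufficiently small $\delta$, so $e_1$ is a legal item; this is the one place where the hypothesis $R\le 3/2$ is used. If the algorithm rejects $e_1$, I end the sequence immediately: then $\ALG(I)=0$ while $\OPT(I)=R-1+\delta>0$, so the ratio is infinite and certainly exceeds $1/(R-1)-\epsilon$. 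Hence we may assume $\ALG$ places $e_1$ in the buffer.

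Conditioned on $e_1$ being in the buffer, I present $e_2=(1,1)$ and stop. The crucial point is that $s(e_1)+s(e_2)=R+\delta>R$, so in the non-removable model the algorithm cannot add $e_2$ to the buffer; its buffer stays $\{e_1\}$, and since $s(e_1)\le 1$ the best subset it can transfer to the knapsack is $e_1$ itself, giving $\ALG(I)=R-1+\delta$. Offline, one can take $e_2$ alone for value $1$; taking both items is infeasible for the unit knapsack since $s(e_1)+s(e_2)=R+\delta>1$, and $s(e_1)=R-1+\delta\le \tfrac12+\delta<1$, so $\OPT(I)=1$. Thus $\OPT(I)/\ALG(I)=1/(R-1+\delta)$.

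Finally I would pick $\delta=\delta(\epsilon)>0$ small enough that $1/(R-1+\delta)\ge 1/(R-1)-\epsilon$, which is possible since $1/(R-1+\delta)\to 1/(R-1)$ as $\delta\to 0^+$. There is no serious obstacle here; the construction is just a two-round adversary argument, and the only care needed is (i) checking that $e_1$ is a valid item (size in $(0,1]$), and (ii) keeping the inequalities strict — $s(e_1)+s(e_2)>R$ is why $\delta$ must be strictly positive (at $\delta=0$ the algorithm could fit both items and then transfer $e_2$, matching $\OPT$), while $s(e_1)+s(e_2)>1$ together with $s(e_1)<1$ is what pins $\OPT(I)$ down to exactly $1$. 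Non-removability of the buffer is precisely what makes the forced rejection of $e_2$ hurt.
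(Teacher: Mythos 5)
Your proposal is correct and is essentially the same two-item adversarial argument as the paper's proof: first item of size $R-1+\delta$ (called $\epsilon'$ there) that must be accepted, then an item of size $1$ that is blocked by the non-removable buffer constraint, yielding ratio $1/(R-1+\delta)$ which tends to $1/(R-1)$. The only cosmetic difference is that you state the $\delta\to 0$ limiting step explicitly, whereas the paper fixes $\epsilon'$ in advance to satisfy $\frac{1}{R-1+\epsilon'}\ge\frac{1}{R-1}-\epsilon$.
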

\begin{proof}
Let $\epsilon'$ be a positive real such that $\frac{1}{R-1+\epsilon'}\ge \frac{1}{R-1}-\epsilon$ and 
let $\ALG$ be an online algorithm.
Consider the following input sequence $I$: \[R-1 + \epsilon',\ 1.\]
Then, $\ALG$ must pick the first item, otherwise $\ALG$ is not competitive, since $\ALG(I) = 0$ and $\OPT(I) = R-1+\epsilon'$.
Recall that $\ALG$ cannot discard the item since we consider the non-removable setting.
Also, $\ALG$ cannot take the second item since the buffer size is strictly smaller than the total size of the first and the second items.
Thus, $\ALG(I) = R-1+\epsilon'$ and $\OPT(I) = 1$, and hence the competitive ratio is at least $\frac{1}{R-1 + \epsilon'} \ge \frac{1}{R-1}-\epsilon.$
\end{proof}
It should be noted that the input sequence in the proof of Theorem~\ref{lower_bound_non_removable_Rle3_2} 
is the same as the one in \cite{IZ2007}, which is used to show a lower bound for the resource augmentation model. %when $1<R\le 2$.

\begin{theorem}\label{thm:lower_non_ge}
For all $R> 3/2$ and all $\epsilon > 0$, 
the competitive ratio of the proportional\&non-removable online knapsack problem with a buffer is 	
at least $2-\epsilon$.
\end{theorem}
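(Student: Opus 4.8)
The plan is to defeat every deterministic algorithm with an adaptively chosen instance built almost entirely out of items of size slightly above $1/2$, so that no two items the algorithm retains can be transferred into the knapsack together. Fix $R>3/2$ and $\epsilon>0$, and pick $\delta>0$ small enough that $\frac{1}{1/2+\delta}\ge 2-\epsilon$; recall that in the proportional case we may take $v(e)=s(e)$. Let $\ALG$ be an arbitrary deterministic algorithm.

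First I would present one item $a_1$ of size $1/2+\delta$. If $\ALG$ discards $a_1$, the adversary stops, so $\ALG(I)=0<\OPT(I)$ and the ratio is unbounded; hence we may assume $a_1$ enters the buffer. Next the adversary feeds $\ALG$ a strictly decreasing sequence $x_1>x_2>\cdots$ of ``decoy'' items lying in $(1/2,\,1/2+\delta)$ --- say $x_j=1/2+\delta\cdot 2^{-j}$ --- presenting them one at a time and continuing only while $\ALG$ accepts them. Since every decoy has size exceeding $1/2$ and $a_1$ is already stored, the capacity bound $s(B)\le R$ lets $\ALG$ accept at most $\lceil 2R\rceil-1$ decoys, so $\ALG$ must eventually discard one; let $x_m$ be the first discarded decoy. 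At that moment $\ALG$'s buffer is exactly $\{a_1,x_1,\dots,x_{m-1}\}$, and every stored item is strictly larger than $x_m$ --- the decoys because the sequence decreases, and $a_1$ because $a_1>x_1$. The adversary then presents one last item $y_m$ of size $1-x_m$ and halts.

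It remains to compare values. Because $s(x_m)+s(y_m)=x_m+(1-x_m)=1$, we get $\OPT(I)\ge v(x_m)+v(y_m)=1$, and since the knapsack capacity is $1$ this gives $\OPT(I)=1$. On the algorithm's side, the final buffer lies in $\{a_1,x_1,\dots,x_{m-1},y_m\}$; a feasible knapsack subset contains at most one of $a_1,x_1,\dots,x_{m-1}$ (any two of these sum to more than $1$ in size) and cannot additionally contain $y_m$, since pairing $y_m$ with any stored item $z$ gives $s(z)+y_m>x_m+(1-x_m)=1$. Thus the best transferable set is a single stored item, of value at most $v(a_1)=1/2+\delta$, and $a_1$ attains it, so $\ALG(I)=1/2+\delta$. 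Hence $\OPT(I)/\ALG(I)=\frac{1}{1/2+\delta}\ge 2-\epsilon$, and as $\ALG$ was arbitrary the competitive ratio is at least $2-\epsilon$.

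The main obstacle is the middle step: ruling out every way $\ALG$ could ``hedge'' by keeping some of the decoys. The point is that the adversary always reveals the complement of the \emph{first} decoy $\ALG$ discards, and since the decoys strictly decrease, that decoy is smaller than everything currently in the buffer; hence its complement is simultaneously too large to pair with any stored item while the stored items, all above $1/2$, are pairwise incompatible in the knapsack. The only remaining point to verify is that the decoy phase terminates after boundedly many rounds, which follows from $R<\infty$ together with each decoy having size greater than $1/2$.
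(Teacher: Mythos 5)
Your proof is correct and takes essentially the same approach as the paper: feed a strictly decreasing sequence of items just above $1/2$ until the algorithm first declines one, then present the exact complement of that declined item, so that the algorithm's buffer holds only pairwise-incompatible items and can place at most one (the largest, of size just above $1/2$) into the knapsack. The only cosmetic differences are that you treat the initial item $a_1$ separately from the ``decoys'' and use geometric rather than harmonic perturbations; the paper's sequence $\frac12+\epsilon', \frac12+\frac{\epsilon'}{2},\dots$ is a uniform version of the same construction.
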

\begin{proof}
Let $\epsilon'$ be a positive real such that $\frac{2}{1+2\epsilon'}\ge 2-\epsilon$ and 
let $\ALG$ be an online algorithm.
Consider the following input sequence $I$:
\[\frac{1}{2} + \epsilon',\ \frac{1}{2} + \frac{\epsilon'}{2},\,\ldots\,,\ \frac{1}{2} + \frac{\epsilon'}{k},\ \frac{1}{2}-\frac{\epsilon'}{k},\]
where the $k$th item ($1/2+\epsilon'/k$) is the first item that $\ALG$ does not take it into the buffer.
Note that $I$ is uniquely determined by $\ALG$ and $k\le 2R$.
Since $\ALG(I) = 1/2+\epsilon'$ and $\OPT(I) = 1/2+\epsilon'/k + 1/2- \epsilon'/k = 1$, the competitive ratio is at least $\frac{1}{1/2+\epsilon'}\ge 2 - \epsilon$.
\end{proof}

\subsection{Upper bounds}
For upper bounds, we consider an algorithm that greedily picks a given item if it is possible.
The formal description of the algorithm is given in Algorithm~\ref{alg:non_removable}.
Recall that the resulting outcome of the algorithm is $\max\{s(B)\mid B\subseteq B_n,\,s(B)\le 1\}$, where $B_n$ is the items in the buffer at the final round $n$.
We prove that the algorithm is optimal for any $R>1$.
\begin{algorithm}[htb]
  \caption{$1/(R-1)$-competitive algorithm}\label{alg:non_removable} %for the proportional\&non-removable case
  $B_0\ot\emptyset$\;
  \For{\(i\ot 1,2,\dots\)}{%
    \lIf{$s(B_{i-1}\cup\{e_i\})\le R$}{$B_i\ot B_{i-1}\cup\{e_i\}$ \textbf{else} $B_i\ot B_{i-1}$}
  }
\end{algorithm}

\begin{theorem}\label{thm:non_removable}
  Algorithm~\ref{alg:non_removable} is $1/(R-1)$-competitive 
  for the proportional\&non-removable online knapsack problem with a buffer 
  when $1<R\le 3/2$.
\end{theorem}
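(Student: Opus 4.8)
The plan is to prove the equivalent statement $\ALG(I)\ge (R-1)\,\OPT(I)$ for every input $I$; since the competitive ratio for $I$ is $\OPT(I)/\ALG(I)$ and we may assume $\OPT(I)>0$ (every item has positive size), this immediately yields the upper bound $1/(R-1)$, matching the lower bound of Theorem~\ref{lower_bound_non_removable_Rle3_2}. Write $\beta\coloneqq s(B_n)$ for the total size of the final buffer, and recall that $\OPT(I)\le 1$ because the knapsack has capacity $1$. I would split the analysis according to whether $\beta\le 1$ or $\beta>1$.

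In the case $\beta\le 1$, the algorithm can transfer the entire buffer into the knapsack, so $\ALG(I)=\beta$. If the algorithm never rejected an item, then $B_n$ is the whole instance and $\ALG(I)=\OPT(I)$, so we are done. Otherwise, take any rejected item $e_i$: the rejection rule gives $s(B_{i-1})+s(e_i)>R$, and since the buffer only grows in the non-removable setting, $s(B_{i-1})\le\beta$. Combined with $s(e_i)\le 1$ this gives $\beta>R-1$, hence $\ALG(I)=\beta>R-1\ge (R-1)\,\OPT(I)$. (Equivalently: if $\beta<R-1$ then a rejected item would need size $>R-\beta>1$, which is impossible, so no rejection occurs.)

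In the case $\beta>1$, the buffer contains at least two items, since a single item has size at most $1$. I would order the items of $B_n$ as $f_1,f_2,\dots,f_m$ by non-increasing size and accumulate $f_1,f_2,\dots$ until the partial sum would first exceed $1$ (this happens because $\beta>1$); let $\sigma$ be the partial sum just before that, reached after including $f_1,\dots,f_k$ for some $k\ge 1$ (we can always include $f_1$, as $s(f_1)\le 1$), and let $f_{k+1}$ be the next item, so $\sigma\le 1<\sigma+s(f_{k+1})$. Since $\sigma\ge s(f_1)\ge s(f_{k+1})$, the inequality $\sigma>1-s(f_{k+1})\ge 1-\sigma$ forces $\sigma>1/2$. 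This is the single place the hypothesis $R\le 3/2$ is used: $\ALG(I)\ge\sigma>1/2\ge R-1\ge (R-1)\,\OPT(I)$.

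Combining the two cases gives $\ALG(I)\ge (R-1)\,\OPT(I)$ for all $I$, proving the bound. The argument is short; the only point needing a moment's care — the ``main obstacle,'' such as it is — is the case $\beta>1$, where one cannot simply discard the largest buffer item, because with many small items the complement $B_n\setminus\{f_1\}$ can have total size close to $\beta$ and thus above $1$. The greedy ``include largest items first up to capacity $1$'' extraction, together with the two-line bound $\sigma>1/2$, handles it cleanly. Beyond that, one should only confirm the degenerate situations (no items; a rejected item of size exactly $1$; $\sigma=1$ with $k=1$), all of which are immediate.
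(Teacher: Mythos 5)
Your proof is correct, and it takes a genuinely different route from the paper's. The paper splits the analysis on whether the instance contains an item of size at least $R-1$: in the ``all small'' case it locates the round $k$ at which the cumulative buffer size first crosses $R-1$ and shows $R-1\le s(B_k)<2(R-1)\le 1$ (so $B_k$ itself is a valid output of size at least $R-1$), and in the ``some large item'' case it either reduces to the previous argument or observes that the first large item $e_j$ enters the buffer and by itself has size in $[R-1,1]$. You instead reason only about the final buffer $B_n$, splitting on $\beta\coloneqq s(B_n)$: if $\beta\le 1$ with no rejection then $\ALG=\OPT$; if $\beta\le 1$ with a rejection then $\beta>R-1$ by the overflow condition; and if $\beta>1$ you invoke the classical greedy-fill extraction to exhibit a subset of $B_n$ with size in $(1/2,1]$, which suffices since $R\le 3/2$ forces $1/2\ge R-1$. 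What the paper's history-based argument buys is an explicit witnessing round; what your argument buys is brevity and a cleaner separation of concerns — the constant $R\le 3/2$ is used exactly once, in the step $\sigma>1/2\ge R-1$, and the non-removability assumption is used exactly once, in the step $s(B_{i-1})\le\beta$ — which makes it transparent where each hypothesis matters. Both proofs are sound.
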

\begin{proof}
Let $\ALG$ be an online algorithm induced by Algorithm~\ref{alg:non_removable} and $I$ be an input sequence.
Without loss of generality, we can assume $s(I)>R$ since otherwise $\ALG(I)=\OPT(I)$.

Suppose that $I$ does not contain items with size at least $R-1$.
Let $k$ be the round such that $\sum_{i=1}^{k-1}s(e_i) < R-1 \le \sum_{i=1}^k s(e_i)$.
Then, we have $s(B_{k})=\sum_{i=1}^{k}s(e_i)=s(e_k)+\sum_{i=1}^{k-1}s(e_i)< (R-1)+(R-1) \le 1$ by $s(e_k)<R-1$ and $R\le 3/2$.
Therefore, in this case, the competitive ratio is at most $\frac{1}{R-1}$.

Next, suppose that $I$ contains an item with size at least $R-1$.
Let $e_j$ be the first item in $I$ such that $s(e_j) \ge R-1$.
If $s(B_{j-1})\ge R-1$, then the competitive ratio is at most $\frac{1}{R-1}$ by the same argument as above.
Otherwise (i.e., $s(B_{j-1})< R-1$), we have $s(B_{j-1}\cup\{e_j\})\le R$ and hence $e_j\in B_j\subseteq B_n$, i.e., $e_j$ is selected in $B_n$.

Thus, $\ALG(I)\ge s(e_j) = R-1$ and the competitive ratio is at most $\frac{1}{R-1}$.
\end{proof}

Since $1/(R-1) = 2$ when $R = 3/2$,
we obtain the following corollary from Theorem~\ref{thm:non_removable}.
\begin{corollary}\label{cor:non_removable}
  Algorithm~\ref{alg:non_removable} is $2$-competitive 
  for the proportional\&non-removable online knapsack problem with a buffer 
  when $R\ge 3/2$.
\end{corollary}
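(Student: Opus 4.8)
The plan is to prove the bound directly rather than by re-running the proof of Theorem~\ref{thm:non_removable}, because that argument relies on the inequality $2(R-1)\le 1$, which is exactly where $R\le 3/2$ is used and which fails for larger $R$. The organizing idea is a dichotomy on whether the algorithm's output $\ALG(I)$ reaches $1/2$. Since the instance is proportional and the knapsack has capacity $1$, we always have $\OPT(I)\le 1$; hence on every input with $\ALG(I)\ge 1/2$ the competitive ratio $\OPT(I)/\ALG(I)$ is already at most $2$, and nothing more is needed there.

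First I would dispose of the remaining inputs, namely those with $\ALG(I)<1/2$, by showing that the greedy algorithm actually loses nothing on them, i.e.\ $\ALG(I)=\OPT(I)$. Two quick observations pin down the structure of $B_n$. (i) Every item that enters the buffer has size $<1/2$: otherwise that single item, transferred alone, already gives $\ALG(I)\ge 1/2$. (ii) $s(B_n)<1/2$: otherwise, since by (i) every item of $B_n$ is smaller than $1/2$, adding items of $B_n$ one by one until the running total first reaches $1/2$ yields a feasible subset of total size in $[1/2,1)$, again forcing $\ALG(I)\ge 1/2$. Now I invoke non-removability: the buffers form a chain $B_0\subseteq B_1\subseteq\dots\subseteq B_n$, so $s(B_{i-1})\le s(B_n)<1/2$ for every round $i$, whence $s(B_{i-1}\cup\{e_i\})<1/2+s(e_i)\le 3/2\le R$. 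Therefore Algorithm~\ref{alg:non_removable} accepts $e_i$ in every round, so $B_n=\{e_1,\dots,e_n\}$ and $\ALG(I)=\OPT(I)$, giving ratio $1$ on these inputs. Combining the two cases yields a competitive ratio of at most $2$ for all $R\ge 3/2$, matching the lower bound of Theorem~\ref{thm:lower_non_ge}.

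I do not anticipate a real obstacle; the only step that deserves a moment's care is observation (ii), where one converts ``the buffer holds total size at least $1/2$'' into ``some subset feasible for the knapsack has size at least $1/2$'', which is precisely what the ``all buffered items are $<1/2$'' property from (i) buys us (the stop-as-soon-as-you-reach-$1/2$ subset then necessarily stays below $1$). Everything else is a one-line chain of inequalities using only $R\ge 3/2$ and the non-removability chain condition.
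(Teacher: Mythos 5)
Your proof is correct, and it is actually more careful than the paper's own justification. The paper dismisses the corollary with a single line ("Since $1/(R-1)=2$ when $R=3/2$, we obtain the corollary from Theorem~\ref{thm:non_removable}"), but Theorem~\ref{thm:non_removable} as stated only covers $1<R\le 3/2$, and it is not automatic that Algorithm~\ref{alg:non_removable} with a \emph{larger} buffer $R>3/2$ inherits the bound proved at $R=3/2$: the greedy's buffer contents are not monotone in $R$ (a bigger buffer may admit an early large item and thereby force a later rejection that the $R=3/2$ run would not have made), so the one-line reduction needs an argument. Your dichotomy supplies exactly that argument in a clean, self-contained way: either $\ALG(I)\ge 1/2$, in which case the ratio is trivially at most $2$; or $\ALG(I)<1/2$, in which case (i) all buffered items have size $<1/2$, (ii) hence the running buffer total stays below $1/2$ (else greedily accumulating items of size $<1/2$ would yield a feasible subset in $[1/2,1)$), so the non-removability chain gives $s(B_{i-1}\cup\{e_i\})<1/2+1\le R$ at every round, the greedy never rejects, and $\ALG(I)=\OPT(I)$. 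This is the same underlying estimate $\ALG(I)\ge 1/2$ that drives the $R=3/2$ case of Theorem~\ref{thm:non_removable}, but your contradiction framing makes explicit why it persists for all $R\ge 3/2$, which is the step the paper leaves implicit.
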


\section{General\&Removable Case}\label{section:weighted removable case}
In this section, we consider the general\&removable case.
We show that the competitive ratio is $\max\{\frac{1}{R-1},\,2\}$ for $R\le 2$.
In addition, for general $R$, we prove that the competitive ratio is
at most $1+O(\log R/R)$ and at least $1+\frac{1}{R+1}$.

\subsection{Lower bounds}
Here, we give lower bounds of the competitive ratio in this case.
We first present a general lower bound $1+1/(R+1)$.
\begin{theorem}\label{thm:general/removable_lower_R>1}
For $R\ge 1$, the competitive ratio of the general\&removable online knapsack problem with a buffer is at least $1+\frac{1}{R+1}$.
\end{theorem}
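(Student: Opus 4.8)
The plan is to construct an adversarial input sequence against an arbitrary deterministic online algorithm $\ALG$ and show that it forces competitive ratio at least $1+\frac{1}{R+1}$. The natural idea is to start with many small items whose total size is slightly more than $R$, so that $\ALG$ cannot keep all of them in the buffer; whatever $\ALG$ keeps determines which further items the adversary reveals. Concretely, I would feed $\ALG$ a long run of tiny identical items until the first moment it must discard one (or declines to take one), and then either stop the sequence there or follow up with one large item of size close to $1$ and high value. If the buffer currently holds total size at most $R$ but is ``full'' in the sense that adding the large item exceeds $R$, then $\ALG$ is forced to discard some small items to accept the large one, or to reject the large one—and in either case the adversary can arrange that $\ALG$'s final knapsack value is a factor $1+\frac{1}{R+1}$ below $\OPT$.

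More carefully, here is the skeleton I would follow. First, fix a small $\epsilon>0$ and consider items of value and size that I will choose so that $\OPT$ can always pack a set of total value exactly $1+\frac{1}{R+1}$ (suitably scaled) while $\ALG$ is stuck at $1$. A clean choice: use one ``big'' item $b$ with $s(b)=\frac{R}{R+1}$ and $v(b)=\frac{R}{R+1}$-ish, together with small filler items of total available size large enough that the offline optimum can top up the knapsack to capacity $1$ but the buffer constraint $R$ prevents $\ALG$ from simultaneously retaining $b$ and enough filler. The adversary reveals filler items first; at the point where $\ALG$'s buffer is close to capacity $R$, it reveals $b$. If $\ALG$ keeps $b$ it had to throw away filler worth roughly $\frac{1}{R+1}$ of the optimum; if $\ALG$ rejects $b$ it loses $b$'s value. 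Balancing the two cases and optimizing the item sizes/values should yield exactly the ratio $1+\frac{1}{R+1}$ in the limit $\epsilon\to 0$.

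The key steps, in order, are: (1) parametrize the construction by the unknown ``stopping round'' $k$ at which $\ALG$ first declines or removes an item, noting $k$ is finite and bounded because the buffer has capacity $R$ and items have a fixed minimum size; (2) split into the two cases according to $\ALG$'s reaction to the big item $b$, and in each case exhibit the offline packing achieving value $1+\frac{1}{R+1}$ times $\ALG$'s value; (3) take $\epsilon\to 0$ to remove the slack. I would also double-check the boundary behavior at $R=1$, where the bound reads $3/2$, to make sure it is consistent with (indeed weaker than) the known $(1+\sqrt5)/2$ lower bound for that case, serving as a sanity check rather than a tight statement.

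The main obstacle I anticipate is getting the quantitative balance exactly right: the two cases (keep $b$ vs.\ reject $b$) must both yield the \emph{same} ratio $1+\frac{1}{R+1}$, and this pins down the sizes and values of $b$ and the fillers as functions of $R$. If the fillers are too coarse, $\ALG$ may be able to discard just one filler and still retain most of the value, weakening the bound; if they are too fine, the offline optimum's advantage shrinks. So the delicate part is choosing item granularity and the value of $b$ so that the buffer capacity $R$ is exactly the binding constraint that forces a $\frac{1}{R+1}$ relative loss on $\ALG$ no matter how it reacts—this is a small optimization over the construction parameters, which I would set up symbolically and solve before plugging in.
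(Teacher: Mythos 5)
Your proposal diverges substantially from the paper's proof, and I believe the divergence creates a real gap rather than just a tuning problem.

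The paper's construction uses only near-unit-sized items: $(1-i\epsilon^2,\,1-\tfrac{i}{n})$ for $i=0,\dots,n-1$, with $n=\lceil R+\epsilon\rceil\le R+1$. Since each item has size close to $1$ and their total size exceeds $R$, the buffer forces $\ALG$ to discard some item $(1-i\epsilon^2,\,1-\tfrac{i}{n})$. The adversary then reveals the \emph{tailored complement} $(i\epsilon^2,\,\tfrac{i+1}{n})$, which fits exactly alongside the discarded item to give $\OPT=1+\tfrac{1}{n}$. Crucially, the sizes and values are both monotone in $i$, so the only retained item the complement can be paired with (without overshooting size $1$) is one of index $j>i$, whose lower value caps $\ALG$ at $1$. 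The discard is irreplaceable precisely because all items are large.

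Your construction---one big item $b$ of size $\tfrac{R}{R+1}$ plus a sea of small fillers---cannot be made to close in the ``$\ALG$ keeps $b$'' branch, and this is structural, not a matter of choosing filler values symbolically. When the buffer overflows, $\ALG$ chooses which fillers to discard and will of course jettison the least valuable ones. With $b$ of size $\tfrac{R}{R+1}$ the buffer still holds $\tfrac{R^2}{R+1}\ge\tfrac{1}{R+1}$ of filler, which is at least as much filler as the knapsack can take next to $b$. So $\ALG$ keeps its best $\tfrac{1}{R+1}$-worth of filler and packs exactly what $\OPT$ packs: the ratio in this branch is $1$, not $1+\tfrac{1}{R+1}$. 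You cannot force $\ALG$ into the ``rejects $b$'' branch, and your own balancing requirement (``both cases must yield the same ratio'') is therefore unachievable. The root cause is that small fillers are fungible, so $\ALG$'s discard choices cost it nothing; the paper's insight is to make every item large so that each discard is an irrevocable commitment the adversary can exploit with a complement chosen \emph{after} seeing the discard.

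If you want a one-big-item construction, you would need the adversary to keep probing with new items after $b$ arrives, tailored to $\ALG$'s ongoing discards---that is essentially what the paper's $\tfrac{1}{R-1}$ lower bound does for $1<R\le\tfrac{3}{2}$, but it degrades below $1+\tfrac{1}{R+1}$ for $R\ge 2$, so it cannot prove this theorem for general $R$.
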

\begin{proof}
Let $\ALG$ be an online algorithm.
Let $\epsilon$ be a real such that $0<\epsilon <\min\{\frac{2}{R(R+1)},\, \lfloor R\rfloor+1-R\}~(\le 1)$ and $n = \lceil R+\epsilon \rceil~(\le R+1)$.
Consider the following item sequence $I$:
\[(1, 1),\ \Bigl(1- \epsilon^2, 1- \tfrac{1}{n}\Bigr),\,\ldots\,,\ \Bigl(1-i\epsilon^2, 1- \tfrac{i}{n}\Bigr),\,\dots\,,\ \Bigl(1- (n-1)\epsilon^2, 1- \tfrac{n-1}{n}\Bigr).\]
Since the total size of the items in this sequence is $n-\frac{(n-1)n}{2}\epsilon^2 > R+\epsilon - \frac{R(R+1)}{2} \epsilon^2 > R$, $\ALG$ cannot store all the items and must discard at least one of them.
Let $\left(1- i\epsilon^2, 1- \frac{i}{n}\right) \ (0\le i < n)$ be a discarded item
and consider an item sequence $I'$ in which $\left(i\epsilon^2, \frac{i+1}{n}\right)$ is given after $I$. 
Then, $\OPT(I') = (1-\frac{i}{n})+(\frac{i+1}{n})=1+\frac{1}{n}$ and $\ALG(I') \le 1$.
Therefore, the competitive ratio is at least $1+\frac{1}{n} = 1+\frac{1}{\lceil R+\epsilon \rceil} \ge 1+\frac{1}{R+1}$.
\end{proof}

Next, we provide the tight lower bound for $R\le 2$.
We separately consider the following two cases: $1<R\le 3/2$ and $3/2\le R< 2$.
\begin{theorem}\label{thm:general/removable_lower_1<R<3/2}
For all $R$ with $1<R\le 3/2$ and all $\epsilon>0$,
the competitive ratio of the general\&removable online knapsack problem with a buffer is at least $1/(R-1)-\epsilon$.
\end{theorem}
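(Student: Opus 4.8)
The plan is to exhibit, for every online algorithm $\ALG$ and every $\epsilon>0$, an input sequence $I$ (depending on $\ALG$) with $\OPT(I)/\ALG(I)\ge 1/(R-1)-\epsilon$. The starting point is the trap already used in Theorem~\ref{lower_bound_non_removable_Rle3_2}: present first an item $a$ with $s(a)$ slightly larger than $R-1$; if $\ALG$ does not place $a$ in its buffer the adversary stops immediately, and since $\ALG(I)=0<v(a)=\OPT(I)$ the ratio is unbounded. Hence we may assume $a\in B_1$. The new difficulty relative to the non-removable case is that $\ALG$ may discard $a$ (or later items) at a subsequent step, so the entire point of the construction is to continue $I$ in such a way that no pattern of removals lets $\ALG$ recover.

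The intended continuation reveals the ``profitable'' items of an offline optimum — roughly $1/(R-1)$ items of size close to $R-1$, whose total size is at most $1$ and which therefore form a knapsack solution of value about $1/(R-1)$ — but interleaves them with ``blocking'' items that $\ALG$ is forced to retain (each blocker, if rejected, lets the adversary stop with a large ratio, its value dominating everything currently in $B_i$). The calibration of the sizes and values is exactly what makes the argument work: the blockers must be large enough to consume the slack $R-1$ in the buffer (so that $\ALG$ cannot keep a blocker together with all the profitable items), valuable enough that discarding a blocker is not profitable, yet cheap enough that the offline optimum prefers the profitable items to them. Whenever $\ALG$ does discard a blocker or a profitable item, the adversary switches to a short punishing tail — a small complementary item, in the spirit of the proof of Theorem~\ref{thm:general/removable_lower_R>1} — witnessing that the discarded item was needed, so that $\OPT$ stays near $1/(R-1)$ while the value of the best packable subset of $\ALG$'s buffer stays near $1$.

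The main obstacle is precisely the removability: in any naive construction $\ALG$ can react to the revealed items by rebuilding a good buffer, since every feasible knapsack solution has size at most $1\le R$ and hence fits entirely in the buffer. The construction therefore has to be genuinely adaptive, the adversary committing to the next item only after seeing $B_i$, and one must ensure that the ``escape'' moves available to $\ALG$ (discarding items to make room) are either anticipated — in which case the adversary's continuation exploits the discard — or strictly worse for $\ALG$ than staying put. I expect the verification to break into a few cases according to which of \{blocker, profitable item, nothing\} $\ALG$ keeps at each decision point, with the bookkeeping of buffer sizes (and the inequality $1<R\le 3/2$) driving the estimates. As a cross-check and an alternative route, one may note that our model is at least as hard as the resource-augmentation model (see the discussion around Table~\ref{table:summary}), so the $1/(R-1)$ lower bound of Iwama and Zhang~\cite{IZ2007} for the general\&removable resource-augmentation problem already applies to our setting.
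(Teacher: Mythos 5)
Your main construction has a gap I don't see how to close. The ``profitable'' items you propose have total size at most $1\le R$, so they all fit in the buffer simultaneously; once the last of them has appeared, any algorithm that stored them has the offline optimum sitting in its buffer and wins. The blockers are supposed to prevent this, but no calibration of blocker value works: for a blocker-drop to be punishable, the blocker must carry value $\approx 1/(R-1)$, but then an algorithm that simply keeps one blocker already achieves value $\approx 1/(R-1) \approx \OPT$ in the knapsack and the ratio collapses to $1$; if instead the blocker carries smaller value, the algorithm drops it for free and keeps the profitable items. Your intended ``punishing tail'' cannot raise $\OPT$ meaningfully above what the profitable items already give, so dropping a blocker costs the algorithm nothing relative to $\OPT$. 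The underlying problem is that your $\OPT$ is bounded by a fixed target $\approx 1/(R-1)$ that the algorithm's $R$-capacity buffer can, in the end, entirely contain.

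The paper's adversary is structurally different and avoids this. The first item is $(1,1)$ --- it fills the \emph{knapsack} exactly, not the slack $R-1$. It is followed by a long stream of small items of identical size $\hat{\epsilon}$ but slowly increasing density, $(\hat{\epsilon},\hat{\epsilon}^3),(\hat{\epsilon},2\hat{\epsilon}^3),\dots$; the adversary stops the stream the moment the algorithm discards $(1,1)$. While $(1,1)$ is held, the algorithm has only $R-1$ buffer room for small items, so it keeps about $(R-1)/\hat{\epsilon}$ of them, while $\OPT$ packs about $1/\hat{\epsilon}$ of the highest-density ones --- giving ratio $\approx 1/(R-1)$. Crucially, $\OPT$ \emph{grows without bound} as the stream continues, and the increasing densities mean the algorithm cannot know in advance which small items to prioritize, so it can never ``catch up'' no matter which items it discards. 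That moving target is the ingredient your sketch lacks. Your closing remark --- that the $1/(R-1)$ lower bound of Iwama and Zhang~\cite{IZ2007} for the general\&removable resource-augmentation setting transfers to our model because our model is at least as hard --- is correct, and the paper says exactly this in the discussion around Table~\ref{table:summary}; it is a valid shorter justification of the theorem statement, though it is not the self-contained proof the paper gives.
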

\begin{proof}
Let $\ALG$ be an online algorithm.
Let $\hat{\epsilon}$ be a positive real such that 
$1/\hat{\epsilon}$ is an integer and 
$\min\left\{\frac{1}{(R-1+\hat{\epsilon})(1+\hat{\epsilon})},\frac{1-\hat{\epsilon}^2}{R-1}\right\}\ge \frac{1}{R-1}-\epsilon$.
In addition, let $m\coloneqq 1/\hat{\epsilon}$ and $n\coloneqq 1/\hat{\epsilon}^3$.

Suppose that $\ALG$ is requested the following sequence of items:
\begin{align*}
(1,1),\ (\hat{\epsilon},\hat{\epsilon}^3),\ (\hat{\epsilon},2\hat{\epsilon}^3),\,\ldots\,,\ (\hat{\epsilon},n\hat{\epsilon}^3),
\end{align*}
until $\ALG$ discards the first item $(1,1)$.
Note that the first item has a large size and a medium density, and the following items have the same small sizes but different densities that slowly increase from small to large.
In addition, $\ALG$ must take the first item at the beginning (otherwise the competitive ratio becomes infinite).
Thus, $\ALG$ would keep the first item and the last $\lfloor\frac{R-1}{\hat{\epsilon}}\rfloor$ items in each round.

We have two cases to consider: $\ALG$ discards the first item $(1,1)$ or not.
\begin{description}
\item[Case 1:] Suppose that $\ALG$ discards the first item $(1,1)$ when the item $(\hat{\epsilon},i\hat{\epsilon}^3)$ comes.
Note that the requested sequence is $I\coloneqq\bigl((1,1),\,(\hat{\epsilon},\hat{\epsilon}^3),\,(\hat{\epsilon},2\hat{\epsilon}^3),\ldots,\,(\hat{\epsilon},i\hat{\epsilon}^3)\bigr)$.
Then, we have $\ALG(I)\le (\lfloor\frac{R-1}{\hat{\epsilon}}\rfloor+1) i\hat{\epsilon}^3$ (since $\ALG$ keeps at most $\lfloor\frac{R-1}{\hat{\epsilon}}\rfloor+1$ small items at the end) and $\OPT(I)\ge \max\{1,\, m\cdot (i-m)\hat{\epsilon}^3\}$ (the left term $1$ comes from the first item and the right term $m\cdot (i-m)\hat{\epsilon}^3$ comes from the last $m~(=1/\hat{\epsilon})$ items).
Hence, the competitive ratio is at least
\begin{align*}
\frac{\OPT(I)}{\ALG(I)}
&\ge \frac{\max\{1,\,m\cdot(i-m)\hat{\epsilon}^3\}}{(\lfloor\frac{R-1}{\hat{\epsilon}}\rfloor+1) i\hat{\epsilon}^3}\\
&\ge \frac{\max\{1,\,m\cdot(i-m)\hat{\epsilon}^3\}}{(\frac{R-1}{\hat{\epsilon}}+1) i\hat{\epsilon}^3}
= \max\left\{\frac{1}{(R-1+\hat{\epsilon})i\hat{\epsilon}^2},\frac{i-\frac{1}{\hat{\epsilon}}}{(R-1+\hat{\epsilon})i}\right\}\\
&\ge \frac{1}{(R-1+\hat{\epsilon})( \frac{1}{\hat{\epsilon}} + \frac{1}{\hat{\epsilon}^2})\hat{\epsilon}^2}
= \frac{1}{(R-1+\hat{\epsilon})(1+\hat{\epsilon})} \ge \frac{1}{R-1}-\epsilon,
%&= \frac{\lfloor \frac{1}{\hat{\epsilon}}\rfloor\hat{\epsilon}^3}{(R-1+\hat{\epsilon})( 1 + \lfloor \frac{1}{\hat{\epsilon}}\rfloor^2\hat{\epsilon}^3)\hat{\epsilon}^2}\\
%&\ge \frac{1}{R-1+\hat{\epsilon}} \frac{(\frac{1}{\hat{\epsilon}}-1)\hat{\epsilon}}{1 + \lfloor \frac{1}{\hat{\epsilon}}\rfloor^2\hat{\epsilon}^3} \to \frac{1}{R-1}~ ( \hat{\epsilon} \to 0), 
\end{align*}
where the third inequality holds since 
the left term $\frac{1}{(R-1+\hat{\epsilon})i\hat{\epsilon}^2}$ is monotone decreasing in $i$,
the right term $\frac{i-\frac{1}{\hat{\epsilon}}}{(R-1+\hat{\epsilon})i}$ is monotone increasing in $i$,
and the two take the same value when $i=\frac{1}{\hat{\epsilon}}+\frac{1}{\hat{\epsilon}^2}$.
\item[Case 2:] Suppose that $\ALG$ does not reject the first item until the end.
Then, the competitive ratio is at least
\begin{align*}
\frac{\OPT(I)}{\ALG(I)}
&\ge \frac{m\cdot (n-m)\hat{\epsilon}^3}{\lfloor \frac{R-1}{\hat{\epsilon}}\rfloor n\hat{\epsilon}^3}
\ge \frac{1}{R-1}\cdot \frac{m(n-m)\hat{\epsilon}^3}{n\hat{\epsilon}^2}
= \frac{1-\hat{\epsilon}^2}{R-1} \ge \frac{1}{R-1}-\epsilon.
%&= \frac{1}{R-1}\cdot\hat{\epsilon}\cdot \lfloor \frac{1}{\hat{\epsilon}} \rfloor \left(1-\frac{\lfloor \frac{1}{\hat{\epsilon}} \rfloor}{\lfloor \frac{1}{\hat{\epsilon}^3} \rfloor}\right)\\
%&\ge \frac{1}{R-1}\cdot\left(1-\frac{\lfloor \frac{1}{\hat{\epsilon}} \rfloor}{\lfloor \frac{1}{\hat{\epsilon}^3} \rfloor}\right) \to \frac{1}{R-1} \quad(\hat{\epsilon}\to 0).
\qedhere
\end{align*}
\end{description}
\end{proof}

\begin{theorem}\label{thm:general/removable_lower_3/2<R<2}
For all $R$ with $3/2\le R< 2$ and all $\epsilon>0$,
the competitive ratio of the general\&removable online knapsack problem with a buffer is at least $2-\epsilon$.
\end{theorem}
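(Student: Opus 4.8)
The plan is to run an adaptive adversary that couples a long ``value‑inflation'' chain of items of size just below $1$ with a single ``completion'' move at the end, exploiting the fact that since $R<2$ the buffer can never hold two near‑unit items simultaneously. So throughout the chain the algorithm is forced to bet on exactly one of these items, and once it has committed, the completion move reveals a partner that doubles (asymptotically) the algorithm's value.

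Concretely, fix a large integer $m$ (chosen so that $\tfrac1{m+1}\le\epsilon$ and $(1+\tfrac1m)^{m-1}>2$), put $\rho:=1+\tfrac1m$, and pick $\theta>0$ small enough that $2(1-m\theta)>R$ (so also $1-m\theta>\tfrac12$). The adversary first reveals, in order, $a_1,\dots,a_m$ with $a_i=\bigl(1-(m+1-i)\theta,\ \rho^{\,i-1}\bigr)$: the sizes strictly increase in $i$, all lie in $(\tfrac12,1)$, and every $s(a_i)\ge 1-m\theta$, so any two of them sum to at least $2(1-m\theta)>R$. Hence the buffer holds at most one $a_i$ at every round. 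After round $m$ the algorithm therefore either holds no $a_i$---in which case the adversary stops and wins outright ($\ALG=0<\OPT$)---or holds exactly one, say $a_j$. If $j=1$, the adversary stops: $\ALG=v(a_1)=1$ while $\OPT\ge v(a_m)=\rho^{m-1}>2$, so the ratio exceeds $2-\epsilon$. If $2\le j\le m$, the adversary reveals one last item $c:=\bigl(1-s(a_{j-1}),\ \rho^{\,j-1}\bigr)$, the ``missing complement'' of $a_{j-1}$ with the same value as $a_j$. Since $s(a_{j-1})+s(c)=1$ we get $\OPT\ge v(a_{j-1})+v(c)=\rho^{j-2}+\rho^{j-1}$; and since $s(a_j)+s(c)=1+\theta>1$ the algorithm's final buffer, a subset of $\{a_j,c\}$, has no knapsack‑feasible pair, so $\ALG\le\rho^{j-1}$. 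Thus the ratio is at least $(\rho^{j-2}+\rho^{j-1})/\rho^{j-1}=1+\tfrac1\rho=2-\tfrac1{m+1}\ge 2-\epsilon$, for every choice of $j$.

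The main obstacle is calibrating the growth rate. On one hand $\rho$ must be close to $1$, so that the recovered ratio $1+1/\rho$ is close to $2$; on the other hand the whole chain must still blow up values enough that $\rho^{m-1}>2$, for otherwise a competitive algorithm could simply sit on an early, cheap $a_i$ and the completion move would combine harmlessly with it (because $c$ fits in the knapsack together with any $a_i$ that is at most as large as $a_{j-1}$)---the linearly increasing sizes are precisely what kills that escape, since the adversary always completes the item one step behind the one held. The choice $\rho=1+1/m$ with $m\to\infty$ balances these two demands. A secondary point needing care is that the chain sizes be simultaneously (i)~strictly increasing, so $c$ completes $a_{j-1}$ in the knapsack but overflows it with $a_j$; (ii)~all above $\tfrac12$, so a lone chain item is already a full solution; and (iii)~each above $1-m\theta$ with $2(1-m\theta)>R$, so the buffer holds at most one; the spacing $1-(m+1-i)\theta$ with $\theta$ tiny achieves all three at once.
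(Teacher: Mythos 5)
Your proof is correct and uses essentially the same adversarial construction as the paper: a chain of near-unit-size items (so that at most one fits in the buffer at a time) with monotonically varying values, followed by a "completion" item that pairs with the chain element one step behind whichever one the algorithm is holding, so that $\OPT$ packs that pair while $\ALG$ is stuck with a single item of roughly half the value. The only cosmetic difference is that the paper uses an arithmetic value chain $v(e_i) = 1 - \tfrac{i}{2k}$ with decreasing sizes, while you use a geometric chain $v(a_i) = (1+\tfrac1m)^{i-1}$ with increasing sizes; both yield the same $2 - O(1/m)$ bound.
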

\begin{proof}
Let $k$ be an integer such that $k>\max\{\frac{1}{2-R},\,\frac{1}{\epsilon}\}$.
Let $\ALG$ be an online algorithm.

Consider the item sequence $I\coloneqq (e_1,\dots,e_{k})$ where $(s(e_i),v(e_i))=(1-\tfrac{i}{2k^2},1-\tfrac{i}{2k})$ for $i=1,\dots,k$.
Then, at the end of the sequence, $\ALG$ must keep exactly one item
because it must select at least one item (otherwise the competitive ratio is unbounded)
and every pair of items exceeds the capacity of the buffer (i.e., $s(e_i)+s(e_j)\ge 2(1-\tfrac{k}{2k^2})=2-\tfrac{1}{k}>R$ for any $i,j\in\{1,\dots,k\}$).

Suppose that $\{e_i\}$ is selected in the buffer at the end of the sequence $I$.
If $i=k$, then the competitive ratio for $I$ is $\frac{\OPT(I)}{\ALG(I)}=\frac{v(e_1)}{v(e_k)}=\frac{1-\frac{1}{2k}}{1-\frac{1}{2}}=2-\frac{1}{k}>2-\epsilon$.
Otherwise (i.e., $i< k$), let us consider a sequence $I'\coloneqq(e_1,\dots,e_{k},e_{k+1})$ with $(s(e_{k+1}),v(e_{k+1}))=(\frac{i+1}{2k^2},1-\frac{i}{2k})$.
Then, the competitive ratio for $I'$ is at least
\[\frac{v(e_{i-1})+v(e_{k+2})}{v(e_{i})}=\frac{(1-\frac{i-1}{2k})+(1-\frac{i+1}{2k})}{1-\frac{i-1}{2k}}=2-\frac{1}{2k-i}\ge 2-\frac{1}{2k}>2-\epsilon.\qedhere\]
\end{proof}

\subsection{Upper bounds}
Here, we provide an asymptotically nearly optimal algorithm for large $R$
and an optimal algorithm for small $R$ ($< 2$).

First, we provide a $(1+O(\log R/R))$-competitive algorithm for the asymptotic case.
Suppose that $R$ is sufficiently large.
Let $m \coloneqq \lfloor (R-3)/2\rfloor$ and let $\epsilon~(\le 1)$ be a positive real such that $\log_{1+\epsilon} (1/\epsilon) = m$.
Note that we have $m=\Theta(\frac{1}{\epsilon}\log\frac{1}{\epsilon})$ and $\epsilon = O(\log R/R)$ (see Lemma~\ref{lem:relation_m_eps_R} in Appendix~\ref{sec:relation_m_eps_R}).

We partition all the items as follows.
Let $S$ be the set of items with size at most $\epsilon$.
Let $M$ be the set of items not in $S$ and let $M^j$ $(j\in \mathbb{Z})$ be the set of items $e \in M$ with $(1+\epsilon)^j\le v(e) < (1+\epsilon)^{j+1}$ (note that $j$ is not restricted to be positive).
Let us consider Algorithm~\ref{alg:weighted_general} for the problem.
Intuitively, the algorithm selects items in greedy ways for $S$ and each $M^j$ with $\nu_i\le j\le \mu_i$.
Note that for any $i\ge 1$, we have $\mu_i-\nu_i = 2m$.
For each $i\ge 1$, since $s(B_i\cap S)\le 2+\epsilon $ and $s(B_i\cap M^j) \le 1$ for any $\nu_i\le j \le \mu_i$, we have
$s(B_i)\le 2m+2+\epsilon \le 2((R-3)/2)+2+\epsilon \le R$.
Thus, the algorithm is applicable.

\begin{algorithm}[htb]
\caption{$(1+O(\log R/R))$-competitive algorithm}\label{alg:weighted_general}
$B_0\ot\emptyset$\;
\For{\(i\ot 1,2,\dots\)}{%
    $B_i \ot \emptyset$ and $B_i'\ot (B_{i-1}\cup\{e_i\})$\;
    \ForEach{$e\in B_i' \cap S$ in the non-increasing order of the density}{
    	$B_i \ot B_i \cup \{e\} $\;
    	\lIf{$s(B_i) >2 $}{\textbf{break}}
    }
	Let $e^*_i \in \argmax\{v(e) \mid e \in B_i'\}$\;
	Let $\mu_i \ot \lfloor \log_{1+\epsilon} v(e^*_i)\rfloor$ and $\nu_i \ot \lfloor \log_{1+\epsilon} \epsilon^2 v(e^*_i)\rfloor$%
        \tcp*{$e_i^*\in M^{\mu_i}$}
	\For{$j \ot \nu_i,\dots,\mu_i$}{
		\ForEach{$e\in B_i' \cap M^j$ in the non-decreasing order of the size}{
% 			\lIf{$s(B_i\cap M^j) + s(e)>1$}{\textbf{break}}
% 			$B_i \ot B_i \cup \{e\}$\;
 			\lIf{$s(B_i\cap M^j) + s(e) \le 1$}{$B_i \ot B_i \cup \{e\}$}
		}
	}
}
\end{algorithm}

\begin{theorem}\label{thm:general_removable_general}
  Algorithm~\ref{alg:weighted_general} is $(1+O(\log R/R))$-competitive 
  for the general\&removable online knapsack problem with a buffer 
  when $R$ is a sufficiently large real. % (i.e., $\epsilon$ is a sufficiently small positive real).
\end{theorem}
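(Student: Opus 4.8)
The plan is to show that, for every input $I$ with an optimal set $O$ (so $s(O)\le 1$ and $v(O)=\OPT(I)$), the final buffer $B_n$ already contains a subset of size at most $1$ and value at least $\OPT(I)/(1+O(\epsilon))$; since the algorithm transfers an optimal subset of $B_n$ and $\epsilon=O(\log R/R)$, this gives the stated ratio. I would write $O=O_S\sqcup\bigsqcup_j O^j$ with $O_S=O\cap S$ and $O^j=O\cap M^j$, and construct the required subset of $B_n$ by replacing each piece of $O$ with comparable items the algorithm kept.

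The first step is two structural invariants about what the greedies retain. (i) \emph{Small items.} By induction on the round I would show that at the end of every round the kept small items of density at least $d$ have total size at least $\min\{2,\ X(d)\}$, where $X(d)$ is the total size of all small items of density at least $d$ seen so far; in particular $B_n\cap S$ "density-dominates" the whole small-item instance truncated at total size $2$. (ii) \emph{Medium buckets.} For a fixed $j$, as long as $j$ stays in the active window $[\nu_i,\mu_i]$, the greedy on $M^j$ keeps exactly the smallest items of $M^j$ that fit in a unit bin (here I use that once an item in $M^j$ fails to fit, so do all later, larger ones). An item of $M^j$ is permanently discarded only if at some round $j<\nu_i$ — which forces its value below $\epsilon^2 v(e^\ast_i)\le\epsilon^2\OPT(I)$, using that $\{e^\ast_i\}$ is a feasible knapsack solution — or if it is replaced inside $M^j$ by smaller items of value within a factor $1+\epsilon$. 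The crucial monotonicity statement is that $\mu_n\ge\lfloor\log_{1+\epsilon}v(e)\rfloor$ for every $e\in M\cap O$: once such an $e$, or a same-bucket substitute of value at least $v(e)/(1+\epsilon)$, enters the buffer, it keeps its bucket inside the active window forever, because $v(e^\ast_i)\le\OPT(I)$ prevents $\nu_i$ from ever rising past that bucket.

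With the invariants in hand, the accounting runs as follows. Buckets $M^j$ with $j<\nu_n$ that $O$ uses contain at most $1/\epsilon$ items of $O$, each of value at most $\epsilon^2\OPT(I)$, so they contribute at most $\epsilon\OPT(I)$ and are dropped. For each remaining bucket $j$, replace $O^j$ by the $|O^j|$ smallest items of $M^j$: these lie in $B_n$ by (ii), have total size at most $s(O^j)$, and total value at least $v(O^j)/(1+\epsilon)$. For $O_S$, keep the items of $O_S$ that survive in $B_n$ and replace each discarded $e\in O_S$ using spare dense small items of $B_n$: since $e$ was bumped, $B_n\cap S$ holds at least $2$ units of density at least $\mathrm{dens}(e)$, at most $s(O_S)\le 1$ of which lie in $O_S$, so at least one unit of spares remains at that density; by (i) I would pick spares whose density profile dominates that of the discarded part, of no larger total size, losing value at most $\epsilon\cdot\mathrm{dens}(e^{(1)})$ where $e^{(1)}$ is the densest discarded small item. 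Since $B_n\cap S$ carries at least $2$ units of density at least $\mathrm{dens}(e^{(1)})$, its densest unit alone has value at least $(1-\epsilon)\mathrm{dens}(e^{(1)})$, so this loss is only $O(\epsilon\,\ALG(I))$. As every substitution only shrinks sizes, the assembled set lies in $B_n$, fits in the knapsack, and has value at least $v(O_S)+v(O_M)/(1+\epsilon)-\epsilon\OPT(I)-O(\epsilon\,\ALG(I))$; rearranging yields $\ALG(I)\ge\OPT(I)/(1+O(\epsilon))$.

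The main obstacle I anticipate is invariant (ii), i.e. proving that the window $[\nu_i,\mu_i]$ never drops a bucket on which $O$ relies with non-negligible value. This needs a careful study of how $e^\ast_i$ evolves, showing it cannot decrease by more than a factor $1+\epsilon$ (any high-value item, once seen, leaves a same-bucket survivor in the buffer, and $v(e^\ast_i)\le\OPT(I)$ keeps $\nu_i$ in check), together with a separate treatment of the case where the globally maximum-value item is \emph{small}: then either its density is so large that $B_n\cap S$ alone already beats $\OPT(I)$, or every medium item used by $O$ has value below $\epsilon^2\OPT(I)$ and the medium part is negligible. The rest — the two inductions and the concavity estimate hidden in the spare-item replacement — is routine.
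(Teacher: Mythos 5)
Your proposal follows essentially the same route as the paper's proof: build a feasible substitute for an optimum $O=B_{\OPT}$ out of items that $B_n$ still holds, swapping each medium bucket $O\cap M^j$ for the $|O\cap M^j|$ smallest retained items of $M^j$ (an at most factor-$(1+\epsilon)$ value loss), writing off the low-value buckets (each such item has value $<\epsilon^2 v(e_n^\ast)\le\epsilon^2\OPT(I)$ and $O$ contains at most $1/\epsilon$ medium items, so $\le\epsilon\OPT(I)$ total), and a density-exchange for $O\cap S$ using the $\ge 2$ units of at-least-as-dense small items that $B_n$ retains. This is exactly the decomposition behind the paper's Algorithm~\ref{alg:weighted_general_solution} and Claims~\ref{claim:general_removable_general1} and~\ref{claim:general_removable_general2}, and your final accounting yields the same $(1+O(\epsilon))=1+O(\log R/R)$ bound.

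The one place you are more careful is the window-monotonicity issue you single out as the main obstacle. The paper states, without further justification, that for each $\nu_n\le k\le\mu_n$ the set $B_n\cap M^k$ is the size-greedy over all of $M^k$; this silently assumes $k$ stayed in $[\nu_i,\mu_i]$ at every round a live $M^k$ item was present, which need not be literally true since $v(e_i^\ast)$ can drop when $e_i^\ast\in S$ is bumped by denser small items. Your intended fix is the right one and cleaner than chasing monotonicity of $v(e_i^\ast)$: because $\{e_i^\ast\}$ is a feasible knapsack solution, $v(e_i^\ast)\le\OPT(I)$ for every $i$, hence $\nu_i\le\lfloor\log_{1+\epsilon}\epsilon^2\OPT(I)\rfloor$ for all $i$, so any medium item ever permanently discarded because its bucket dropped below the window has value below $\epsilon^2\OPT(I)$ and is absorbed into the same $\epsilon\OPT(I)$ slack used for buckets below $\nu_n$. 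With that observation in place (and the matching observation that newly arriving items always have bucket $\le\mu_i$), the per-bucket greedy comparison and the small-item exchange both go through, so your plan is sound and in fact slightly sharpens the argument the paper only sketches.
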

Let $I\coloneqq(e_1,\dots,e_n)$ be an input sequence, 
$B_{\OPT}\in\argmax\{v(X)\mid s(X)\le 1,~X\subseteq \{e_1,\dots,e_n\}\}$ be the offline optimal solution, and 
$B_{\ALG}\in\argmax\{v(X)\mid s(X)\le 1,~X\subseteq B_n\}$ be the outcome solution of $\ALG$.
We construct another feasible solution $B^*$ from $B_n$ by Algorithm~\ref{alg:weighted_general_solution}.
Note that $v(B_{\ALG}) \ge v(B^*)$.

\begin{algorithm}[htb]
  \caption{Construct a feasible solution}\label{alg:weighted_general_solution}
  $B^*\ot B_n\cap B_{\OPT}$\;
  \For{$k\ot \nu_n,\dots,\mu_n$}{
  	Let $r_k \ot |(B_{\OPT}\setminus B^*)\cap M^k|$\;
  	\For{$j\ot 1,\dots,r_k$}{
  		Let $a\ot\argmin\{s(e)\mid e\in (B_n\setminus B^*)\cap M^k\}$ and $B^*\ot B^*\cup\{a\}$\;
  	}
  }
  \While{$(B_n\setminus B^*)\cap S\ne\emptyset$}{
    Let $a\in\argmax\{v(e)/s(e)\mid e\in (B_n\setminus B^*)\cap S\}$\;
    \lIf{$s(B^*)+s(a)\le 1$}{$B^*\ot B^*\cup\{a\}$}
    \lElse{\textbf{break}}
  }
  \Return $B^*$\;
\end{algorithm}

To prove the theorem, we show the following two claims.
\begin{claim}\label{claim:general_removable_general1}
$v(B_{\OPT} \cap M) \le (1+\epsilon) v(B^*\cap M) + \epsilon v(B_{\OPT})$ and $s(B_{\OPT} \cap M) \ge s(B^*\cap M)$.
\end{claim}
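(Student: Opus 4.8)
The plan is to analyze the construction in Algorithm~\ref{alg:weighted_general_solution} layer by layer, one group $M^k$ at a time, and then deal separately with the items of $B_{\OPT}$ that fall outside the window $[\nu_n,\mu_n]$. First I would fix a group index $k$ with $\nu_n\le k\le\mu_n$ and compare the $M^k$-contributions of $B^*$ and $B_{\OPT}$. Let $t_k\coloneqq|B_n\cap B_{\OPT}\cap M^k|$ be the number of already-kept optimal items, and recall $r_k=|(B_{\OPT}\setminus B^*)\cap M^k|$. The key structural fact I would establish is that $B_n$ contains at least $t_k+r_k=|B_{\OPT}\cap M^k|$ items of $M^k$: this is because at round $n$ the algorithm's greedy loop over $B_n'\cap M^k$ (processing by non-decreasing size) packs items into a budget of total size $1$, and by the end of the sequence all items of $B_{\OPT}\cap M^k\subseteq B_n'$ are available there — so if the algorithm kept fewer than $|B_{\OPT}\cap M^k|$ of them, the ones it skipped were skipped only because the size budget $1$ was already (nearly) exhausted by items it did keep, which are no larger; counting sizes then shows $|B_n\cap M^k|\ge|B_{\OPT}\cap M^k|$. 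Hence the inner loop ``for $j\ot1,\dots,r_k$'' in Algorithm~\ref{alg:weighted_general_solution} never runs out of candidates, and after processing group $k$ we have $|B^*\cap M^k|=|B_{\OPT}\cap M^k|$.

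Next I would turn the cardinality equality into the two inequalities of the claim, group by group. For the value bound: since every item $e\in M^k$ satisfies $(1+\epsilon)^k\le v(e)<(1+\epsilon)^{k+1}$, any $n_k$ items of $M^k$ have total value in $[n_k(1+\epsilon)^k,\,n_k(1+\epsilon)^{k+1})$; applying this to $B^*\cap M^k$ (lower end) and to $B_{\OPT}\cap M^k$ (upper end) with the common cardinality $n_k=|B_{\OPT}\cap M^k|$ gives $v(B_{\OPT}\cap M^k)<(1+\epsilon)\,v(B^*\cap M^k)$, and summing over $\nu_n\le k\le\mu_n$ yields $v(B_{\OPT}\cap M)\cap(\text{window})\le(1+\epsilon)v(B^*\cap M)$. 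For the size bound: the items added to $B^*$ in group $k$ are chosen by \texttt{argmin} size among $(B_n\setminus B^*)\cap M^k$, i.e.\ they are the $r_k$ smallest available items of $M^k$ in $B_n$; since $|B_n\cap M^k|\ge|B_{\OPT}\cap M^k|$ and $B^*$ already held the $t_k$ common items, a standard exchange/majorization argument shows $s(B^*\cap M^k)\le s(B_{\OPT}\cap M^k)$, and summing over the window gives $s(B^*\cap M)\le s(B_{\OPT}\cap M)$ (note $B^*\cap M$ is entirely contained in the window by construction).

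Finally I would account for the optimal items in $M$ that lie outside the window $[\nu_n,\mu_n]$. By the choice of $\mu_n$ and $\nu_n$ in Algorithm~\ref{alg:weighted_general}, with $e^*_n$ the maximum-value item seen, every item $e\in M$ has $v(e)\le v(e^*_n)<(1+\epsilon)^{\mu_n+1}$, so no item of $M$ lies in any $M^k$ with $k>\mu_n$; thus the only ``lost'' optimal items are those in groups $M^k$ with $k<\nu_n$, each of value $<(1+\epsilon)^{\nu_n}\le\epsilon^2 v(e^*_n)$. Since $B_{\OPT}$ is feasible it contains at most $\lfloor1/\epsilon\rfloor$ items of $M$ (each has size $>\epsilon$), so the total value of these small-index items is at most $(1/\epsilon)\cdot\epsilon^2 v(e^*_n)=\epsilon\,v(e^*_n)\le\epsilon\,v(B_{\OPT})$ (using $v(e^*_n)\le v(B_{\OPT})$, which holds because the single item $e^*_n$ is a feasible knapsack solution). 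Combining this additive slack with the windowed value inequality gives $v(B_{\OPT}\cap M)\le(1+\epsilon)v(B^*\cap M)+\epsilon v(B_{\OPT})$, completing the claim.

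I expect the main obstacle to be the first structural step: rigorously arguing that the round-$n$ greedy packing of $M^k$ keeps at least as many items as $B_{\OPT}\cap M^k$ does. This needs a careful size-counting argument (the kept items are the smallest ones and fit in budget $1$, while $B_{\OPT}\cap M^k$ also fits in budget $1$), together with the observation that $B_n'$ at round $n$ contains all items that ever appeared, so $B_{\OPT}\cap M^k$ is genuinely available to the greedy loop. Once that is in hand, the value and size inequalities are routine interval/majorization estimates, and the out-of-window term is a direct consequence of the definitions of $\nu_n$ and $\epsilon$.
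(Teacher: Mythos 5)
Your overall decomposition matches the paper's: split $B_{\OPT}\cap M$ into the in-window part ($\nu_n\le k\le\mu_n$, where a per-group cardinality match plus the factor-$(1+\epsilon)$ value spread and a smallest-items exchange argument gives both inequalities group-by-group) and the out-of-window tail ($k<\nu_n$), which you bound by $\epsilon\,v(B_{\OPT})$. Your tail estimate (each such item has value $<\epsilon^2 v(e^*_n)$ and there are at most $1/\epsilon$ items of $M$ in $B_{\OPT}$ since each has size $>\epsilon$ and $s(B_{\OPT})\le 1$) is a valid variant of the paper's density-based estimate, and your handling of $k>\mu_n$ (empty, since $e^*_n$ is max-value) is correct.

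However, your justification of the \emph{key structural step} — that $|B_n\cap M^k|\ge|B_{\OPT}\cap M^k|$ — contains a genuine error. You write that ``$B_n'$ at round $n$ contains all items that ever appeared, so $B_{\OPT}\cap M^k$ is genuinely available to the greedy loop.'' This is false: $B_n'=B_{n-1}\cup\{e_n\}$ contains only the items currently in the buffer plus the last arrival, and items that were discarded at earlier rounds are permanently gone. In particular $B_{\OPT}\cap M^k$ need not be a subset of $B_n'$. The correct argument, which the paper asserts tersely (``$B_n\cap M^k$ is the greedy solution for $M^k$''), is a round-by-round induction: one shows that for each round $i$ (from the first arrival of an $M^k$-item onward) $B_i\cap M^k$ equals the smallest-size prefix, with size budget $1$, of \emph{all} $M^k$-items in $I_i=\{e_1,\dots,e_i\}$, not just those in $B_i'$. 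The invariant propagates because anything the algorithm discarded from $M^k$ was at least as large as everything it kept, so re-running the greedy on the smaller pool $B_i'\cap M^k$ recovers the same prefix as the greedy on $I_i\cap M^k$; and one must also check that the group $M^k$ (for $\nu_n\le k\le\mu_n$) lies in the window $[\nu_i,\mu_i]$ at every round in which an $M^k$-item is present, which holds since the window only moves upward ($\nu_i\le\nu_n\le k$) and $k\le\mu_i$ whenever an $M^k$-item is in $B_i'$ (its value cannot exceed $v(e^*_i)$). Once this invariant is in place, both $|B_n\cap M^k|\ge|B_{\OPT}\cap M^k|$ and the exchange argument for $s(B^*\cap M^k)\le s(B_{\OPT}\cap M^k)$ follow as you outline. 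So the conclusion you want to reach is correct, but the cited reason for it is not, and the missing induction is precisely the nontrivial content you flagged as ``the main obstacle.''
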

\begin{claim}\label{claim:general_removable_general2}
$v(B_{\OPT} \cap S) \le v(B^*\cap S) + \epsilon(1+2\epsilon) v(B_{\OPT})$.
\end{claim}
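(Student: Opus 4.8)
The plan is to derive Theorem~\ref{thm:general_removable_general} from Claims~\ref{claim:general_removable_general1} and~\ref{claim:general_removable_general2} by simply adding the two inequalities. Concretely, I would write
$v(B_{\OPT}) = v(B_{\OPT}\cap M) + v(B_{\OPT}\cap S) \le (1+\epsilon)v(B^*\cap M) + v(B^*\cap S) + \epsilon(2+2\epsilon)v(B_{\OPT})$,
and then bound $v(B^*\cap M)+v(B^*\cap S) = v(B^*) \le v(B_{\ALG}) = \ALG(I)$, provided $(1+\epsilon)$ is factored out uniformly, so that $v(B_{\OPT}) \le (1+\epsilon)\ALG(I) + \epsilon(2+2\epsilon)\OPT(I)$. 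Rearranging gives $(1-2\epsilon-2\epsilon^2)\OPT(I) \le (1+\epsilon)\ALG(I)$, hence $\OPT(I)/\ALG(I) \le (1+\epsilon)/(1-2\epsilon-2\epsilon^2) = 1+O(\epsilon)$ for $\epsilon$ small enough, and since $\epsilon = O(\log R/R)$ by Lemma~\ref{lem:relation_m_eps_R}, this yields the claimed competitive ratio. So the theorem statement reduces to the two claims plus a short arithmetic manipulation.

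The real work therefore lies in proving the two claims, which the excerpt defers. For Claim~\ref{claim:general_removable_general1} I would argue level by level over $M^k$. The key structural fact is that $B_{\OPT}$ is a feasible knapsack solution, so $s(B_{\OPT})\le 1$; since every item in $M^k$ (indeed in $M$) has size $>\epsilon$, this caps how many items $B_{\OPT}$ can contain overall, but more importantly it means that when we look at the indices $k$ actually used, $e^*_n$ (the globally heaviest seen item) sits in $M^{\mu_n}$, and any item of $B_{\OPT}$ heavier-valued than $\epsilon^2 v(e^*_n)$ lies in some $M^k$ with $\nu_n \le k \le \mu_n$ — so the total value of $B_{\OPT}\cap M$ outside this window is at most a few items each of value $<\epsilon^2 v(e^*_n) \le \epsilon^2 \OPT(I)$; with at most $1/\epsilon$ such items this contributes $\le \epsilon v(B_{\OPT})$, absorbed into the $\epsilon v(B_{\OPT})$ slack. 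For the in-window part, Algorithm~\ref{alg:weighted_general_solution} matches each item of $(B_{\OPT}\setminus B^*)\cap M^k$ to a distinct minimum-size item of $(B_n\setminus B^*)\cap M^k$; this is legitimate because the greedy loop in Algorithm~\ref{alg:weighted_general} over $M^k$ keeps items in non-decreasing size order up to total size $1$, so whatever $B_{\OPT}$ keeps from $M^k$, $B_n$ kept at least as many items of $M^k$ and of no larger size — giving $s(B^*\cap M^k)\le s(B_{\OPT}\cap M^k)$ and $|B^*\cap M^k|=|B_{\OPT}\cap M^k|$. Since all items in $M^k$ have value within a factor $(1+\epsilon)$ of each other, equal cardinality gives $v(B_{\OPT}\cap M^k)\le (1+\epsilon) v(B^*\cap M^k)$. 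Summing over the window and adding the out-of-window slack yields the claim; the size inequality $s(B_{\OPT}\cap M)\ge s(B^*\cap M)$ follows from the per-level size comparisons.

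For Claim~\ref{claim:general_removable_general2} the structure is the standard greedy-for-small-items argument, complicated only by the size budget shared with $M$. Here $B^*$ is filled with the densest available small items until either all of $B_n\cap S$ is used or $s(B^*)$ reaches $1$. In the first case every small item ever seen and kept is in $B^*$, and since $B_n\cap S$ contains every small item (the greedy $S$-loop in Algorithm~\ref{alg:weighted_general} only stops after exceeding size $2$, and $s(B_{\OPT}\cap S)\le 1$, so $B_{\OPT}\cap S\subseteq B_n\cap S\subseteq B^*$ up to... ), we get $v(B_{\OPT}\cap S)\le v(B_n\cap S)$ and then need $B^*$ to actually contain enough of it — this is where the $2$-versus-$1$ buffer slack for $S$ matters, and the residual not fitting into $B^*$ has size $\le s(B^*\cap M)$ worth of displacement, each unit of which, by density, costs at most the max small density; bounding that max density by $v(e)/s(e)$ over $e\in S$ and using $s(e)\le\epsilon$ turns the shortfall into an additive $\epsilon\cdot(\text{something})\cdot v(B_{\OPT})$ term, and the extra $2\epsilon$ factor is slack from $s(B^*\cap M)\le s(B_{\OPT}\cap M)\le 1$ combined with the size overshoot allowed in the greedy. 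I expect this second claim — specifically pinning down exactly why the small-item shortfall is at most $\epsilon(1+2\epsilon)v(B_{\OPT})$ rather than some larger multiple — to be the main obstacle, since it requires carefully tracking the size budget consumed by $M$ in $B^*$ versus in $B_{\OPT}$ and converting a size deficit into a value deficit via the density ordering; the first claim and the final arithmetic are comparatively routine.
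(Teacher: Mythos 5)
Your sketch for this claim is on the right track at the highest level (greedy exchange plus a density-to-value conversion), but it has two concrete gaps that would prevent it from closing, and one step in it goes in the wrong direction.

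First, you assert that ``$B_n\cap S$ contains every small item'' and hence $B_{\OPT}\cap S\subseteq B_n\cap S$. This is false: the greedy $S$-loop in Algorithm~\ref{alg:weighted_general} caps $s(B_i\cap S)$ at roughly $2$ and discards the lowest-density small items beyond that, so small items \emph{are} dropped. The whole content of the claim is to handle the case $B_{\OPT}\cap S\not\subseteq B_n$. The paper's proof splits on exactly this: if $B_{\OPT}\cap S\subseteq B_n$, then $B_{\OPT}\cap S\subseteq B_n\cap B_{\OPT}\subseteq B^*$ and we are done trivially; otherwise one uses the key structural fact that $s(B_n\cap S)>2$ (the greedy only drops a small item once the retained denser items already exceed total size $2$). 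Your sketch never arrives at this observation, and your parenthetical ``$\subseteq B^*$ up to\ldots'' trails off precisely where this case begins.

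Second, your bound on the value shortfall is circular: you propose to multiply the size of the shortfall by ``the max small density,'' but that density has no a priori bound in terms of $v(B_{\OPT})$. What makes the argument work is a bound on the density of the \emph{specific} critical item $f_j$ (the lowest-density element of $(B_n\cap S)\setminus B_{\OPT}$): since the $\ell$ small items in $B_n\cap S$ denser than $f_j$ have total size at least $1-\epsilon$ and form a feasible set, one gets $(1-\epsilon)\,v(f_j)/s(f_j)\le\sum_{i\le\ell} v(f_i)\le v(B_{\OPT})$, hence $v(f_j)\le\frac{\epsilon}{1-\epsilon}v(B_{\OPT})\le\epsilon(1+2\epsilon)v(B_{\OPT})$. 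Without this threshold bound, the $\epsilon(1+2\epsilon)$ coefficient does not emerge.

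Third, your ``displacement'' reasoning has the sign backwards. You suggest the small-item residual has size ``$\le s(B^*\cap M)$ worth of displacement,'' as if $M$-items in $B^*$ eat into the room for small items. But Claim~\ref{claim:general_removable_general1} gives $s(B^*\cap M)\le s(B_{\OPT}\cap M)$, so $B^*$ has at \emph{least} as much room for $S$ as $B_{\OPT}$ does; the only slack is the single-item overshoot at the end of the greedy loop, which is why the shortfall is controlled by one dropped item $f_j$ of size $\le\epsilon$ rather than by $s(B^*\cap M)$. Once one has $s(B^*\cap S)+s(f_j)\ge s(B_{\OPT}\cap S)$ together with the density dominance $\min\{v/s \text{ over } (B_n\cap S)\setminus B_{\OPT}\}\ge\max\{v/s \text{ over } (B_{\OPT}\cap S)\setminus B_n\}$, the value inequality $v(B^*\cap S)+v(f_j)\ge v(B_{\OPT}\cap S)$ follows, and combining with the bound on $v(f_j)$ finishes the claim.
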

With these claims, $B^*$ is feasible, and we have
\begin{align*}
  v(B_{\OPT}) &= v(B_{\OPT} \cap M) +  v(B_{\OPT} \cap S)\\
              &\le (1+\epsilon) v(B^*\cap M) + \epsilon v(B_{\OPT}) + v(B^*\cap S) + \epsilon(1+2\epsilon) v(B_{\OPT})\\
              &= (1+\epsilon) v(B^*) + (2\epsilon+2\epsilon^2)v(B_{\OPT}).
\end{align*}
This implies $(1-2\epsilon-2\epsilon^2)v(B_{\OPT}) \le (1+\epsilon)v(B^*)$.
Since $v(B^*)\le v(B_{\ALG})$, the competitive ratio of Algorithm~\ref{alg:weighted_general} is at most $\frac{1+\epsilon}{1-2\epsilon-2\epsilon^2} \le \frac{1+\epsilon}{1-3\epsilon} \le 1+ 6\epsilon= 1+ O(\log R/R)$, when $\epsilon<1/12$ (this inequality follows from the assumption that $R$ is sufficiently large).

The proof is completed by proving Claims~\ref{claim:general_removable_general1} and~\ref{claim:general_removable_general2}.
\begin{proof}[Proof of Claim~\ref{claim:general_removable_general1}]
Note that
\begin{align*}
  v(B_{\OPT} \cap M) = \sum_{k<\nu_n}v(B_{\OPT}\cap M^k) + \sum_{k\ge\nu_n}v(B_{\OPT}\cap M^k).
% &< \epsilon v(B_{\OPT}) + (1+\epsilon)\sum_{k\ge\nu_n}v(B^*\cap M^k)
\end{align*}
For $e\in M^k$ with $k<\nu_n$, we have $s(e)>\epsilon$ and $v(e)< (1+\epsilon)^{\nu_n}\le \epsilon^2 v(e_n^*)$, and hence $v(e)/s(e) \le \epsilon^2v(e^*_n)/\epsilon \le \epsilon v(B_{\OPT})$. 
Thus, we have $\sum_{k<\nu_n}v(B_{\OPT}\cap M^k) \le \epsilon v(B_{\OPT})$.
For $k$ with $\mu_n\le k\le \nu_n$, the set $B_n\cap M^k$ is the greedy solution for $M^k$ according to the non-decreasing order of their size.
Hence, by the construction of $B^*$, the number of items in $B_{\OPT}\cap M^k$ equals to the number of items in $B^*\cap M^k$, and we have $s(B_{\OPT}\cap M^k)\ge s(B^*\cap M^k)$.
Also, for each $e \in B_{\OPT} \cap M^k$ and $f \in B^* \cap M^k$, $v(e)/v(f) < (1+\epsilon)^{k+1}/ (1+\epsilon)^k = (1+\epsilon)$.
Hence, $\sum_{k\ge\nu_n}v(B_{\OPT}\cap M^k) \le (1+\epsilon) \sum_{k\ge\nu_n}v(B^*\cap M^k)$.
\end{proof}

\begin{proof}[Proof of Claim~\ref{claim:general_removable_general2}]
It is sufficient to consider the case $B_{\OPT} \cap S \not\subseteq B_n$, since otherwise $B_{\OPT} \cap S \subseteq B^*\cap S$ and the claim clearly holds.
Hence, we have $s(B_n\cap S)>2$.
Let $B_n \cap S = \{f_1,f_2,\dots,f_{|B_n \cap S|}\}$ be sorted in non-increasing order of their density.
Let $f_j$ be the item with the largest index in $(B_n\cap S) \setminus B_{\OPT}$.
Also let $\ell\ge 1$ be the index such that $\sum_{i=1}^{\ell}s(f_i) \le 1 < \sum_{i=1}^{\ell+1}s(f_i)$.
There are two cases to consider: $j\le \ell$ and $j>\ell$.
\begin{description}
\item[Case 1:] Suppose that $j\le \ell$.
Then, by the definition of $f_j$, we have $ \{f_{\ell+1},\dots f_{|B_n \cap S|}\} \subseteq B_{\OPT}$.
Since $s(B_n\cap S) > 2$, we have $s(B_{\OPT}) \ge s(B_{\OPT}\cap S) \ge s(B_n \cap S) - \sum_{i=1}^{\ell}s(f_i) > 1$, which contradicts with $s(B_{\OPT}) \le 1$.

\item[Case 2:] Suppose that $j> \ell$.
In this case, we prove that $v(f_j) \le \epsilon(1+2\epsilon)v(B_{\OPT})$.
Since $\sum_{i=1}^{\ell}s(f_i) \ge 1-\epsilon$, we have 
  \[(1-\epsilon)\cdot \frac{v(f_j)}{s(f_j)} \le \sum_{i=1}^\ell s(f_i)\cdot \frac{v(f_j)}{s(f_j)}\le \sum_{i=1}^\ell s(f_i)\cdot \frac{v(f_i)}{s(f_i)} = \sum_{i=1}^{\ell}v(f_i) \le v(B_{\OPT}).\]
Therefore, $v(f_j) \le \frac{s(f_j)}{1-\epsilon} v(B_{\OPT}) \le \frac{\epsilon}{1-\epsilon} v(B_{\OPT}) \le \epsilon(1+2\epsilon) v(B_{\OPT})$ when $\epsilon \le 1/2$.
    
Since $s(B^*\cap M)\le s(B_{\OPT}\cap M)$ by construction of $B^*$, we have $s(B^* \cap S) + s(f_j) \ge s(B_{\OPT}\cap S)$.
By construction of $B_n\cap S$, we have $\min\{v(f)/s(f)\mid f\in (B_n\cap S)\setminus B_{\OPT}\} \ge \max\{v(f)/s(f)\mid f\in (B_{\OPT}\cap S)\setminus B_n\}$.
Therefore, $v(B^* \cap S) + v(f_j) \ge v(B_{\OPT}\cap S)$.
Moreover we have $v(f_j) \le \epsilon(1+2\epsilon) v(B_{\OPT})$, and the claim follows.
\qedhere
\end{description}
\end{proof}
The proof of Theorem~\ref{thm:general_removable_general} is completed.

Next, let us consider an algorithm that selects items according to the non-increasing order of the density.
The algorithm is formally described in Algorithm~\ref{alg:weighted_removable_to_2}.
We prove that it is optimal when $1<R< 2$.
\begin{algorithm}[htb]
  \caption{$\max\{1/(R-1),2\}$-competitive algorithm for $1<R< 2$}
  \label{alg:weighted_removable_to_2}
  $B_0\ot\emptyset$\;
  \For{\(i\ot 1,2,\dots\)}{%
    $B_i\ot \emptyset$\;
    \ForEach{$e\in B_{i-1}\cup\{e_i\}$ in the non-increasing order of the density}{
        \lIf{$s(B_i)+s(e)\le R$}{
            $B_i\ot B_i\cup\{e\}$
        }
    }
  }
\end{algorithm}

\begin{theorem}\label{thm:weighted_removable_to_2}
  Algorithm~\ref{alg:weighted_removable_to_2} is $\max\{1/(R-1),2\}$-competitive 
  for the general\&removable online knapsack problem with a buffer 
  when $1<R< 2$.
\end{theorem}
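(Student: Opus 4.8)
The plan is to fix an input $I=(e_1,\dots,e_n)$, let $B_n$ be the final buffer of Algorithm~\ref{alg:weighted_removable_to_2}, let $O$ be an offline optimum (so $s(O)\le 1$ and $v(O)=\OPT(I)$), and exploit that $\ALG(I)\ge v(X)$ for every $X\subseteq B_n$ with $s(X)\le 1$. If $v(O\setminus B_n)=0$ (in particular if $O\subseteq B_n$) we are done, since then $O\cap B_n$ is a feasible output. Otherwise set $\rho:=\max\{v(e)/s(e):e\in O\setminus B_n\}>0$, $H:=\{e\in B_n:v(e)/s(e)\ge\rho\}$, and $O_H:=O\cap H$, $\sigma:=s(O_H)$, $V^*:=v(O_H)$. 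Every item of $O$ outside $H$ has density at most $\rho$ (those in $O\setminus B_n$ by the choice of $\rho$, those in $(O\cap B_n)\setminus H$ by definition of $H$), so $\OPT(I)\le V^*+\rho(1-\sigma)$; moreover $O_H\subseteq H\subseteq B_n$ with $\sigma\le 1$, so $\ALG(I)\ge V^*$, and $\ALG(I)\ge v(\{e\})$ for every $e\in B_n$.

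The main step is the structural claim $s(H)>R-1$. The key observation is that in each round the algorithm rebuilds the buffer by scanning $B_{i-1}\cup\{e_i\}$ in non-increasing density order, so the density-$\ge\rho$ items are processed first, from an empty buffer, and the later lighter-density items never displace them; hence $H_i:=\{e\in B_i:v(e)/s(e)\ge\rho\}$ is exactly the greedy-by-density packing with capacity $R$ of $H_{i-1}$ together with $e_i$ (when $v(e_i)/s(e_i)\ge\rho$). Since $e^\circ\in O\setminus B_n$ of density $\rho$ is never in $H_n$, it is available and rejected by this greedy in some round $t$; at that moment the items already packed have total size $>R-s(e^\circ)\ge R-1$, so $s(H_t)>R-1$. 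Finally $s(H_j)>R-1$ is preserved for all $j\ge t$: in round $j+1$ either no density-$\ge\rho$ item is dropped (so $H_j\subseteq H_{j+1}$), or some dropped $g$ witnesses that the items packed before it have total size $>R-s(g)\ge R-1$.

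With $s(H)>R-1$ in hand I would split on whether $s(H)\le 1$. If $s(H)\le 1$, then $H$ is a feasible output and, as $O_H\subseteq H$ and all of $H$ has density $\ge\rho$, $\ALG(I)\ge v(H)=V^*+v(H\setminus O_H)\ge V^*+\rho(s(H)-\sigma)>V^*+\rho(R-1-\sigma)$; combined with $\OPT(I)\le V^*+\rho(1-\sigma)$ this yields $\OPT(I)-\ALG(I)<\rho(2-R)$, and since $\ALG(I)\ge v(H)>\rho(R-1)$ gives $\rho<\ALG(I)/(R-1)$, we get $\OPT(I)<\ALG(I)\bigl(1+\tfrac{2-R}{R-1}\bigr)=\ALG(I)/(R-1)$ (using $1<R<2$). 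If instead $s(H)>1$, greedily fill the budget $1-\sigma$ left by $O_H$ with the densest items of $H\setminus O_H$; since $s(H\setminus O_H)=s(H)-\sigma>1-\sigma$, some item $g'$ is left out, and the added set $Q$ satisfies $s(Q)+s(g')>1-\sigma$. Then $X':=O_H\cup Q\subseteq B_n$ is a feasible output with $v(X')\ge V^*+\rho\,s(Q)>V^*+\rho(1-\sigma)-\rho\,s(g')$, so using $\OPT(I)\le V^*+\rho(1-\sigma)$ and $\ALG(I)\ge v(g')\ge\rho\,s(g')$ we obtain $\OPT(I)<v(X')+\rho\,s(g')\le 2\,\ALG(I)$.

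Combining, $\OPT(I)/\ALG(I)<\max\{1/(R-1),2\}$ in every instance (the first case produces $1/(R-1)$ and the second $2$, each at most $\max\{1/(R-1),2\}$), which is the claimed bound. The delicate part is the structural claim — making precise that the density-$\ge\rho$ items really do evolve as their own greedy-by-density process (handling density ties and the existence of the round $t$), so that rejecting $e^\circ$ forces a blocking set of size $>R-1$ that the induction carries to $B_n$. The other mild point is the case $s(H)>1$: the naive ``densest weight-$\le1$ slab of $H$'' can carry arbitrarily little mass if $H$ contains one very large item, and paying for the missing mass with the singleton $\{g'\}$ is exactly what keeps the ratio at $2$.
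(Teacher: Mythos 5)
Your proof is correct, and it reaches the same bound via a genuinely different decomposition than the paper, so the comparison is worth spelling out. The paper's argument is much shorter: sort \emph{all} of $I$ by non-increasing density as $f_1,\dots,f_n$, take the longest prefix $\{f_1,\dots,f_k\}$ with $\sum_{i\le k}s(f_i)\le 1$, observe (by a straightforward induction, using that this prefix always fits since $1\le R$) that $\{f_1,\dots,f_k\}\subseteq B_n$, and then case-split on whether $f_{k+1}\in B_n$. If $f_{k+1}\notin B_n$, the same induction shows the prefix $\{f_1,\dots,f_{k+1}\}$ cannot have total size $\le R$, so $\sum_{i\le k}s(f_i)>R-1$; if $f_{k+1}\in B_n$, one applies the classical $2$-approximation bound $\OPT\le\sum_{i\le k}v(f_i)+v(f_{k+1})\le 2\ALG$. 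Your argument instead introduces a threshold $\rho$ defined from $O\setminus B_n$, works with $H=\{e\in B_n:v(e)/s(e)\ge\rho\}$, and splits on $s(H)\le 1$ versus $s(H)>1$; the structural lemma $s(H)>R-1$ plays the role of the paper's $\sum_{i\le k}s(f_i)>R-1$, established by the same mechanism (a rejection during the density-ordered scan certifies $>R-s(e^\circ)$ mass of at-least-as-dense items, and this certificate is stable across later rounds). Your route is more robust in that it never needs to reason about the globally sorted input, only about what is resident in $B_n$, and it makes fully explicit the fractional bound $\OPT\le V^*+\rho(1-\sigma)$ that the paper compresses into one line (``$\OPT(I)$ is at most $\ALG(I)/(R-1)$''); the paper's route is cleaner because the ``dense prefix'' is monotone in $i$, which lets the $>R-1$ mass claim be proved in a single sentence. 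Your handling of the $s(H)>1$ case (charging the missing mass to the singleton $\{g'\}$) is essentially the same $2$-approximation trick in disguise, since $g'$ plays the role of $f_{k+1}$. One small stylistic note: you take care over density ties in the scan, which the paper glosses over; this is harmless in both proofs because only the partial order ``density $\ge\rho$'' is ever used, and any consistent tie-break preserves it.
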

\begin{proof}
	Let $I\coloneqq(e_1,\dots,e_n)$ be an input sequence.
	Without loss of generality, we can assume that $\sum_{i=1}^n s(e_i)>R$ since otherwise $\ALG(I) = \OPT(I)$.
	Let $f_1,\dots,f_n$ be the rearrangement of $I$ according to the non-increasing order of the density, i.e., $\{f_1,\dots,f_n\}=\{e_1,\dots,e_n\}$ and $v(f_1)/s(f_1)\ge\dots\ge v(f_n)/s(f_n)$.
	Let $k~(\le n-1)$ be the index such that $\sum^{k}_{i=1}s(f_i)\le 1 < \sum^{k+1}_{i=1}s(f_i)$.
	Then, by the definition of the algorithm, we have $\{f_1,\dots,f_k\}\subseteq B_n$.
	There are two cases to consider: $f_{k+1}\not\in B_n$ and $f_{k+1}\in B_n$.
\begin{description}
\item[Case 1:] Suppose that $f_{k+1}\not\in B_n$.
Then, we have $\sum_{i=1}^{k+1}s(f_i) > R$, and hence $\sum_{i=1}^k s(f_i)> R-s(f_{k+1})\ge R-1$ by $s(f_{k+1}) \le 1$. 
Thus, $\OPT(I)$ is at most $\ALG(I)/(R-1)$ and the competitive ratio is at most $1/(R-1)$.
	
\item[Case 2:] Suppose that $f_{k+1}\in B_n$.
By a similar analysis of the famous $2$-approximation algorithm for the offline knapsack problem, 
we have $\OPT(I)\le \sum^{k}_{i=1}v(f_i)+v(f_{k+1}) \le 2\cdot \max\{\sum^{k}_{i=1}v(f_i),v(f_{k+1})\}\le 2\cdot \ALG(I)$.
Thus, the competitive ratio is at most $2$. \qedhere
\end{description}
\end{proof}

\section{Proportional\&Removable Case}\label{section:unweighted removable case}
In this section, we consider the proportional\&removable case.
We consider the following four cases separately:
(i) $1\le R\le \frac{1+\sqrt{2}}{2}$, (ii) $2-\frac{\sqrt{2}}{2} \leq R \leq 17-9\sqrt{3}$, (iii) $2\sqrt{3}-2\le R\le 3/2$, and (iv) general $R$ (see Figure~\ref{fig:prop_removable}).
We remark that the competitive ratios for $\frac{1+\sqrt{2}}{2}\le R\le 2-\frac{\sqrt{2}}{2}$ (and $17-9\sqrt{3}\le R\le 2\sqrt{3}-2$) can be obtained
by considering the upper bound for $R=\frac{1+\sqrt{2}}{2}$ in case (i) ($R=17-9\sqrt{3}$ in case (ii)) and the lower bound for $R=2-\frac{\sqrt{2}}{2}$ in case (ii) ($R=2\sqrt{3}-2$ in case (iii)).

\begin{figure}[htb]
\centering
\begin{tikzpicture}[domain=0:2,very thick,yscale=5,xscale=20]
    % 10/9 = 1.111
    % (1+sqrt(2))/2 = 1.207
    % 2-sqrt(2)/2 = 1.2928
    % 17-9sqrt(3) = 1.4115
    % 2sqrt(3) - 2 = 1.4641
    % (1+sqrt(3))/2 = 1.3660
    \draw[->] (0.95,1.1) -- (1.54,1.1) node[right] {$R$};
    \draw[->] (0.97,1.02) -- (0.97,1.68) node[above,yshift=-1mm,font=\small] {Competitive ratio};

    % cases
    \draw[densely dotted,blue] (1.0000,1.25) to node[below] {\scriptsize (i)}   (1.2070,1.25); % (i)
    \draw[densely dotted,blue] (1.2928,1.25) to node[below] {\scriptsize (ii)}  (1.4115,1.25); % (ii)
    \draw[densely dotted,blue] (1.4641,1.25) to node[below] {\scriptsize (iii)} (1.5000,1.25); % (iii)

    % horizon
    \draw[dashed,color=gray,thin] (1.0000,1.6180)--(0.97,1.6180) node[left,font=\scriptsize,black] {$\frac{1+\sqrt{5}}{2}$};
    \draw[dashed,color=gray,thin] (1.2070,1.4140)--(0.97,1.4140) node[left,yshift=1mm,font=\scriptsize,black] {$\sqrt{2}$};
    \draw[dashed,color=gray,thin] (1.4115,1.3660)--(0.93,1.3660) node[left,font=\scriptsize,black] {$\frac{1+\sqrt 3}{2}$};
    \draw[dashed,color=gray,thin] (1.5000,1.3333)--(0.97,1.3333) node[left,yshift=-1mm,font=\scriptsize,black] {$\frac{4}{3}$};

    % vertical
    \draw[dashed,color=gray,thin] (1.2070,1.4140)--(1.2070,1.1) node[below,font=\scriptsize,black] {$\frac{1+\sqrt{2}}{2}$};
    \draw[dashed,color=gray,thin] (1.2928,1.4140)--(1.2928,1.1) node[below,font=\scriptsize,black] {$2-\frac{\sqrt 2}{2}$};
    \draw[dashed,color=gray,thin] (1.4115,1.3660)--(1.4115,1.1) node[below,font=\scriptsize,xshift=-3mm,black] {$17{-}9\sqrt{3}$};
    \draw[dashed,color=gray,thin] (1.4641,1.3660)--(1.4641,1.1) node[below,font=\scriptsize,xshift=-1mm,black] {$2\sqrt{3}{-}2$};
    \draw[dashed,color=gray,thin] (1.5000,1.3330)--(1.5000,1.1) node[below,font=\scriptsize,black] {$\frac{3}{2}$};
    \draw[dashed,color=gray,thin] (1.0000,1.6180)--(1.0000,1.1) node[below,font=\scriptsize,black] {$1$};

    % (1+\sqrt{4R+1})/(2R)
    %\draw[dashed,color=gray,thin] (1.111,1.5)--(1.111,1) node[below] {$\frac{10}{9}$};
    \draw (1.1,1.6) node[blue] {$\frac{1+\sqrt{4R+1}}{2R}$};
    \draw[blue,domain=1:1.207,smooth,variable=\x] plot ({\x},{(1+sqrt(1+4*\x))/(2*\x)});

    % sqrt{2}
    \draw[domain=1.207:1.2928,smooth,variable=\x] plot ({\x},{sqrt(2)});

    % (\sqrt{16R+1}-1)/(2R)
    \draw (1.36,1.47) node[blue] {$\frac{\sqrt{16R+1}-1}{2R}$};
    \draw[blue,domain=1.2928:1.4115,smooth,variable=\x] plot ({\x},{(sqrt(1+16*\x)-1)/(2*\x)});

    % (1+sqrt(3))/2 = 1.3660
    \draw[domain=1.4115:1.4641,smooth,variable=\x] plot ({\x},{(1+sqrt(3))/2});

    % 2/R
    \draw (1.485,1.43) node[blue] {$\frac{2}{R}$};
    \draw[blue,domain=1.4641:1.5,smooth,variable=\x] plot ({\x},{2/(\x)});
\end{tikzpicture}
\vskip -2mm
\caption{The competitive ratios for the proportional\&removable case with $1\le R\le \frac{3}{2}$.}\label{fig:prop_removable}
\end{figure}
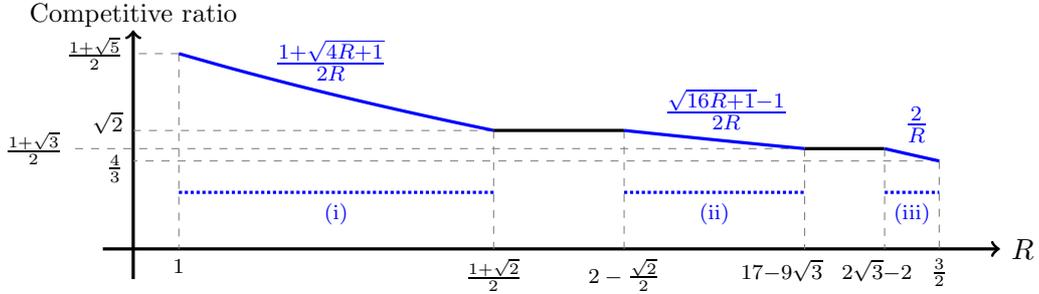

\subsection{\texorpdfstring{$1\le R\le \frac{1+\sqrt{2}}{2}$}{1le Rle frac{1+sqrt{2}}{2}}}
We prove that the competitive ratio is $\frac{1+\sqrt{4R+1}}{2R}$ when $1\le R\le \frac{1+\sqrt{2}}{2}$.
Let $r>0$ be a real such that $r+r^2=R$, i.e., $r=\frac{\sqrt{1+4R}-1}{2}$.

\subsubsection{Lower bound}
We first prove the lower bound.
\begin{theorem}\label{thm:lower_1leRle2}
  For any $\epsilon > 0$, 
  the competitive ratio of the proportional\&removable online knapsack problem with a buffer 
  is at least $\frac{1+\sqrt{4R+1}}{2R}-\epsilon$ when $1\le R< 2$.
\end{theorem}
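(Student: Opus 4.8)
The plan is to prove the lower bound by an adversary argument, extending the construction of Iwama and Taketomi~\cite{IT2002} for $R=1$. First I would recast the target: the number $r:=\tfrac{\sqrt{4R+1}-1}{2}$ (the same $r$ used below, with $r+r^2=R$) satisfies $\tfrac{1+\sqrt{4R+1}}{2R}=1/r$, so it suffices to force competitive ratio at least $1/r-\epsilon$ for every online algorithm $\ALG$. As in the paper's other lower bounds, I would absorb $\epsilon$ into a perturbation parameter $\epsilon'>0$ (chosen so that a clean quantity like $1/(r+\epsilon')$ is at least $1/r-\epsilon$), so the adversary may use items whose sizes are tiny perturbations of the ``ideal'' values. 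For $1\le R\le \tfrac{1+\sqrt2}{2}$ the bound already follows from the resource-augmentation lower bound of Iwama and Zhang~\cite{IZ2007}, which transfers to our model as noted in the introduction; for $\tfrac{1+\sqrt2}{2}<R<2$ the bound is genuinely stronger than the resource-augmentation one (where $\tfrac{2}{2R-1}$ is achievable there), so a new argument is needed that exploits the fact that in our model $\ALG$ must ultimately transfer into a size-$1$ knapsack. I would give a single construction covering all $1\le R<2$.

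The instance is adaptive. The opening move is to release a ``large'' item of size $\approx r$; $\ALG$ must keep it, or it is not competitive (stop immediately: $\OPT=\Theta(r)$, $\ALG=0$). The key structural facts I would use are $r<1$ (so $2r>R$, i.e.\ two large items never coexist in the buffer) and $r+r^2=R$ (so a large item leaves exactly $r^2$ of buffer room). After the large item the adversary feeds a stream of ``medium'' items whose size is tuned so that (i) a medium cannot be kept together with the large item, forcing $\ALG$ to choose between hoarding the large item and switching over to mediums, while (ii) a suitable bounded number of mediums lets $\OPT$ reach value close to $1$ (and, in the ``discard'' branch, close to $r$ times $\ALG$'s value). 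The adversary keeps releasing mediums until $\ALG$ discards the large item, taking the prefix seen so far as the input; if the large item is never discarded, it stops after enough mediums. For larger $R$ I expect this large/medium gadget to have to be iterated over several rounds with geometrically shrinking sizes, the number of rounds growing as $\epsilon\to 0$; the limiting ratio of the round values is the fixed point of a recursion determined by $r+r^2=R$, namely $1/r$, and the finite round count is exactly what the ``$-\epsilon$'' accounts for.

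With the instance fixed, the proof is a case analysis on $\ALG$'s behaviour. If $\ALG$ never discards the large item, its final buffer is essentially that single item (no room for a medium), so its knapsack output is at most $\approx r$, while $\OPT$ packs mediums (together with items $\ALG$ discarded, which $\OPT$ may still use) up to value $\approx 1$, giving ratio $\ge 1/r-O(\epsilon')$. If instead $\ALG$ discards the large item when the $j$-th medium arrives, the adversary stops there; since $\ALG$ held no medium while it held the large item, its buffer then contains at most that one medium (plus possibly a couple of others), so its knapsack value is small, whereas $\OPT$ still contains the discarded large item and the mediums released so far, and a short computation gives ratio $\ge 1/r-O(\epsilon')$ again. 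Letting $\epsilon'\to 0$ (and, if needed, the round count grow) yields the claimed bound.

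The main obstacle is choosing the item sizes so that \emph{every} branch of the adaptive construction simultaneously yields ratio $\ge 1/r-\epsilon$: balancing the ``stop early / discard'' branch against the ``hoard forever'' branch is precisely what forces the relation $r+r^2=R$ (the same relation governing the matching algorithm), and if a single-round gadget is not enough for the larger values of $R$, one must set up the iterated construction carefully and prove its round values converge monotonically (or in an oscillating fashion) to $1/r$. A secondary but essential point is to verify the buffer-feasibility constraint $s(B_i)\le R$ along the whole construction, and to confirm that for each truncated instance $\OPT$ really attains the claimed value — here one uses crucially that the offline optimum sees every released item, including those $\ALG$ discarded.
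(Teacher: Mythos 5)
Your sketch correctly identifies the number $r$ with $r+r^2=R$, the fact that the target ratio equals $1/r$, the forced-discard idea (two items whose total size just exceeds $R$), and the crucial leverage point that in this model the online player must ultimately pack a size-$1$ knapsack even if its buffer is larger. However, the construction you describe does not actually achieve the bound, and the iterated/geometric construction you anticipate is unnecessary.

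The gap is in the ``hoard the large item'' branch. You propose to feed a stream of medium items of size slightly more than $r^2$ (so none of them fits in the buffer together with the size-$r$ item) and claim $\OPT$ ``packs mediums up to value $\approx 1$.'' But two such mediums have total size $\approx 2r^2$, and $2r^2>1$ as soon as $r>1/\sqrt2$, i.e.\ $R>\tfrac{1+\sqrt2}{2}$; so for the very range where the Iwama--Zhang bound stops helping, $\OPT$ cannot even fit two mediums in the knapsack, and the ratio you get degenerates. Even in the small-$R$ regime where two mediums do fit, $\OPT\approx 2r^2<1$ and $\ALG\approx r$, yielding ratio $2r<1/r$ --- short of the target. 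What the paper does instead is much sharper: the adversary plays only \emph{three} items, $e_1$ of size $r$, $e_2$ of size $r^2+\epsilon'$, and (only if $\ALG$ drops $e_2$) a \emph{complementary} item $e_3$ of size $1-s(e_2)=1-r^2-\epsilon'$. Then $\OPT$ packs $\{e_2,e_3\}$ to total size exactly $1$, while $\ALG$ no longer holds $e_2$ and cannot transfer both $e_1$ and $e_3$ into the knapsack (their sizes sum to $>1$ since $\epsilon'<r-r^2$, and each is $\le r$ since $r\ge 1-r^2$), so $\ALG\le r$ and the ratio is $\ge 1/r$. The other branch (discard $e_1$) immediately gives ratio $r/(r^2+\epsilon')\ge 1/r-\epsilon$. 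No iteration, no stream of mediums, and a single construction covers all $1\le R<2$. The complementary-sized third item that pushes $\OPT$ to exactly $1$ is the missing ingredient in your proposal, and it is what makes the one-round gadget suffice for the whole range.
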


\begin{proof}
  Let $\ALG$ be an online algorithm and 
  let $\epsilon'$ be a positive real such that $\frac{r}{r^2+\epsilon'}\ge \frac{1}{r}-\epsilon$ and $\epsilon'<r-r^2$.
  Note that $r = \frac{\sqrt{1+4R}-1}{2} < 1$ and $\frac{1}{r} = \frac{1+\sqrt{4R+1}}{2R}$.
  Consider the input sequence $I\coloneqq (e_1,e_2)$ where $s(e_1)=r$ and $s(e_2)=r^2+\epsilon'$.
  Since $r+r^2 = R$, $\ALG$ must discard at least one of them.
  If $\ALG$ discards the item with size $r$,  then the competitive ratio for the sequence is $\frac{r}{r^2+\epsilon'}\ge \frac{1}{r}-\epsilon=\frac{1+\sqrt{4R+1}}{2R}-\epsilon$.
  If $\ALG$ discards the item with size $r^2+\epsilon'$, 
  let $I'\coloneqq (e_1,e_2,e_3)$ where $s(e_3)=1-r^2-\epsilon'$. 
  As $r \ge 1-r^2$ and $r+(1-r^2-\epsilon')>1$, we have $\OPT(I')=1$ and $\ALG(I')\le r$.
  Hence the competitive ratio is at least $\frac{1}{r}=\frac{1+\sqrt{4R+1}}{2R}$.
\end{proof}

\subsubsection{Upper bound for \texorpdfstring{$1\le R\le 10/9$}{1le R le 10/9}}
Next, we give an optimal algorithm for $1\le R\le 10/9$.
In this subsubsection, an item \(e\) is called {\em small}, {\em medium}, and {\em large}
if \(s(e)\le r^2\), \(r^2<s(e)<r\), and \(r\le s(e)\), respectively.
Let $S$, $M$, and $L$ respectively denote the sets of small, medium, and large items%
(see Figure~\ref{fig:sizes1}).

\begin{figure}[htbp]
\begin{center}
  \begin{tikzpicture}[scale=0.8]
    \draw[thick,->] (-1,0)--(11,0);
    \filldraw[thick,fill opacity=0.1] (0,0)--(0,1)--(4.44,1)--(4.44,0); % S
    \filldraw[thick,fill opacity=0.1] (4.44,0)--(4.94,1)--(6.16,1)--(6.66,0); % M
    \filldraw[thick,fill opacity=0.1] (6.66,0)--(6.66,1)--(10,1)--(10,0); % L 
    \draw[thick] (0.00,-0.1)--(0.00,0.1);
    \draw[thick] (3.33,-0.1)--(3.33,0.1);
    \draw[thick] (3.6,-0.7)--(3.6,0.1);
    \draw[thick] (4.44,-0.1)--(4.44,0.1);
    \draw[thick] (5.00,-0.1)--(5.00,0.1);
    \draw[thick] (6.66,-0.1)--(6.66,0.1);
    \draw[thick] (10.00,-0.1)--(10.00,0.1);
    \draw (2,0.5) node {$S$};
    \draw (5.5,0.5) node {$M$};
    \draw (8.5,0.5) node {$L$};
    \draw (0,-0.5) node {$0$};
    \draw (3.33,-0.4) node {$\frac{r}{2}$};
    \draw (3.6,-1) node {$1-r$};
    \draw (4.44,-0.4) node {$r^2$};
    \draw (5.00,-0.4) node {$\frac{1}{2}$};
    \draw (6.66,-0.4) node {$r$};
    \draw (10.0,-0.4) node {$1$};
  \end{tikzpicture}
  \vskip -4mm
  \caption{Item partition for $1\le R\le 10/9$.}\label{fig:sizes1}
\end{center}
\end{figure}
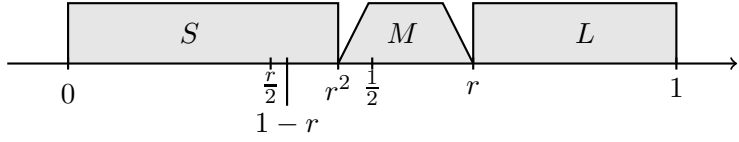

\begin{algorithm}[htb]
  \caption{$\frac{1+\sqrt{1+4R}}{2R}$-competitive algorithm for $1\le R\le \frac{10}{9}$}\label{alg:small}
  $B_0\ot\emptyset$\;
  \For{\(i\ot 1,2,\dots\)}{%
    \lIf{$\exists B'\subseteq B_{i-1}\cup\{e_i\}$ such that $r\le s(B')\le 1$}{$B_i\ot B'$}
    \ElseIf{$e_i\in M$ and $|B_{i-1}\cap M|=1$}{
      let $\{e_i'\}=B_{i-1}\cap M$\;
      \lIf{$s(e_i)<s(e_i')$}{$B_i\ot B_{i-1}\cup\{e_i\}\setminus \{e_i'\}$}
      \lElse{$B_i\ot B_{i-1}$}
    }
    %\Else(\tcp*[f]{$e_i\in M\cup S$}){
    \Else{
      %$B_i\ot B_{i-1}\cup\{e_i\}$\;
      $B_i\ot \emptyset$\;
      \ForEach{$e\in B_{i-1}\cup\{e_i\}$ in non-increasing order of size}{
        \lIf{$s(B_i)+s(e)\le R$}{$B_i\ot B_i\cup\{e\}$}
      }
    }
  }
\end{algorithm}

We consider Algorithm~\ref{alg:small},
which is a generalization of the $\frac{1+\sqrt{5}}{2}$-competitive algorithm for $R=1$ given by Iwama and Taketomi~\cite{IT2002}.
If the algorithm can select a set of items $B'$ such that $r\le s(B')\le 1$, it keeps the set $B'$ until the end since it is sufficient to achieve $1/r$-competitive.
Otherwise, it picks the smallest medium item (if exists) and greedily selects small items according to the non-increasing order of the sizes.
We show that it is optimal when $1\le R\le 10/9$.

\begin{theorem}\label{thm:proportional_removable_<10/9}
Algorithm~\ref{alg:small} is $\frac{1+\sqrt{1+4R}}{2R}$-competitive 
for the proportional\&removable online knapsack problem with a buffer 
when $1\le R\le 10/9$.
\end{theorem}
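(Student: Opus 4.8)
The plan is to establish the competitive ratio $1/r$, where $r\in(0,1)$ is the root of $r+r^2=R$ (so that $1/r=\frac{1+\sqrt{1+4R}}{2R}$); equivalently, I must show $\ALG(I)\ge r\cdot\OPT(I)$ for every input $I$. Since $\OPT(I)\le 1$ always, it suffices in almost all situations to exhibit a subset of $B_n$ of size at least $r$ that can be transferred into the knapsack.

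The first step is an easy dichotomy. If at some round $i$ there is a set $B'\subseteq B_{i-1}\cup\{e_i\}$ with $r\le s(B')\le 1$, then the first branch makes $B_i=B'$, and in every later round $B_i$ itself still witnesses the condition, so the first branch keeps firing and $r\le s(B_n)\le 1$; hence $\ALG(I)=s(B_n)\ge r$ and we are done. So assume henceforth the \emph{tight} case: for every $i$, no subset of $C_i:=B_{i-1}\cup\{e_i\}$ has size in $[r,1]$. I would then record four structural facts. (i) No large item arrives, since a single large item is itself a set of size in $[r,1]$. (ii) At every round the buffer holds at most one medium item, and as soon as a medium item has arrived the buffer's medium item is $m^*$, the smallest medium item of $I$: the first branch is never used in the tight case, the else-if branch never produces two medium items and always keeps the smaller of two, and the greedy branch receives at most one medium item (hence outputs at most one, since a medium item is larger than every small item and so never dropped). (iii) Any two medium items of $I$ have combined size more than $1$: if $a,b$ are medium with $b$ arriving no earlier than $a$, then by (ii) the buffer just before $b$ holds a medium item $\mu$ with $s(\mu)\le s(a)$, so $\{\mu,b\}\subseteq C$ has size in $\bigl(2r^2,\ s(a)+s(b)\bigr]$; since $2r^2\ge r$ (because $R\ge 1$), if $s(a)+s(b)\le 1$ this set lies in $[r,1]$, contradicting the tight case. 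Consequently $\OPT$ uses at most one medium item. (iv) The small items inside $C_i$ always have total size strictly below $r$: listing them in non-increasing order of size with prefix sums $P_0=0<P_1<\dots<P_p$, if $P_p\ge r$ then the first prefix sum $P_{j^*}\ge r$ satisfies $P_{j^*}<r+r^2=R$, hence $P_{j^*}>1$ by the tight case; unwinding, either the two largest small items both exceed $1-r$ yet sum below $r$ (forcing $r>2/3$) or the largest small item exceeds $1-r^2$ while being at most $r^2$ (forcing $r>1/\sqrt2$) --- both impossible, because $R\le 10/9$ is exactly $r\le 2/3$, which in turn yields the ordering $0<r/2<1-r<r^2<\tfrac12<r<1$ used in Figure~\ref{fig:sizes1}. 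In particular $s(B_n\cap S)<r$.

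The remaining work is a case analysis comparing $\OPT(I)$ against the best transfer from $B_n=\{m^*\}\cup T$ (with $m^*$ absent if no medium item arrived and $T\subseteq S$). If $\OPT$ uses no medium item, every small item it uses is still in $B_n$ --- a small item is dropped by the greedy branch only after the buffer size has exceeded $1$, and by (iv) this requires a medium item present whose size together with $s(T)$ exceeds $1$, a configuration that either cannot occur or else pins down $s(m^*)$ and the largest small item so tightly that $\OPT$ is correspondingly small --- so $\ALG(I)\ge\OPT(I)$. If $\OPT$ uses the medium item $\mu$, then $m^*\in B_n$ with $s(m^*)\le s(\mu)<r$, and one combines $m^*$ with the surviving small items of $\OPT$'s packing, using the thresholds $1-r$ and $r/2$ among the small sizes (a small item of size at most $1-r$ fits alongside any item of size at most $r$, and two small items of size at most $r/2$ together have size at most $r$) to decide which small items to add, so as to build a feasible transfer of size at least $r\cdot\OPT(I)$.

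I expect the main obstacle to be precisely this last case. When $\OPT$ packs the medium item $\mu$ together with several small items, some of which the greedy branch has discarded (which, by the analysis above, can happen only for $R>1$), one must argue that $B_n$ nevertheless retains enough small mass, in small items of the right sizes, to reach $r\cdot\OPT(I)$. This is where fact (iv) must be combined with the exact placement of $1-r$ and $r/2$ among the small sizes, and where the hypothesis $R\le 10/9$ is genuinely used --- its failure is exactly why a different algorithm is needed for $R>10/9$.
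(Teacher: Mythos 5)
Your structural setup is sound and largely parallels the paper's: fact (i) (no large items) and fact (iii) (any two medium items together exceed size $1$) are exactly the paper's opening observations, and facts (ii) and (iv) (the buffer always holds at most one medium item---the smallest seen so far---and the small items currently visible to the algorithm always total less than $r$) are correct consequences of the tight case that the paper uses only implicitly. Your derivation of (iv) from the prefix-sum dichotomy is valid, though as written the ``unwinding'' step only treats $j^*\in\{1,2\}$ explicitly; for $j^*\ge 3$ you need the observation that all $j^*-1$ items preceding the critical one have size $>1-r$, so $P_{j^*-1}>2(1-r)\ge r$, contradicting $P_{j^*-1}<r$ when $r\le 2/3$.

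The proof, however, stops short of where the actual work lies. Everything after ``The remaining work is a case analysis'' is a description of what a proof would have to do, not a proof. In the case where $\OPT$ uses no medium item you write that small items are dropped only when ``the buffer size has exceeded $1$'' and that the resulting configuration ``either cannot occur or else pins down $s(m^*)$ and the largest small item so tightly that $\OPT$ is correspondingly small''; neither branch of this dichotomy is established. In the case where $\OPT$ does use a medium item, the whole argument is one sentence saying one should use ``the thresholds $1-r$ and $r/2$ among the small sizes \ldots to build a feasible transfer of size at least $r\cdot\OPT(I)$.'' This is precisely the technical core of the theorem, and you acknowledge as much (``I expect the main obstacle to be precisely this last case''). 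The paper resolves it by a different decomposition: it splits on whether all small items survive in $B_n$, and when one is dropped, it fixes the first round $j$ at which some small item $e_i$ leaves the buffer and runs a four-way case analysis on $s(e_i)\ge r/2$ versus $s(e_i)<r/2$, whether $B_j$ contains a medium item, and whether $B_j$ contains a small item of size $\ge r/2$---each branch ending in a contradiction via the tight-case assumption and the chain of inequalities around $1-r$, $r/2$, and $r^2$. Until you carry out an analysis at that level of detail, the claim that a subset of $B_n$ of size in $[r,1]$ exists (or that $\ALG\ge r\cdot\OPT$ directly) is unsupported, and the proposal should be regarded as an outline with a genuine gap rather than a complete proof.
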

\begin{proof}
Let $I\coloneqq (e_1,\dots,e_n)$ be the input sequence.
If there exists a large item $e_i$, the competitive ratio is at most $1/r=\frac{1+\sqrt{1+4R}}{2R}$ by $r\le s(e_i)\le 1$.
If there exist two medium items $e_i, e_j$ such that $s(e_i)+s(e_j)\le 1$, the competitive ratio is at most $1/r=\frac{1+\sqrt{1+4R}}{2R}$ by $r<2r^2<s(e_i)+s(e_j)\le 1$.
In what follows, 
we assume that all the input items are not large and every pair of medium items cannot be packed into the knapsack together.
In addition, suppose that $s(B_n)\not\in[r,1]>1$ since otherwise the competitive ratio is at most $1/r=\frac{1+\sqrt{1+4R}}{2R}$.
By the algorithm, this additional assumption means $s(B')\not\in[1,r]$ for any $B'\subseteq B_{i-1}\cup\{e_i\}$ with $i\in\{1,\dots,n\}$.

If $\{e_1,\dots,e_n\}\cap S\subseteq B_n$, the competitive ratio is at most
\begin{align*}
  \frac{r+s(\{e_1,\dots,e_n\}\cap S)}{r^2+s(\{e_1,\dots,e_n\}\cap S)}\le \frac{1}{r}=\frac{1+\sqrt{1+4R}}{2R}.
\end{align*}
Otherwise, i.e., $\{e_1,\dots,e_n\}\cap S\not\subseteq B_n$,
let $e_i$ be a small item that is not in $B_n$,
and $j$ be the smallest index such that $j\ge i$ and $e_i\not\in B_j$.
Note that $e_i\in B_{j-1}\cup\{e_j\}$.
We have four cases to consider.
\begin{description}
\item[Case 1:] Suppose that $s(e_i)\ge r/2$. In this case, there exists $e'\in B_j$ such that $r^2\ge s(e')\ge s(e_i)$.
Thus, we have $r\le s(e_i)+s(e')\le 1$, a contradiction.

\item[Case 2:] Suppose that there exists no medium item in $B_j$.
Then, there exists $B'\subseteq B_{j-1}\cup\{e_j\}$ such that $r\le s(B_j)\le 1$, because $s(B_{j-1}\cup\{e_j\})>R$ and all the items in $B_{j-1}\cup\{e_j\}$ are small. 
This is a contradiction.

\item[Case 3:] Suppose that $s(e)<r/2$ for any $e\in B_j\cap S$.
Then, we have $r\le s(B_j)\le 1$, a contradiction.

\item[Case 4:] Let us consider the other case, i.e., 
$s(e_i)<r/2$, $\exists e\in B_j\cap M$, and $\exists e'\in B_j\cap S$ such that $s(e')\ge r/2$.
Then, $s(B_j)-s(e)+s(e_i)\ge R-s(e)\ge R-r^2=r$.
Also, $s(B_j)-s(e')\le R-s(e')\le R-r/2=r^2+r/2\le 1$.
By the additional assumption, we have $s(B_j)-s(e)+s(e_i)>1$ and $s(B_j)-s(e')<r$.
Thus, we have 
\begin{align*}
s(B_j)
>1+s(e)-s(e_i)
>1+r^2-r/2
=(1-r)^2+3r/2
\ge r+r/2\ge r+s(e')
>s(B_j),
\end{align*}
which is a contradiction. \qedhere
\end{description}
\end{proof}

\subsubsection{Upper bound for \texorpdfstring{$\frac{10}{9}\le R\le \frac{1+\sqrt{2}}{2}$}{frac{10}{9}le Rle frac{1+sqrt{2}}{2}}}
Recall that $r>0$ is a real such that $r+r^2=R$, i.e., $r=\frac{\sqrt{1+4R}-1}{2}$.
For $\frac{10}{9}\le R\le \frac{1+\sqrt{2}}{2}$, we have $2/3\le r\le 1/\sqrt{2}$ and $1-r\le r/2\le r^2\le 1/2<r<1$.
In this subsubsection, an item \(e\) is called {\em small}, {\em medium}, and {\em large}
if \(s(e)\le 1-r\), \(1-r<s(e)<r\), and \(r\le s(e)\), respectively.
Let $S$, $M$, and $L$ respectively denote the sets of small, medium, and large items.
In addition, $M$ is further partitioned into three subsets $M_i$ $(i=1,2,3)$, 
where $M_1$, $M_2$, $M_3$ respectively denote the set of the items $e$ with size $1-r<s(e)\le r/2$, $r/2< s(e)<r^2$, and $r^2\le s(e)<r$
(see Figure~\ref{fig:sizes2}).
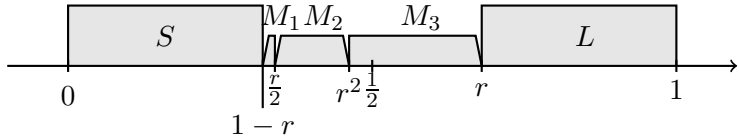
\begin{figure}[htbp]
\begin{center}
  \begin{tikzpicture}[scale=0.8]
    \draw[thick,->] (-1,0)--(11,0);
    \filldraw[thick,fill opacity=0.1] (0,0)--(0,1)--(3.20,1)--(3.20,0);
    \filldraw[thick,fill opacity=0.1] (3.20,0)--(3.30,0.5)--(3.40,0.5)--(3.40,0);
    \filldraw[thick,fill opacity=0.1] (3.40,0)--(3.50,0.5)--(4.52,0.5)--(4.62,0);
    \filldraw[thick,fill opacity=0.1] (4.62,0)--(4.62,0.5)--(6.70,0.5)--(6.80,0);
    \filldraw[thick,fill opacity=0.1] (6.80,0)--(6.80,1)--(10,1)--(10,0);
    \draw[thick] (0.00,-0.1)--(0.00,0.1);
    \draw[thick] (3.20,-0.7)--(3.20,0.1);
    \draw[thick] (3.40,-0.1)--(3.40,0.1);
    \draw[thick] (4.62,-0.1)--(4.62,0.1);
    \draw[thick] (5.00,-0.1)--(5.00,0.1);
    \draw[thick] (6.80,-0.1)--(6.80,0.1);
    \draw[thick] (10.00,-0.1)--(10.00,0.1);
    \draw (1.6,0.5) node {$S$};
    \draw (3.5,0.8) node {$M_1$};
    \draw (4.2,0.8) node {$M_2$};
    \draw (5.8,0.8) node {$M_3$};
    \draw (8.5,0.5) node {$L$};
    \draw (0,-0.5) node {$0$};
    \draw (3.40,-0.4) node {$\frac{r}{2}$};
    \draw (3.20,-1) node {$1-r$};
    \draw (4.62,-0.4) node {$r^2$};
    \draw (5.00,-0.4) node {$\frac{1}{2}$};
    \draw (6.80,-0.4) node {$r$};
    \draw (10.0,-0.4) node {$1$};
  \end{tikzpicture}
  \vskip -4mm
  \caption{Item partition for $10/9\le R\le \frac{1+\sqrt{2}}{2}$.}\label{fig:sizes2}
\end{center}
\end{figure}

We consider Algorithm~\ref{alg:small2} for the problem.
If the algorithm can select a set of items $B'$ such that $r\le s(B')\le 1$, it keeps the set $B'$ until the end.
Otherwise, it partitions the buffer into two spaces with size $r$ and $r^2$.
All the small items are taken into the first space.
If the set of medium items is of size at least $r^2$, then the smallest its subset $B'$ with size at least $r^2$ is selected into the first space.
If the set of medium items is of size at most $r^2$, then all of them are selected into the first space.
If there are remaining medium items, the smallest one is kept in the second space if its size is smaller than $r^2$.
We show that the algorithm is optimal when $\frac{10}{9}\le R\le \frac{1+\sqrt{2}}{2}$.

\begin{algorithm}[htb]
  \caption{$\frac{1+\sqrt{1+4R}}{2R}$-competitive algorithm for $\frac{10}{9}\le R\le \frac{1+\sqrt{2}}{2}$}\label{alg:small2}
  $B_0\ot\emptyset$, $B_0^{(1)}\ot\emptyset$, $B_0^{(2)}\ot\emptyset$\;
  \For{\(i\ot 1,2,\dots\)}{%
    \lIf{$\exists B'\subseteq B_{i-1}\cup\{e_i\}$ such that $r\le s(B')\le 1$\label{line:small2_c1}}{
      $B_i^{(1)}\ot B'$ and $B_i^{(2)}\ot \emptyset$
    }
    %\ElseIf{$\{B'\subseteq (B_{i-1}\cup\{e_i\})\cap M\mid r^2\le s(B')<r\}\ne\emptyset$}{
    \ElseIf{$s((B_{i-1}\cup\{e_i\})\cap M)\ge r^2$\label{line:small2_c2}}{
      let $T_i\in\argmin\{s(B')\mid B'\subseteq (B_{i-1}\cup\{e_i\})\cap M,~s(B')\ge r^2\}$\;
      $B_i^{(1)}\ot T_i\cup ((B_{i-1}\cup\{e_i\})\cap S)$\;
      \If{$B_i^{(1)}\ne B_{i-1}\cup\{e_i\}$}{
        let $a\in\argmin\{s(e)\mid e\in B_{i-1}\cup\{e_i\}\setminus B_i^{(1)}\}$\;
        \lIf{$a\in M_1\cup M_2$}{$B_i^{(2)}\ot \{a\}$}
      }
    }
    \label{line:small2_c3}\lElse{
      $B_i^{(1)}\ot B_{i-1}\cup\{e_i\}$ and $B_i^{(2)}\ot \emptyset$
    }
    $B_i\ot B_i^{(1)}\cup B_i^{(2)}$\;
  }
\end{algorithm}

\begin{theorem}\label{thm:10/9_to_(1+sqrt 2)/2}
Algorithm~\ref{alg:small2} is $\frac{1+\sqrt{1+4R}}{2R}$-competitive 
for the proportional\&removable online knapsack problem with a buffer 
when $10/9\le R\le \frac{1+\sqrt{2}}{2}$.
\end{theorem}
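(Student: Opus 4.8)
\noindent
The plan is to establish the equivalent inequality $\ALG(I)\ge r\cdot\OPT(I)$ for every input $I$, where $\ALG$ is Algorithm~\ref{alg:small2} and $r\in[2/3,1/\sqrt2]$ is the root of $r+r^2=R$, so that $1/r=\frac{1+\sqrt{1+4R}}{2R}$ and $1-r\le r/2\le r^2\le 1/2<r<1$. First I would dispose of the easy cases. If the witness test on line~\ref{line:small2_c1} ever succeeds, then from that round onward the algorithm keeps a set of total size in $[r,1]$ (such a set stays a valid witness in every later round), so $\ALG(I)\ge r\ge r\cdot\OPT(I)$ since $\OPT(I)\le 1$. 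Assume henceforth that this test never succeeds, and record: \textbf{(i)} no large item ever appears, since a single large item is a witness; \textbf{(ii)} the algorithm never discards a small item, hence $B_n\cap S$ is exactly the set of all small items of $I$, whose total size I call $\sigma$; \textbf{(iii)} no subset of any $B_{i-1}\cup\{e_i\}$ has total size in $[r,1]$; and \textbf{(iv)} once the test on line~\ref{line:small2_c2} succeeds it keeps succeeding (then $s(B_i\cap M)\ge s(T_i)\ge r^2$), so the run is a block of \emph{branch-3} rounds, discarding nothing, followed by a block of \emph{branch-2} rounds. Hence at the final round the algorithm is either in branch~3, where $B_n=I$ and $\ALG(I)=\OPT(I)$ and we are done, or in branch~2, where $B_n^{(1)}=T_n\cup(B_n\cap S)$ with $s(T_n)\in[r^2,r)$---indeed by (iii) $s(T_n)<r$ or $s(T_n)>1$, and in the latter case the single item of $T_n$ that pushes its running total past $r^2$ already has size exceeding $1-r^2\ge r^2$ (using $r\le 1/\sqrt2$), contradicting the minimality of $T_n$---and $B_n^{(2)}$ is empty or a single item of $M_1\cup M_2$. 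So we may assume the final round is in branch~2.

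The engine is the \emph{small-item lemma}: if $N$ is a set of medium items present at a common round $i$ with $s(N)<r$, then $\sigma+s(N)<r$; otherwise, adjoining the small items of $B_{i-1}\cup\{e_i\}$ to $N$ one at a time raises the running total by at most $1-r$ per step, so the first total that reaches $r$ lies in $[r,1)$, contradicting (iii). Taking $N=T_n$ yields $s(B_n^{(1)})=\sigma+s(T_n)<r\le 1$, hence $\ALG(I)\ge s(B_n^{(1)})\ge r^2+\sigma$; it also yields $s(B_n)=\sigma+s(T_n)+s(B_n^{(2)})<r+r^2=R$ (using $s(B_n^{(2)})<r^2$), confirming the algorithm is well defined. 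Since always $\OPT(I)\le\sigma+s(B_{\OPT}\cap M)$, if the bound $\ALG(I)\ge r^2+\sigma$ does not already give $\ALG(I)\ge r\cdot\OPT(I)$, then $\OPT(I)>r+\sigma/r$, which forces $s(B_{\OPT}\cap M)>r$ and, since a single medium item has size $<r$ while every medium item has size $>1-r\ge 1-1/\sqrt2$, also $|B_{\OPT}\cap M|\in\{2,3\}$. Thus it remains only to treat the case in which $\OPT$ packs two or three medium items whose total size lies in $(r,1]$.

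For that case I would run a case analysis on $|B_{\OPT}\cap M|\in\{2,3\}$ and, within each, on the $M_1/M_2/M_3$-membership of the medium items used by $\OPT$ and of those retained by the algorithm (the block $T_n$, and the singleton of $B_n^{(2)}$ when present), in the spirit of the $R=1$ analysis of~\cite{IT2002} and of Theorem~\ref{thm:proportional_removable_<10/9}. In every pattern the target is $\ALG(I)\ge r\cdot\OPT(I)$, with $\ALG(I)$ estimated by the best feasible sub-packing of $B_n^{(1)}\cup B_n^{(2)}$: this is at least $\sigma+s(T_n)\ge r^2+\sigma$ always, and at least $\sigma+s(T_n)+s(B_n^{(2)})$ (or $1$) when the extra item fits. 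The levers are the threshold $s(T_n)\ge r^2$ and the minimality of $T_n$; the size bracket $(1-r,r^2)$ of a $B_n^{(2)}$ item, so that keeping it only when it lies in $M_1\cup M_2$ is deliberate; the small-item lemma, which makes small items scarce whenever $\OPT$ uses a big medium item; and consequence (iii), which caps the sizes and multiplicities of the medium items that $\OPT$ can combine.

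The main obstacle is precisely this last case analysis: verifying, across all type-patterns, that the retained-and-repacked value never drops below $r\cdot\OPT(I)$ when $\OPT$ fills the knapsack with two or three medium items while the algorithm, forced into the greedy branch, keeps only the minimal block $T_n$ (of size barely above $r^2$) plus at most one further small-medium item. The three-way split $M=M_1\cup M_2\cup M_3$ is what makes these inequalities go through, and the endpoints $r=2/3$ and $r=1/\sqrt2$ (i.e.\ $R=10/9$ and $R=(1+\sqrt2)/2$) are exactly the values at which the $M_1/M_2$ and $M_2/M_3$ boundaries become tight.
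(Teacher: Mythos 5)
Your framework matches the paper's opening moves: assume the witness test on line~\ref{line:small2_c1} never fires, deduce there are no large items and no small items are ever dropped, pin $s(T_n)\in[r^2,r)$ via (iii), and reduce to the case where $\OPT$ must commit two or three medium items of total size exceeding $r$. The small-item lemma applied at round $n$ is sound, and the reduction $\OPT(I)>r+\sigma/r\Rightarrow s(B_{\OPT}\cap M)>r\Rightarrow |B_{\OPT}\cap M|\in\{2,3\}$ is correct (using $4(1-r)>1$, valid since $r\le 1/\sqrt2<3/4$).

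However, there is a genuine gap: the case analysis that you defer — and that you yourself label ``the main obstacle'' — is the substance of the proof, not a technicality. In the paper this is Lemmas~\ref{lem:10/9_to_(1+sqrt 2)/2:OPT1}--\ref{lem:10/9_to_(1+sqrt 2)/2:OPT2_2}, and it is where the theorem's range of validity is actually determined. For example, Case~4 of Lemma~\ref{lem:10/9_to_(1+sqrt 2)/2:OPT2_2} bottoms out at the inequality $\frac{2r-1}{1-r}\le\frac{1}{r}$, which is tight precisely at $r=1/\sqrt2$ (i.e.\ $R=\frac{1+\sqrt2}{2}$); Case~2 similarly uses $r\le 1/\sqrt2$. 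These are exactly the places where one must verify that keeping the minimal block $T_n$ of size ``barely above $r^2$'' plus at most one extra $M_1\cup M_2$ item suffices, and they rely on an auxiliary tool you have not isolated: the paper's Lemma~\ref{lem:10/9_to_(1+sqrt 2)/2:OPT2_min}, which says that if the globally smallest kept item $e^*$ satisfies $s(e^*)+s(e_{i_1})\ge r$ or $s(e^*)+s(e_{i_2})\ge r$ then a witness must already have appeared. Without that lemma, bounding $s(\OPT)$ from above in the $|\OPT|=2$ subcases does not go through. So as written, the proposal establishes the reduction but not the theorem.

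A secondary point: you also make the remaining work harder than necessary by refusing to eliminate $\sigma$. The paper cancels small items outright via the monotone fraction inequality $\frac{s(\OPT_M)+\sigma}{s(T_n)+\sigma}\le\frac{s(\OPT_M)}{s(T_n)}$ (valid since $s(T_n)\le 1$ implies $s(T_n)\le s(\OPT_M)$), which is legitimate because, once the witness never fires, deleting the small items from $I$ leaves the algorithm's medium-item decisions ($T_i$ and $B_i^{(2)}$) unchanged. This reduces at once to $I\subseteq M$, and the case analysis proceeds without any $\sigma$ bookkeeping. Your small-item lemma is a correct but weaker substitute that would force you to track $\sigma$ through every one of the patterns you enumerate.
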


Let $I\coloneqq(e_1,\dots,e_n)$ be the input sequence and
let $I_k\coloneqq\{e_1,\dots,e_k\}$ be the first $k$ items of $I$.

To prove the theorem, we first provide some observations.
\begin{observation}\label{obs:10/9_to_(1+sqrt 2)/2:valid_bin}
For each $i$, we have
(i) $s(B_i^{(1)})<r$ and $s(B_i^{(2)})\le r^2$ or
(ii) $r\le s(B_i^{(1)})\le 1$ and $s(B_i^{(2)})=0$.
\end{observation}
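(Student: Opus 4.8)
The plan is to verify the dichotomy one branch of Algorithm~\ref{alg:small2} at a time, using throughout the relations $2/3\le r\le 1/\sqrt2$ and $1-r\le r/2\le r^2\le 1/2<r<1$ recorded above the algorithm. The first branch (line~\ref{line:small2_c1}) is immediate: there $B_i^{(1)}=B'$ with $r\le s(B')\le 1$ and $B_i^{(2)}=\emptyset$, which is exactly alternative~(ii). In the other two branches the guard says that no $B'\subseteq B_{i-1}\cup\{e_i\}$ has $s(B')\in[r,1]$; call this property $(\star)$. From $(\star)$ I would extract two facts: $B_{i-1}\cup\{e_i\}$ contains no large item (a large item is by itself a size-$[r,1]$ subset), and the set $P\coloneqq(B_{i-1}\cup\{e_i\})\cap S$ of small items has $s(P)<r$ --- indeed $s(P)\notin[r,1]$ by $(\star)$, while $s(P)>1$ is impossible because every small item has size $\le 1-r$, so a greedily chosen sub-collection of $P$ would have total size in $[r,1]$. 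I would also note that $(\star)$ already rules out alternative~(ii) in both branches (its first coordinate would land in $[r,1]$), so in each of them the task is to prove~(i).

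For the third branch (line~\ref{line:small2_c3}) the extra guard gives $s((B_{i-1}\cup\{e_i\})\cap M)<r^2$, and since two medium items would have total size $>2(1-r)>r^2$ (using $r\le 1/\sqrt2<\sqrt3-1$), there is at most one medium item; hence $B_i^{(1)}=B_{i-1}\cup\{e_i\}$ equals $P$ together with at most one medium item of size $<r^2$, and $B_i^{(2)}=\emptyset$. By $(\star)$ we have $s(B_i^{(1)})\notin[r,1]$, so it suffices to exclude $s(B_i^{(1)})>1$. If that held, I would remove the elements of $P$ from $B_i^{(1)}$ one at a time: the running total decreases by at most $1-r$ per step, starts above $1$, and ends at a subset of size $<r$ (the lone medium item, or $\emptyset$), so some intermediate subset has total size in $(r,1]$, contradicting $(\star)$.

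The bulk of the work, and the main obstacle, is the second branch (line~\ref{line:small2_c2}), where $B_i^{(1)}=T_i\cup P$ with $T_i$ a minimum-size set of medium items satisfying $s(T_i)\ge r^2$, and $B_i^{(2)}$ is either $\emptyset$ or a singleton from $M_1\cup M_2$. Since every element of $M_1\cup M_2$ has size $<r^2$, the bound $s(B_i^{(2)})\le r^2$ is automatic, so it remains to prove $s(B_i^{(1)})<r$, i.e.\ --- by $(\star)$ --- that $s(B_i^{(1)})\le 1$. First I would bound $s(T_i)$: if $|T_i|=1$ it is a single medium item, so $s(T_i)<r$; if $|T_i|\ge 2$, minimality forces $s(T_i)-\min_{e\in T_i}s(e)<r^2$, and using $\min_{e\in T_i}s(e)\le \tfrac12 s(T_i)$ this gives $s(T_i)<2r^2\le 1$, whence $(\star)$ again yields $s(T_i)<r$. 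Finally, if $s(T_i\cup P)>1$, I would remove the elements of $P$ from $T_i\cup P$ one at a time: the total drops by at most $1-r$ each step, starts above $1$, and ends at $s(T_i)<r$, so once more some intermediate subset has size in $(r,1]$, contradicting $(\star)$; hence $s(B_i^{(1)})\le 1$, completing the case. The single recurring device --- invoked in all three arguments --- is this ``sliding-window'' removal of small items, which works exactly because the forbidden window $[r,1]$ has width $1-r$ equal to the maximum size of a small item; combined with the minimality estimate on $s(T_i)$ (the one genuinely case-specific computation), that is all that is needed, the rest being routine bookkeeping with the fixed inequalities on $r$.
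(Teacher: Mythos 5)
Your proof is correct and follows essentially the same route as the paper's: handle the first branch directly, and in the other two branches argue by contradiction that if $s(B_i^{(1)})\ge r$ then removing small items one by one (each of size at most $1-r$, the width of the forbidden window $[r,1]$) would land in $[r,1]$, contradicting the branch guard. The one place you go beyond the paper is the explicit bound $s(T_i)<2r^2\le 1$ (hence $<r$ by the guard), which justifies that the removal process does indeed terminate below the window; the paper asserts the existence of such a subset without spelling out this endpoint estimate, so your extra step is a legitimate and useful filling-in rather than a different argument.
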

\begin{proof}
Let us consider round $i$.
If the condition in Line~\ref{line:small2_c1} is satisfied, we have $r\le s(B_i^{(1)})\le 1$ and $s(B_i^{(2)})=0$.

Suppose that the condition in Line~\ref{line:small2_c2} is satisfied, i.e., 
$s(B')\not\in[r,1]$ for any $B'\subseteq B_{i-1}\cup\{e_i\}$ and $s((B_{i-1}\cup\{e_i\})\cap M)\ge r^2$.
We have $s(B_i^{(2)})\le r^2$ since $B_i^{(2)}$ is the empty set or the singleton of an item in $M_1\cup M_2$.
Suppose to the contrary that $s(B_i^{(1)})\ge r$.
Then, as $s(B')\not\in[r,1]$ for any $B'\subseteq B_{i-1}\cup\{e_i\}$, we have $s(B_i^{(1)})>1$.
Thus, by removing some small items from $B_i^{(1)}$, we can obtain a subset with size in $[r,1]$, a contradiction

Suppose that the condition in Line~\ref{line:small2_c3} is satisfied, i.e.,
$s(B')\not\in[r,1]$ for any $B'\subseteq B_{i-1}\cup\{e_i\}$ and $s((B_{i-1}\cup\{e_i\})\cap M)< r^2$.
In this case, it is clear that $s(B_i^{(2)})=0$.
In addition, we have $s(B_i^{(1)})<r$, since otherwise we can obtain a subset with size in $[r,1]$ by a similar argument as above.
\end{proof}

\begin{observation}\label{obs:10/9_to_(1+sqrt 2)/2:M1M2}
If $a,b\in M_1\cup M_2$, then $r^2\le s(a)+s(b)\le 1$.
\end{observation}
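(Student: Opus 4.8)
The plan is to deduce both inequalities from the size bounds defining $M_1\cup M_2$ together with the two facts about $r$ already recorded for this regime, namely $r\le 1/\sqrt 2$ and $r+r^2=R\le\frac{1+\sqrt 2}{2}$. First I would unwind the definitions: $M_1$ consists of the items with size in $(1-r,\,r/2]$ and $M_2$ of the items with size in $(r/2,\,r^2)$, so $M_1\cup M_2$ is precisely $\{e : 1-r<s(e)<r^2\}$. Consequently, for any $a,b\in M_1\cup M_2$ we have the sandwich $2(1-r)<s(a)+s(b)<2r^2$, and it suffices to show $r^2\le 2(1-r)$ and $2r^2\le 1$.

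The upper bound $2r^2\le 1$ is immediate from $r\le 1/\sqrt 2$ (which is tight, since $r=1/\sqrt 2$ exactly when $R=\frac{1+\sqrt 2}{2}$), so $s(a)+s(b)<2r^2\le 1$. For the lower bound I would rewrite $r^2\le 2(1-r)$ as $r^2+r+r\le 2$, i.e.\ $R+r\le 2$ via the identity $r+r^2=R$, and then bound $R+r\le \frac{1+\sqrt 2}{2}+\frac{1}{\sqrt 2}=\frac{1+2\sqrt 2}{2}\le 2$, the last step amounting to $2\sqrt 2\le 3$ (equivalently $8\le 9$). Hence $s(a)+s(b)>2(1-r)\ge r^2$, and combining with the upper bound gives $r^2\le s(a)+s(b)\le 1$.

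I do not anticipate a genuine obstacle here: every step is a one-line algebraic manipulation. The only point calling for a little care is to phrase the lower-bound condition $r^2\le 2(1-r)$ in a form that can be checked directly against the allowed interval $10/9\le R\le\frac{1+\sqrt 2}{2}$ and the recorded bound $r\le 1/\sqrt 2$ — using $r+r^2=R$ to turn it into $R+r\le 2$ — rather than invoking the irrational root $r=\sqrt 3-1$ of $r^2+2r-2$.
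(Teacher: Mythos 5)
Your proposal is correct and follows essentially the same route as the paper: bound $s(a)+s(b)$ from above using $s(a),s(b)<r^2$ together with $r\le 1/\sqrt 2$ (so $r^2\le 1/2$), and from below using $s(a),s(b)>1-r$ together with the fact that $2(1-r)\ge r^2$ holds in this regime. The only cosmetic difference is that the paper verifies $2(1-r)\ge r^2$ by noting $r\le\sqrt 2/2\le\sqrt 3-1$ (the root of $r^2+2r-2$), whereas you reach the same conclusion by rewriting it as $R+r\le 2$ via $r+r^2=R$ — an equivalent one-line check.
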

\begin{proof}
  It is because
  $s(a)+s(b)\le \frac{1}{2}+\frac{1}{2}=1$ and 
  $s(a)+s(b)\ge 2(1-r)\ge r^2$ by $r\le \sqrt{2}/2\le \sqrt{3}-1$.
\end{proof}

\begin{observation}\label{obs:10/9_to_(1+sqrt 2)/2:M2M2}
If $a,b\in M_2$, then $r\le s(a)+s(b)\le 1$.
\end{observation}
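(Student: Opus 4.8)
The plan is to prove the observation directly from the definition of $M_2$ together with the range of $r$ that holds throughout this subsubsection. Recall that here we are in the regime $10/9\le R\le \frac{1+\sqrt 2}{2}$, which (as recorded just before the item partition) gives $2/3\le r\le 1/\sqrt 2$, and that $M_2$ consists precisely of the items $e$ with $r/2<s(e)<r^2$.

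For the lower bound, since $a,b\in M_2$ we have $s(a)>r/2$ and $s(b)>r/2$, hence $s(a)+s(b)>r/2+r/2=r$, and in particular $s(a)+s(b)\ge r$. For the upper bound, again $s(a)<r^2$ and $s(b)<r^2$, so $s(a)+s(b)<2r^2$; it then suffices to note that $2r^2\le 1$, which is exactly the inequality $r\le 1/\sqrt 2$ guaranteed by $R\le \frac{1+\sqrt 2}{2}$ via $r+r^2=R$. Combining the two estimates yields $r\le s(a)+s(b)\le 1$, as claimed.

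There is essentially no obstacle here: the statement is a one-line consequence of the interval defining $M_2$ and the bound $r\le 1/\sqrt 2$. The only point requiring any care is to invoke the correct endpoint of the range of $r$—the upper bound $r\le 1/\sqrt 2$ for the ``$\le 1$'' part, with nothing beyond positivity of $r/2$ needed for the ``$\ge r$'' part—both of which follow immediately from $r+r^2=R$ and the hypothesis on $R$.
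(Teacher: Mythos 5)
Your proof is correct and follows essentially the same approach as the paper's: both use $s(a),s(b)>r/2$ for the lower bound and the constraint $r^2\le 1/2$ (equivalently $r\le 1/\sqrt 2$) for the upper bound. The only cosmetic difference is that the paper bounds each term directly by $1/2$, whereas you bound each by $r^2$ and then invoke $2r^2\le 1$; these are the same inequality.
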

\begin{proof}
  It is because
  $s(a)+s(b)\le \frac{1}{2}+\frac{1}{2}=1$ and
  $s(a)+s(b)\ge \frac{r}{2}+\frac{r}{2}=r$.
\end{proof}

\begin{observation}\label{obs:10/9_to_(1+sqrt 2)/2:M1M3}
If $a\in M_1$ and $b\in M_3$, then $s(a)+s(b)\ge r$.
\end{observation}
\begin{proof}
  It is because $s(a)+s(b)\ge (1-r)+r^2\ge r$.
\end{proof}

By the definition of the algorithm, we have the following two observations.
\begin{observation}\label{obs:10/9_to_(1+sqrt 2)/2:smallest_M-item}
  If $s(B_k^{(1)})<r$ and $I_k\cap M\ne\emptyset$,
  then $B_k\cap \argmin\{s(e)\mid e\in I_k\cap M\}\ne\emptyset$.
\end{observation}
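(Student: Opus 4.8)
The plan is to prove the observation by induction on $k$, carrying the invariant that whenever $s(B_k^{(1)})<r$ the buffer $B_k$ contains some medium item of size $m_k\coloneqq\min\{s(e)\mid e\in I_k\cap M\}$ (which is exactly $B_k\cap\argmin\{s(e)\mid e\in I_k\cap M\}\ne\emptyset$). The base case $k=0$ is vacuous since $I_0\cap M=\emptyset$. For the inductive step, assume $s(B_k^{(1)})<r$ and $I_k\cap M\ne\emptyset$. First, by Observation~\ref{obs:10/9_to_(1+sqrt 2)/2:valid_bin} the branch of Line~\ref{line:small2_c1} yields $s(B_k^{(1)})\ge r$, so round $k$ must use the branch of Line~\ref{line:small2_c2} or Line~\ref{line:small2_c3}. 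Moreover I claim round $k-1$ also satisfies $s(B_{k-1}^{(1)})<r$: otherwise, by Observation~\ref{obs:10/9_to_(1+sqrt 2)/2:valid_bin}, round $k-1$ used Line~\ref{line:small2_c1}, so $B_{k-1}=B_{k-1}^{(1)}$ has size in $[r,1]$; but then $B_{k-1}\subseteq B_{k-1}\cup\{e_k\}$ witnesses the condition of Line~\ref{line:small2_c1} at round $k$, contradicting that round $k$ is in a later branch. Hence (when $I_{k-1}\cap M\ne\emptyset$) the induction hypothesis applies at $k-1$ and gives a medium item $f\in B_{k-1}$ with $s(f)=m_{k-1}$.

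The second step is a reduction to $B_{k-1}\cup\{e_k\}$. I claim $\min\{s(e)\mid e\in(B_{k-1}\cup\{e_k\})\cap M\}=m_k$. If $I_{k-1}\cap M=\emptyset$, then $e_k\in M$ and $(B_{k-1}\cup\{e_k\})\cap M=\{e_k\}$, whose unique size equals $m_k$. Otherwise $(B_{k-1}\cup\{e_k\})\cap M$ contains $f$ (of size $m_{k-1}$) and, if $e_k\in M$, also $e_k$; its smallest size is therefore at most $\min\{m_{k-1},s(e_k)\}$ when $e_k\in M$ and at most $m_{k-1}$ when $e_k\notin M$, i.e.\ at most $m_k$, and it is at least $m_k$ since all its elements lie in $I_k\cap M$. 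So it suffices to show $B_k$ keeps a smallest medium item of $B_{k-1}\cup\{e_k\}$; write $g$ for such an item, so $s(g)=m_k$.

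The third step handles the two remaining branches. If round $k$ uses Line~\ref{line:small2_c3}, then $B_k=B_k^{(1)}=B_{k-1}\cup\{e_k\}\ni g$ and we are done. The main work, which I expect to be the main obstacle, is the branch of Line~\ref{line:small2_c2}, where $B_k^{(1)}=T_k\cup((B_{k-1}\cup\{e_k\})\cap S)$ and $T_k$ is a minimum-total-size subset of the medium items of size at least $r^2$. I split on whether $g\in M_3$ or $g\in M_1\cup M_2$. If $g\in M_3$, then every medium item of $B_{k-1}\cup\{e_k\}$ has size $\ge s(g)\ge r^2$, so every singleton is feasible for $T_k$ while every $\ge 2$-element subset has total size $\ge 2m_k>m_k$; hence $T_k$ is a singleton of size $m_k$, which lies in $B_k$. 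If $g\in M_1\cup M_2$ and $g\in T_k$, we are done immediately. If $g\in M_1\cup M_2$ but $g\notin T_k$, then $g\in(B_{k-1}\cup\{e_k\})\setminus B_k^{(1)}$, so $B_k^{(1)}\ne B_{k-1}\cup\{e_k\}$ and the algorithm chooses $a\in\argmin\{s(e)\mid e\in(B_{k-1}\cup\{e_k\})\setminus B_k^{(1)}\}$; that set equals $((B_{k-1}\cup\{e_k\})\cap M)\setminus T_k$, which contains $g$, so $s(a)=m_k=s(g)<r^2$, forcing $a\in M_1\cup M_2$, whence the algorithm sets $B_k^{(2)}=\{a\}$ and $a\in B_k$ has size $m_k$. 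In all cases $B_k$ contains a medium item of size $m_k$, completing the induction.

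The delicate points are (a) the ``Line~\ref{line:small2_c1} is absorbing'' argument that licenses invoking the induction hypothesis, and (b) in the Line~\ref{line:small2_c2} case, verifying that when the globally smallest medium item is displaced from $T_k$ it is exactly the item parked in $B_k^{(2)}$ --- this uses that $s(g)<r^2$ matches the $M_1\cup M_2$ threshold precisely. Everything else is bookkeeping with the partition thresholds (in particular $1-r\le r/2\le r^2\le 1/2<r<1$ for the range $10/9\le R\le(1+\sqrt2)/2$).
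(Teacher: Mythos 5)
Your proof is correct. The paper offers no argument at all for this observation---it is stated with the preface ``By the definition of the algorithm, we have the following two observations''---so your careful induction on $k$, carrying the invariant that $B_k$ retains a medium item of minimum size $m_k$ whenever $s(B_k^{(1)})<r$, supplies exactly what the authors leave implicit. The key points you identify are indeed the delicate ones: (a) the absorbing nature of Line~\ref{line:small2_c1} (once $s(B_{k-1}^{(1)})\in[r,1]$, round $k$ also satisfies the Line~\ref{line:small2_c1} condition, so $s(B_k^{(1)})<r$ forces $s(B_{k-1}^{(1)})<r$, which licenses the induction hypothesis), and (b) the reason the $B_k^{(2)}$ slot always rescues the displaced smallest medium item when $g\in M_1\cup M_2$, namely that $s(g)<r^2$ coincides exactly with the membership threshold for $M_1\cup M_2$ tested by the algorithm. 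Your handling of the $g\in M_3$ subcase (every singleton of $(B_{k-1}\cup\{e_k\})\cap M$ has size $\ge r^2$, so $T_k$ is a smallest-size singleton) and the reduction $\min\{s(e)\mid e\in(B_{k-1}\cup\{e_k\})\cap M\}=m_k$ are both sound; you also correctly use that $B_{k-1}\cup\{e_k\}$ contains no large item once Line~\ref{line:small2_c1} is known to fail, so the discarded set in the Line~\ref{line:small2_c2} branch is precisely $((B_{k-1}\cup\{e_k\})\cap M)\setminus T_k$. Nothing is missing; this is a complete and correct proof of an observation the paper treats as self-evident.
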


\begin{observation}\label{obs:10/9_to_(1+sqrt 2)/2:smallest_M-set}
  If $s(B_k^{(1)})<r$ and $s(I_k\cap M)\ge r^2$,
  then $B_k\cap M\in\argmin\{s(B')\mid B'\subseteq I_k\cap M,~s(B')\ge r^2\}$.
\end{observation}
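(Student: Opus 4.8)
\emph{Proof sketch.} The plan is to reduce the statement to a single inductive invariant about which medium items remain in the buffer once the test in Line~\ref{line:small2_c1} has never been executed. First, I would show that the hypothesis $s(B_k^{(1)})<r$ forces the test in Line~\ref{line:small2_c1} to be \emph{false} at every round $j\in\{1,\dots,k\}$: if it were true at some round $j\le k$, then $B_j=B_j^{(1)}$ with $r\le s(B_j)\le 1$ and $B_j^{(2)}=\emptyset$ by Observation~\ref{obs:10/9_to_(1+sqrt 2)/2:valid_bin}, so, since $B_j$ is a subset of $B_j\cup\{e_{j+1}\}$ of size in $[r,1]$, the test would be true again at round $j+1$, giving inductively $s(B_i^{(1)})\ge r$ for all $i\ge j$ and contradicting $s(B_k^{(1)})<r$. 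Hence every round $j\le k$ takes the Line~\ref{line:small2_c2} branch or the Line~\ref{line:small2_c3} branch.

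Next I would prove, by induction on $j\le k$, the invariant: (a)~$I_j\cap S\subseteq B_j$; and (b)~$B_j\cap M=I_j\cap M$ when $s(I_j\cap M)<r^2$, while $B_j\cap M\in\argmin\{s(B')\mid B'\subseteq I_j\cap M,\ s(B')\ge r^2\}$ when $s(I_j\cap M)\ge r^2$. Part (a) holds in both branches because $B_j^{(1)}\supseteq(B_{j-1}\cup\{e_j\})\cap S=I_j\cap S$ by the induction hypothesis. For part (b): in the Line~\ref{line:small2_c3} branch nothing is discarded, so $B_j\cap M=(B_{j-1}\cap M)\cup(\{e_j\}\cap M)$, and its condition $s((B_{j-1}\cup\{e_j\})\cap M)<r^2$ together with the induction hypothesis forces $s(I_j\cap M)<r^2$ and $B_j\cap M=I_j\cap M$; in the Line~\ref{line:small2_c2} branch, $B_j^{(1)}\cap M=T_j$ is by construction a minimum-size subset of the \emph{currently available} medium items $(B_{j-1}\cup\{e_j\})\cap M$ of size at least $r^2$, and the real work is to upgrade this to a minimum-size subset of \emph{all} of $I_j\cap M$ and to check that the item (if any) moved to $B_j^{(2)}$ does not break the minimality of $B_j\cap M$. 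Granting the invariant, round $k$ lies in the Line~\ref{line:small2_c2} branch --- the hypothesis $s(I_k\cap M)\ge r^2$ and the invariant at round $k-1$ make the Line~\ref{line:small2_c3} condition fail --- and part (b) at $j=k$ is exactly the claim.

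I expect the \textbf{main obstacle} to be this Line~\ref{line:small2_c2} step of the induction. Two things must be shown. First, enlarging the ground set of medium items from $(B_{j-1}\cup\{e_j\})\cap M$ to $I_j\cap M$ cannot create a strictly smaller feasible subset: any genuinely new feasible subset must contain $e_j$, hence has the form $\{e_j\}\cup P$ with $P$ a set of earlier medium items, and one invokes the induction hypothesis on $B_{j-1}\cap M$ to argue that $B_{j-1}\cap M$ already contains a subset at least as good as $P$ for reaching size $\ge r^2$. Second, when the smallest discarded medium item $a$ lies in $M_1\cup M_2$ and is returned to $B_j^{(2)}$, one must verify that $B_j\cap M=T_j\cup\{a\}$ remains a minimum-size feasible subset of $I_j\cap M$; this is the delicate point, and I would attack it by combining the size bounds of Observations~\ref{obs:10/9_to_(1+sqrt 2)/2:M1M2}, \ref{obs:10/9_to_(1+sqrt 2)/2:M2M2}, and \ref{obs:10/9_to_(1+sqrt 2)/2:M1M3} on pairs of medium items with the fact, established in the first step, that no subset of the available items ever has size in $[r,1]$ --- together these sharply restrict the sizes and the number of medium items that can sit in the buffer while $s(B_k^{(1)})<r$.
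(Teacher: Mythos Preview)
The paper does not actually prove this observation; it merely asserts that Observations~\ref{obs:10/9_to_(1+sqrt 2)/2:smallest_M-item} and~\ref{obs:10/9_to_(1+sqrt 2)/2:smallest_M-set} hold ``by the definition of the algorithm.'' Your plan is far more careful than that, and you have correctly pinpointed the only genuinely delicate step: showing that, when the extra item $a$ is placed in $B_j^{(2)}$, the set $B_j\cap M=T_j\cup\{a\}$ is still a minimiser.

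Unfortunately, that step cannot be carried out, because the statement as literally written is false. Take $R=1.15$ (so $r\approx 0.683$, $r^2\approx 0.467$) and the two-item input $e_1,e_2$ with $s(e_1)=0.65\in M_3$ and $s(e_2)=0.4\in M_2$. No subset of $\{e_1,e_2\}$ has size in $[r,1]$, so Line~\ref{line:small2_c1} never fires. In round~$2$ the algorithm computes $T_2=\{e_1\}$ (the unique subset of size $\ge r^2$ of minimum size), so $B_2^{(1)}=\{e_1\}$; then $a=e_2\in M_2$ is placed in $B_2^{(2)}$, giving $B_2=\{e_1,e_2\}$. Both hypotheses hold ($s(B_2^{(1)})=0.65<r$ and $s(I_2\cap M)=1.05\ge r^2$), yet $B_2\cap M=\{e_1,e_2\}$ has size $1.05$ while the minimum feasible size over $I_2\cap M$ is $s(\{e_1\})=0.65$. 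So $B_2\cap M\notin\argmin\{\cdots\}$, and the very obstacle you flagged is fatal.

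The intended statement is almost certainly about $B_k^{(1)}\cap M$ (which equals $T_k$) rather than $B_k\cap M$. With that correction the $B_j^{(2)}$ issue you identified disappears entirely, and the remaining part of your inductive outline --- lifting the $\argmin$ from $(B_{j-1}\cup\{e_j\})\cap M$ to all of $I_j\cap M$ --- is the right line of attack. Note finally that this observation is never invoked in the subsequent lemmas, so the slip does not affect the rest of the paper.
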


The following lemmas show the competitive ratio of Algorithm~\ref{alg:small2} 
when all the input items are medium, i.e., $I_n\subseteq M$.

\begin{lemma}\label{lem:10/9_to_(1+sqrt 2)/2:OPT1}
  If $I_n\subseteq M$ and $|\OPT| = 1$,
  then Algorithm~\ref{alg:small2} is $\frac{1+\sqrt{1+4R}}{2R}~(=1/r)$-competitive when $10/9\le R\le \frac{1+\sqrt{2}}{2}$.
\end{lemma}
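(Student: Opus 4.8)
The plan is to reduce everything to a single clean inequality: I will show that $\ALG(I)\ge r^2$, and this already suffices. Indeed, since $|\OPT|=1$, no two distinct items of $I$ can be placed together in the knapsack, so $\OPT(I)$ equals the size of a largest item of $I$; as every item of $I$ is medium, this largest size is strictly smaller than $r$. Hence, once $\ALG(I)\ge r^2$ is established, the competitive ratio for $I$ is $\OPT(I)/\ALG(I)< r/r^2=1/r=\frac{1+\sqrt{1+4R}}{2R}$, which is the desired bound. So the whole argument amounts to lower bounding what the algorithm retains.

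For the lower bound on $\ALG(I)$ I would invoke Observation~\ref{obs:10/9_to_(1+sqrt 2)/2:valid_bin} and split along its two cases. In case~(ii) we have $r\le s(B_n^{(1)})\le 1$, so $B_n^{(1)}$ is itself a knapsack-feasible subset of $B_n$ of size at least $r>r^2$, and we are done. In case~(i) we have $s(B_n^{(1)})<r$; I would then examine which branch Algorithm~\ref{alg:small2} executes in round $n$. It cannot be Line~\ref{line:small2_c1}, since that branch would give $s(B_n^{(1)})=s(B')\ge r$. So it is Line~\ref{line:small2_c2} or Line~\ref{line:small2_c3}.

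If it is Line~\ref{line:small2_c2}: since all items are medium there are no small items, hence $B_n^{(1)}=T_n$, and by the defining property of $T_n$ we get $s(B_n^{(1)})=s(T_n)\ge r^2$; together with $s(B_n^{(1)})<r\le 1$ from case~(i), $B_n^{(1)}$ is a knapsack-feasible subset of $B_n$ of size at least $r^2$, so $\ALG(I)\ge r^2$. If it is Line~\ref{line:small2_c3}: the failure of the test in Line~\ref{line:small2_c2} gives $s(B_{n-1}\cup\{e_n\})=s((B_{n-1}\cup\{e_n\})\cap M)<r^2$. But any two distinct medium items have total size at least $r^2$ (by Observation~\ref{obs:10/9_to_(1+sqrt 2)/2:M1M2} when both lie in $M_1\cup M_2$, and trivially when one lies in $M_3$, since an $M_3$-item already has size at least $r^2$), so $B_{n-1}\cup\{e_n\}$ must be a singleton, forcing $B_{n-1}=\emptyset$. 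Since each of the three branches produces a nonempty $B_i^{(1)}$, we have $B_i\ne\emptyset$ for every $i\ge 1$, so $B_{n-1}=\emptyset$ is possible only when $n=1$; but then $B_1=\{e_1\}=I$ and $\ALG(I)=\OPT(I)$, giving ratio $1\le 1/r$.

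As for difficulty, there is essentially no hard step: this is a base case, and once Observation~\ref{obs:10/9_to_(1+sqrt 2)/2:valid_bin} is available it is a short branch-by-branch check. The only points needing mild care are the ``no small items'' simplification $B_n^{(1)}=T_n$ and the degenerate Line~\ref{line:small2_c3}/$n=1$ situation; the final arithmetic, that $\OPT(I)<r$ together with $\ALG(I)\ge r^2$ yields ratio below $1/r$, is immediate from $r<1$ and $r+r^2=R$.
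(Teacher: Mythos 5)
Your proof is correct, and it takes a genuinely different (and somewhat more self-contained) route than the paper's. The paper cases on the class of the unique optimal item $x$: for $x\in M_3$ it bounds the ratio by $s(x)/s(B_n^{(1)})\le r/r^2$, relying on the implicit fact that $s(B_n^{(1)})\ge r^2$; for $x\in M_1\cup M_2$ it notes $s(x)\le 1/2$ forces $|I|=1$ (any second medium item would combine with $x$ to stay within capacity, contradicting $|\OPT|=1$). You instead argue from the algorithm's side: invoking Observation~\ref{obs:10/9_to_(1+sqrt 2)/2:valid_bin} and branching on which line of Algorithm~\ref{alg:small2} fires at round $n$, you make explicit exactly the fact the paper leaves implicit, namely $s(B_n^{(1)})\ge r^2$ whenever Line~\ref{line:small2_c2} runs, and you handle the Line~\ref{line:small2_c3} branch by the clean observation that $s(B_{n-1}\cup\{e_n\})<r^2$ combined with Observation~\ref{obs:10/9_to_(1+sqrt 2)/2:M1M2} forces $B_{n-1}=\emptyset$ and hence $n=1$. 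This is logically equivalent to the paper's dichotomy (its $M_3$ case corresponds to your Line~\ref{line:small2_c2} branch, its $M_1\cup M_2$ case to your Line~\ref{line:small2_c3} branch) but is more rigorous about why the algorithm's retained value is at least $r^2$. One harmless redundancy: case~(ii) of the observation ($s(B_n^{(1)})\ge r$) is in fact vacuous under the lemma's hypotheses, since $\ALG(I)\le\OPT(I)<r$ when all items are medium and $|\OPT|=1$; you correctly dispose of it anyway, so no issue.
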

\begin{proof}
  Let $\OPT=\{x\}$.
  If $x\in M_3$, the competitive ratio is at most $s(x)/s(B_n^{(1)})\le r/r^2=1/r$.
  If $x\in M_1\cup M_2$, then $|I|=1$ and the lemma holds by $s(x)\le 1/2$.
\end{proof}

\begin{lemma}\label{lem:10/9_to_(1+sqrt 2)/2:OPT3}
  If $I_n\subseteq M$ and $|\OPT| \ge 3$,
  then Algorithm~\ref{alg:small2} is $\frac{1+\sqrt{1+4R}}{2R}~(=1/r)$-competitive when $10/9\le R\le \frac{1+\sqrt{2}}{2}$.
\end{lemma}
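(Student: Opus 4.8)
The plan is to prove $\ALG(I)\ge r$; since $\OPT(I)\le 1$ this gives $\OPT(I)/\ALG(I)\le 1/r=\tfrac{1+\sqrt{1+4R}}{2R}$ as required (and in a residual configuration where $\ALG(I)<r$ I would fall back on the sharper bound $\OPT(I)\le s(x_1)+s(x_2)+s(x_3)$). First I would determine $\OPT$. Every item is medium, hence has size $>1-r$; thus $|\OPT|\ge 3$ forces $3(1-r)<1$, i.e.\ $r>2/3$, while $4(1-r)>1$ (as $r\le 1/\sqrt2<3/4$) forces $|\OPT|\le 3$, so $|\OPT|=3$. Write $\OPT=\{x_1,x_2,x_3\}$ with $s(x_1)+s(x_2)+s(x_3)\le 1$. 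Since each of the other two contributes more than $1-r$, we get $s(x_i)<1-2(1-r)=2r-1\le r^2$ (using $(r-1)^2\ge 0$), so $x_1,x_2,x_3\in M_1\cup M_2$; moreover any two of them have total size less than $1-(1-r)=r$ and, by Observation~\ref{obs:10/9_to_(1+sqrt 2)/2:M1M2}, at least $r^2$, while $s(x_1)+s(x_2)+s(x_3)>3(1-r)>r$.

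Next I would record a persistence fact: \emph{if at some round $i$ the set $B_{i-1}\cup\{e_i\}$ contains a subset of size in $[r,1]$, then $\ALG(I)\ge r$.} Indeed, Line~\ref{line:small2_c1} then fires, so $s(B_i^{(1)})\in[r,1]$ and $B_i^{(2)}=\emptyset$; thus $B_i$ is itself a subset of size in $[r,1]$, and since $B_i\subseteq B_i\cup\{e_{i+1}\}$ the hypothesis recurs at round $i+1$, and by induction $B_n$ contains such a subset. So I may assume this never occurs: Line~\ref{line:small2_c1} never fires, every round is handled by Line~\ref{line:small2_c2} or Line~\ref{line:small2_c3}, and by Observation~\ref{obs:10/9_to_(1+sqrt 2)/2:valid_bin} we have $s(B_i^{(1)})<r$ for all $i$. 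Because $\{x_1,x_2,x_3\}$ has total size in $(r,1]$, the three items are never simultaneously in one $B_{i-1}\cup\{e_i\}$; in particular $\ALG$ discards at least one of them (otherwise all three would be present when the last one arrives), so I may take the \emph{first} round $j$ at which an OPT item $x\in\{x_1,x_2,x_3\}$ is discarded, and then at least one OPT item must arrive strictly after round $j$ (otherwise $B_{j-1}\cup\{e_j\}$ would contain all three, since none was discarded before round $j$).

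Finally I would analyse round $j$ and reach a contradiction. A discard happens only in a Line~\ref{line:small2_c2} round, where $x\in M_1\cup M_2$ is a leftover but not the smallest; so by the algorithm the smallest leftover $a$ (of size $\le s(x)<r^2$, hence in $M_1\cup M_2$) is retained in $B_j^{(2)}$, and therefore $s(B_j)=s(T_j)+s(a)\ge r^2+(1-r)>r$ (again $(r-1)^2>0$). Since Line~\ref{line:small2_c1} never fires, $B_j$ has no subset of size in $[r,1]$, which forces $1<s(B_j)<r+r^2=R$; and since $3(1-r)>r$ forbids three or more medium items of total size $<r$, the set $T_j$ consists of a single item of $M_3$ or of two items of $M_1\cup M_2$, so $B_j$ is a set of two or three medium items with $s(B_j)\in(1,R)$. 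On this tightly constrained configuration I would run a finite case analysis --- combining the size-band inequalities of Observations~\ref{obs:10/9_to_(1+sqrt 2)/2:M1M2}, \ref{obs:10/9_to_(1+sqrt 2)/2:M2M2}, and~\ref{obs:10/9_to_(1+sqrt 2)/2:M1M3}, the minimality of $T_j$ among medium subsets of size $\ge r^2$, and the arrival/size structure of $x_1,x_2,x_3$ (any already-arrived $x_i$ is still in $B_{j-1}\cup\{e_j\}$, and the last-arriving $x_i$ together with two of the others gives a subset of size in $(r,1]$) --- to show that in each case either $B_j$, or $B_j$ augmented by a later-arriving OPT item, contains a subset of size in $[r,1]$, contradicting that Line~\ref{line:small2_c1} never fires, or else $s(T_j)$ is forced so small that $T_j$ cannot be the minimiser. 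This last step is where I expect the real difficulty --- in particular the configuration $B_j=\{t,a\}$ with $t\in M_3$, $a\in M_1\cup M_2$, $s(t)+s(a)>1$, which admits no $1$- or $2$-element subset of size in $[r,1]$ and must be excluded by a minimality argument against the small OPT items, together with the bookkeeping (which uses that a discarded item is never reinserted) needed to control the buffer at and after round $j$. The preceding reductions --- to $|\OPT|=3$, to $x_i\in M_1\cup M_2$, to the regime where Line~\ref{line:small2_c1} never fires, and to the first discarded OPT item --- are routine.
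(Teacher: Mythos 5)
Your preliminary reductions are sound and match the paper's setup: $|\OPT|=3$, the persistence argument for Line~\ref{line:small2_c1}, and (not strictly needed) $x_1,x_2,x_3\in M_1\cup M_2$. But your proof has a genuine gap where it matters most: after setting up the configuration at the first-discard round $j$, you explicitly leave the conclusion as an unfinished "finite case analysis... where I expect the real difficulty," so the argument is not actually closed. Moreover, the direction you're heading is the wrong one to try to close it — analyzing $B_j$ with $s(B_j)\in(1,R)$ does not by itself produce a subset in $[r,1]$, and the configuration $B_j=\{t,a\}$ with $t\in M_3$ that you flag as problematic is precisely where the first-discard bookkeeping becomes unwieldy.

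The paper's proof sidesteps all of this by looking at round $i_3-1$ (just before the \emph{last} OPT item arrives) rather than the first discard. The key step you missed: since $\{e_{i_1},e_{i_2}\}\subseteq I_{i_3-1}\cap M$ and $s(e_{i_1})+s(e_{i_2})\ge 2(1-r)\ge r^2$, Observation~\ref{obs:10/9_to_(1+sqrt 2)/2:smallest_M-set} forces $s(B^{(1)}_{i_3-1})\le s(e_{i_1})+s(e_{i_2})$ (the minimizer $T_{i_3-1}$ cannot be larger than the candidate $\{e_{i_1},e_{i_2}\}$). Then $r\le r^2+(1-r)\le s(B^{(1)}_{i_3-1})+s(e_{i_3})\le s(e_{i_1})+s(e_{i_2})+s(e_{i_3})\le 1$, so $B^{(1)}_{i_3-1}\cup\{e_{i_3}\}\subseteq B_{i_3-1}\cup\{e_{i_3}\}$ triggers Line~\ref{line:small2_c1} at round $i_3$, contradicting $B_n^{(1)}<r$. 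The entire argument is three lines once you invoke the right observation; no case analysis on $B_j$ is needed.
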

\begin{proof}
  We have $|\OPT| = 3$ by $4(1-r)>1$.

  Let $\OPT=\{e_{i_1},e_{i_2},e_{i_3}\}$ where $i_1<i_2<i_3$.
  We assume $B_{n}^{(1)}< r$, since otherwise the lemma clearly holds.
  Since $r^2\le2(1-r)\le s(e_{i_1})+s(e_{i_2})$, we have $s(B^{(1)}_{i_3-1})\le s(e_{i_1})+s(e_{i_2})$. 
  Hence, we have
  \(1\ge s(e_{i_1})+s(e_{i_2})+s(e_{i_3})\ge s(B^{(1)}_{i_3-1})+s(e_{i_3})\ge r^2+(1-r)\ge r\),
  which implies the lemma.
\end{proof}

\begin{lemma}\label{lem:10/9_to_(1+sqrt 2)/2:OPT2_M3}
  If $I_n\subseteq M$, $|\OPT| = 2$, and $\OPT\cap M_3 \neq \emptyset$,
  then Algorithm~\ref{alg:small2} is $\frac{1+\sqrt{1+4R}}{2R}~(=1/r)$-competitive when $10/9\le R\le \frac{1+\sqrt{2}}{2}$.
\end{lemma}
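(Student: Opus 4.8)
The plan is to lower-bound $\ALG(I)$ by pointing at a suitable subset of $B_n$. Write $\OPT=\{x,y\}$ with $y\in M_3$ (so $s(y)\ge r^2$) and $s(x)+s(y)\le 1$; since $x\in M$ we have $s(x)>1-r$, hence
\[
  \OPT(I)=s(x)+s(y)>(1-r)+r^2=(1-r)^2+r\ge r,
\]
the last inequality being strict because $r<1$ (as $R<2$). As $\OPT(I)\le 1$, it therefore suffices to exhibit a subset of $B_n$ of size in $[r,1]$, which gives $\ALG(I)\ge r\ge r\cdot\OPT(I)$; only in a couple of residual corners will we instead have to establish $\ALG(I)\ge r\cdot\OPT(I)$ directly. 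First I would invoke Observation~\ref{obs:10/9_to_(1+sqrt 2)/2:valid_bin}: if $s(B_n^{(1)})\ge r$ then $r\le s(B_n^{(1)})\le 1$, so $B_n^{(1)}$ is such a subset and we are done. Assume henceforth $s(B_n^{(1)})<r$, so $s(B_n^{(2)})\le r^2$ by the same observation.

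The key structural fact is that $x$ and $y$ are \emph{never} simultaneously available: if $\{x,y\}\subseteq B_{i-1}\cup\{e_i\}$ at some round $i$, then since $r<s(x)+s(y)\le 1$ the branch on line~\ref{line:small2_c1} fires at round $i$ and $s(B_j^{(1)})\ge r$ for all $j\ge i$, contradicting $s(B_n^{(1)})<r$. So one of $x,y$, call it $w$, is dropped at some round $j$ before the other arrives. Ruling out lines~\ref{line:small2_c1} and~\ref{line:small2_c3} at round $j$ (they would keep $s(B^{(1)})\ge r$, resp.\ keep $w$ itself), round $j$ executes line~\ref{line:small2_c2}: thus $B_j^{(1)}=T_j$ with $r^2\le s(T_j)<r$ and $w$ is among the leftovers, so $s(T_j)\le s(w)$ when $w\in M_3$ (minimality, as $\{w\}$ is feasible), and $B_j^{(2)}$ is a singleton in $M_1\cup M_2$ of size $\le s(w)$ when $w\in M_1\cup M_2$. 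Likewise round $n$ executes line~\ref{line:small2_c2} (line~\ref{line:small2_c1} fails since $s(B_n^{(1)})<r$, and line~\ref{line:small2_c3} is impossible for $n\ge 2$ because two medium items already have total size $\ge 2(1-r)\ge r^2$), so $B_n=T_n\cup B_n^{(2)}$ with $T_n$ a minimum-size subset of $I_n\cap M$ of size $\ge r^2$ (hence $r^2\le s(T_n)<r$) and $B_n^{(2)}$ empty or a singleton in $M_1\cup M_2$. Minimality of $T_n$ gives $s(T_n)\le s(y)$, since $\{y\}$ is a feasible subset of $I_n\cap M$.

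If $B_n^{(2)}=\{a\}$ with $s(T_n)+s(a)\le 1$, then $B_n$ itself has size $s(T_n)+s(a)>r^2+(1-r)=(1-r)^2+r>r$ (using $a\in M$, so $s(a)>1-r$), and $\ALG(I)\ge s(B_n)>r$ --- done. Hence the residual cases are (i) $B_n^{(2)}=\emptyset$ and (ii) $B_n^{(2)}=\{a\}$ with $s(T_n)+s(a)>1$. In each I would split further on the membership of $x$ (and of the dropped item $w$) in $M_1$, $M_2$, or $M_3$, invoking: the bucket sums of Observations~\ref{obs:10/9_to_(1+sqrt 2)/2:M1M2}--\ref{obs:10/9_to_(1+sqrt 2)/2:M1M3}; minimality of $T_n$ (e.g.\ $s(T_n)\le s(x)+s(p)$ for any partner $p$ of $x$ making $\{x,p\}$ feasible); the ``smallest leftover in $M_1\cup M_2$'' rule for $B_n^{(2)}$, which lets one chase down --- via a finiteness (no-infinite-descent) argument on item sizes --- the small item that was retained in place of $x$; and the numeric facts $1-r\le r/2\le r^2\le\tfrac12$ and $r\le 1/\sqrt2$, valid for $10/9\le R\le\tfrac{1+\sqrt2}{2}$. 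The aim is to produce a two- or three-item subset of $B_n$ landing in $[r,1]$, or, where no subset of $B_n$ beats $s(T_n)$, to lower-bound $s(T_n)$ by $r\cdot\OPT(I)$.

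I expect this last step to be the main obstacle. Because the algorithm may have discarded $x$ (and even $y$) long before round $n$, one cannot simply locate them in $B_n$, and must instead argue that what the greedy rule retained in their stead --- the minimum-size set $T_n$ together with the smallest $M_1\cup M_2$ leftover, when present --- already reaches the target window $[r,1]$. The tightest point is case (ii) when $\OPT(I)$ is close to $1$: there $T_n$ alone is provably short of $r\cdot\OPT(I)$, so one must extract the missing mass from the interplay of $s(T_n)\le s(y)$, the constraint $s(x)+s(y)\le 1$, and the fact that $s(T_n)+s(a)>1$ forces $a$ to be strictly smaller than both $\OPT$-items, hence $\OPT(I)$ to be bounded away from $1$. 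Carrying out this balancing cleanly across all the subcases is where the bulk of the delicate (but elementary) arithmetic lies.
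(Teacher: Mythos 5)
Your framework (assume $s(B_n^{(1)})<r$ and argue toward a contradiction or a strong lower bound) is the right starting point, and your observation that once line~\ref{line:small2_c1} fires the condition $s(B_j^{(1)})\ge r$ persists to the end is correct. But the argument is left unfinished at exactly the crux, and the plan you sketch for the ``residual cases'' would not go through. In case (i) ($B_n^{(2)}=\emptyset$) no subset of $B_n=T_n$ reaches size $r$, and bounding $s(T_n)\ge r\cdot\OPT(I)$ is hopeless on its face since $s(T_n)$ could a priori be as small as $r^2$ while $\OPT(I)$ could approach $1$. Your auxiliary claim in case (ii) is also reversed: from $s(T_n)+s(a)>1$, $s(T_n)\le s(y)$, and $s(x)+s(y)\le 1$ one gets $s(a)>1-s(T_n)\ge 1-s(y)\ge s(x)$, so $a$ is \emph{larger} than $x$, not smaller than both $\OPT$-items.

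The residual cases are in fact impossible, but seeing this requires looking at round $i_2$ (the arrival of the later $\OPT$-item) rather than at the final buffer, which is exactly what the paper's short proof does. Under the standing assumption $s(B_n^{(1)})<r$, it suffices to exhibit a set of size in $[r,1]$ inside $B_{i_2-1}\cup\{e_{i_2}\}$, which forces line~\ref{line:small2_c1} to fire at round $i_2$ and hence $s(B_n^{(1)})\ge r$, a contradiction. If $e_{i_1}\in M_3$, then $s(e_{i_1})\ge r^2$, so from round $i_1$ on $T_k$ is defined, nonincreasing, and satisfies $r^2\le s(T_k)\le s(e_{i_1})$; thus $T_{i_2-1}\cup\{e_{i_2}\}\subseteq B_{i_2-1}\cup\{e_{i_2}\}$ has size in $[r^2+(1-r),\ s(e_{i_1})+s(e_{i_2})]\subseteq[r,1]$. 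If instead $e_{i_2}\in M_3$, Observation~\ref{obs:10/9_to_(1+sqrt 2)/2:smallest_M-item} puts the smallest item $a$ of $I_{i_2-1}\cap M$ in $B_{i_2-1}$ with $s(a)\le s(e_{i_1})$, and $\{a,e_{i_2}\}$ has size in $[(1-r)+r^2,\ s(e_{i_1})+s(e_{i_2})]\subseteq[r,1]$. That two-case split at round $i_2$ is the entire content of the paper's proof; your route through $B_n$ passes over the one round where the contradiction is visible.
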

\begin{proof}
  Let $\OPT=\{e_{i_1},e_{i_2}\}$ where $i_1<i_2$.
  We assume $B_{n}^{(1)}< r$, since otherwise the lemma clearly holds.
  If $e_{i_1}\in M_3$, then $r\le r^2+(1-r)\le s(B^{(1)}_{i_2-1})+s(e_{i_2})\le s(e_{i_1})+s(e_{i_2})\le 1$ and $\ALG(I)\ge r$.
  If $e_{i_2}\in M_3$, let $a\in B_{i_2}\cap\argmin\{s(e)\mid e\in I_{i_2-1}\}$.
  Then $r\le (1-r)+r^2\le s(a)+s(e_{i_2})\le s(e_{i_1})+s(e_{i_2})\le 1$ and $\ALG(I)\ge r$.
\end{proof}

\begin{lemma}\label{lem:10/9_to_(1+sqrt 2)/2:OPT2_min}
  Suppose that $I_n\subseteq M$ and $\OPT = \{e_{i_1},e_{i_2}\}\subseteq M_1\cup M_2$ where $i_1<i_2$.
  Let $e^*\in B_n\cap \argmin\{s(e)\mid e\in I\}$.
  If $s(e^*) + s(e_{i_1}) \ge r$ or $s(e^*) + s(e_{i_2}) \ge r$, then the Algorithm~\ref{alg:small2} is $\frac{1+\sqrt{1+4R}}{2R}~(=1/r)$-competitive when $10/9\le R\le \frac{1+\sqrt{2}}{2}$.
\end{lemma}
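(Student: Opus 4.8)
The plan is to argue by contradiction, so suppose the competitive ratio of the algorithm on $I$ exceeds $1/r=\frac{1+\sqrt{1+4R}}{2R}$; since $\OPT=\{e_{i_1},e_{i_2}\}$ must fit the knapsack, $\OPT(I)=s(e_{i_1})+s(e_{i_2})\le 1$, so this means $\ALG(I)<r$. Then Line~\ref{line:small2_c1} is never executed (executing it leaves a set of size in $[r,1]$ in the buffer, which then survives to round $n$), so every round uses Line~\ref{line:small2_c2} or Line~\ref{line:small2_c3}. Since all items are medium, the set $S$ is empty, so in a Line~\ref{line:small2_c2} round $B_i^{(1)}=T_i$ (a minimum-size medium subset with $s(T_i)\ge r^2$) and $B_i^{(2)}$ is either $\emptyset$ or the singleton $\{a\}$, where $a$ is the smallest item of $B_{i-1}\cup\{e_i\}$ outside $T_i$ and $a\in M_1\cup M_2$, while a Line~\ref{line:small2_c3} round discards nothing. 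Also $s(B_n^{(1)})<r$, since the other alternative of Observation~\ref{obs:10/9_to_(1+sqrt 2)/2:valid_bin} would give $\ALG(I)\ge r$; hence Observation~\ref{obs:10/9_to_(1+sqrt 2)/2:smallest_M-item} applies at $k=n$ and shows that $e^*$ is a minimum-size item of the whole input, so $s(e^*)\le s(e_{i_1})$ and $s(e^*)\le s(e_{i_2})$.

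Next I would fix a ``witness pair'' $u\neq w$ and show it is present in the buffer together at some round; since $s(\{u,w\})\in[r,1]$ (see below), this makes the condition of Line~\ref{line:small2_c1} hold there, the contradiction. If $e^*\notin\{e_{i_1},e_{i_2}\}$, choose $f\in\{e_{i_1},e_{i_2}\}$ with $s(e^*)+s(f)\ge r$ (it exists by hypothesis) and set $u\coloneqq e^*$, $w\coloneqq f$. If $e^*\in\{e_{i_1},e_{i_2}\}$, let $e_{i'}$ be the other optimal item and set $u\coloneqq e^*$, $w\coloneqq e_{i'}$; then the hypothesis gives $2s(e^*)\ge r$ or $s(e^*)+s(e_{i'})\ge r$, and as $s(e_{i'})\ge s(e^*)$ both yield $s(e_{i_1})+s(e_{i_2})\ge r$. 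In either case $u=e^*$ has globally smallest size and $u\in B_n$, $w\in\{e_{i_1},e_{i_2}\}\subseteq M_1\cup M_2$ so $s(w)<r^2$, and $s(u)+s(w)\ge r$ while $s(u)+s(w)\le 2s(w)<2r^2\le 1$; more generally $s(a)+s(w)\in[r,1]$ for every input item $a$ with $s(a)\le s(w)$, using $s(a)\ge s(e^*)=s(u)$. Since $u\in B_n$ and $B_i\subseteq B_{i-1}\cup\{e_i\}$, $u$ lies in $B_i$ for all rounds $i$ from its arrival on. And $w$ lies in $B_i$ for all rounds between its arrival $i_w$ and $\ell^*\coloneqq\max(i_u,i_w)$: otherwise $w$ is discarded at some Line~\ref{line:small2_c2} round $\ell$ in this range (Line~\ref{line:small2_c3} discards nothing, Line~\ref{line:small2_c1} never runs), with $w\in(B_{\ell-1}\cup\{e_\ell\})\setminus T_\ell$ and $w\notin B_\ell^{(2)}$; since $s(w)<r^2$, $w$ would have been kept in $B_\ell^{(2)}$ were it the smallest item outside $T_\ell$, so the actual smallest such item $a\neq w$ has $s(a)\le s(w)<r^2$, lies in $M_1\cup M_2$, and equals $B_\ell^{(2)}$ — but then $\{a,w\}\subseteq B_{\ell-1}\cup\{e_\ell\}$ with $s(\{a,w\})\in[r,1]$, so Line~\ref{line:small2_c1} fires at round $\ell$, contradiction. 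Finally, at round $\ell^*$ one of $u,w$ is $e_{\ell^*}$ and the other lies in $B_{\ell^*-1}$ by the two retention facts, so $\{u,w\}\subseteq B_{\ell^*-1}\cup\{e_{\ell^*}\}$ and, as $s(\{u,w\})\in[r,1]$, Line~\ref{line:small2_c1} fires there — the final contradiction. Hence $\ALG(I)\ge r$.

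The only step that is not bookkeeping is the retention argument for $w$, and its point is simple: ejecting $w$ in a Line~\ref{line:small2_c2} round forces a \emph{smaller} item $a$ into $B^{(2)}$, yet no input item is smaller than the global minimum $e^*$, so $s(a)+s(w)\ge s(e^*)+s(w)\ge r$ and $\{a,w\}$ is itself a size-$[r,1]$ witness for Line~\ref{line:small2_c1}. I expect this to be the main obstacle to get exactly right (handling the tie $s(a)=s(w)$ and the edge cases where $w$ is discarded on arrival); the reduction, the identification of $e^*$ via Observation~\ref{obs:10/9_to_(1+sqrt 2)/2:smallest_M-item}, the subcase $e^*\in\OPT$, and the monotone retention of $e^*$ are routine.
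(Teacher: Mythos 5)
Your proposal is correct, and it rests on the same underlying mechanism as the paper's proof: exhibit two items whose sizes sum to a number in $[r,1]$ and that are simultaneously available in $B_{\ell-1}\cup\{e_\ell\}$ at some round $\ell$, so that Line~\ref{line:small2_c1} fires and $\ALG(I)\ge r$. Where you differ is in how the pair is produced. The paper branches on which disjunct of the hypothesis holds and, in each branch, appeals to Observation~\ref{obs:10/9_to_(1+sqrt 2)/2:smallest_M-item} to certify that the \emph{currently smallest} $M_1\cup M_2$ item is already in the buffer at the moment $e_{i_1}$ (resp.\ $e_{i_2}$) arrives; since that item's size sits between $s(e^*)$ and $s(e_{i_1})<r^2$, the pair's total falls in $[r,1]$ with no further work. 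You instead commit in advance to the pair $\{e^*,w\}$ with $w\in\OPT$, and earn co-presence by a retention argument: $e^*\in B_n$ so it is never evicted, and evicting $w$ at a Line~\ref{line:small2_c2} round would install a no-larger item $a\in M_1\cup M_2$ in $B_\ell^{(2)}$, producing the alternative witness pair $\{a,w\}$ with $s(a)+s(w)\in[r,1]$ via $s(a)\ge s(e^*)$ and $s(a)+s(w)\le 2s(w)<2r^2\le 1$. That argument is sound — the tie $s(a)=s(w)$ you worry about is harmless, since either the argmin returns $w$ (contradicting $w\notin B_\ell^{(2)}$) or the bounds on $s(a)+s(w)$ are unchanged — but it re-derives inline the retention guarantee that the paper simply reads off Observation~\ref{obs:10/9_to_(1+sqrt 2)/2:smallest_M-item}. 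Pairing the arriving $\OPT$-item with whatever Observation~\ref{obs:10/9_to_(1+sqrt 2)/2:smallest_M-item} already guarantees is in the buffer is what makes the paper's version roughly half the length; your version is more self-contained but buys that at the cost of a bespoke retention lemma and an extra $e^*\in\OPT$ sub-case.
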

\begin{proof}
  Note that $e^* \in M_1\cup M_2$.

  Suppose that $s(e^*) + s(e_{i_1}) \ge r$.
  If $I_{i_1-1} \cap (M_1\cup M_2) \neq \emptyset$, let $f \in \argmin\{s(e') \mid e'\in I_{i_1-1} \cap (M_1\cup M_2)\}$.
  By observation~\ref{obs:10/9_to_(1+sqrt 2)/2:smallest_M-item}, $f\in B_{i_1-1}$.
  Then, when $e_{i_1}$ is given, $1 > s(f) + s(e_{i_1})\ge s(e^*) + s(e_{i_1}) \ge r$.
  Therefore $\ALG(I)\ge r$.
  If $I_{i_1-1} \cap (M_1\cup M_2) = \emptyset$, $\argmin\{s(e') \mid e'\in I_{i_1} \cap (M_1\cup M_2)\} = \{e_{i_1}\}$.
  Let $j>i_1$ be the first round such that $e_j \in M_1\cup M_2$.
  Then,$1 > s(e_j) + s(e_{i_1})\ge s(e^*) + s(e_{i_1}) \ge r$.
  Therefore $\ALG(I)\ge r$.

  Suppose that $s(e^*) + s(e_{i_2}) \ge r$.
  Since $e_{i_1} \in M_1 \cup M_2$, there exists $f\in B_{i_2-1}$ such that $s(f)\le s(e_{i_1})$.
  Then $1 > s(f) + s(e_{i_2})\ge s(e^*) + s(e_{i_2}) \ge r$.
  Therefore $\ALG(I)\ge r$.
\end{proof}

\begin{lemma}\label{lem:10/9_to_(1+sqrt 2)/2:OPT2_1}
  If $I_n\subseteq M$, $|\OPT| = 2$, $\OPT\cap M_3 = \emptyset$ and $|B_n^{(1)}|=1$,
  then Algorithm~\ref{alg:small2} is $\frac{1+\sqrt{1+4R}}{2R}~(=1/r)$-competitive when $10/9\le R\le \frac{1+\sqrt{2}}{2}$.
\end{lemma}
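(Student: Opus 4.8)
The plan is to first pin down the exact makeup of the final buffer $B_n$ and then reduce everything to Lemma~\ref{lem:10/9_to_(1+sqrt 2)/2:OPT2_min}. Write $\OPT=\{e_{i_1},e_{i_2}\}$ with $i_1<i_2$; by hypothesis both items lie in $M_1\cup M_2$, so $\OPT(I)=s(e_{i_1})+s(e_{i_2})\le 1$, each $s(e_{i_j})<r^2$, and $n\ge 2$. Since the unique element of $B_n^{(1)}$ is an item of $I\subseteq M$, we have $s(B_n^{(1)})<r$; hence Observation~\ref{obs:10/9_to_(1+sqrt 2)/2:smallest_M-item} applies at round $n$ and $B_n$ contains a globally smallest item of $I$.

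First I would determine which branch of Algorithm~\ref{alg:small2} produced $B_n^{(1)}$. Line~\ref{line:small2_c1} is impossible, as it would force $s(B_n^{(1)})\ge r$ whereas a single medium item has size $<r$. Line~\ref{line:small2_c3} is impossible for $n\ge 2$: there $B_n^{(1)}=B_{n-1}\cup\{e_n\}$, which has at least two elements since the buffer $B_{n-1}$ is nonempty (the buffer is nonempty after every round). So Line~\ref{line:small2_c2} fired; since there are no small items, $B_n^{(1)}=T_n$, and $|T_n|=1$ forces its unique element $e^{(1)}$ to satisfy $r^2\le s(e^{(1)})<r$, i.e.\ $e^{(1)}\in M_3$. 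Next I would show $B_n^{(2)}\ne\emptyset$: otherwise $B_n=\{e^{(1)}\}$, so $e^{(1)}$ would be a globally smallest item of $I$ and every item would have size $\ge r^2$, contradicting $e_{i_1}\in M_1\cup M_2$. Therefore $B_n=\{e^{(1)},e^{(2)}\}$ and, by the rule in Line~\ref{line:small2_c2}, $e^{(2)}\in M_1\cup M_2$; in particular $s(e^{(2)})<r^2\le s(e^{(1)})$, so $e^{(2)}$ is the smaller of the two items of $B_n$ and hence the globally smallest item of $I$ guaranteed by Observation~\ref{obs:10/9_to_(1+sqrt 2)/2:smallest_M-item}. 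Thus $e^*\coloneqq e^{(2)}$ is an admissible choice in Lemma~\ref{lem:10/9_to_(1+sqrt 2)/2:OPT2_min}.

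Then I would split on whether $e^{(1)}$ and $e^{(2)}$ fit together. If $s(e^{(1)})+s(e^{(2)})\le 1$, then $\ALG(I)=s(e^{(1)})+s(e^{(2)})>r^2+(1-r)=(1-r)^2+r\ge r\ge r\cdot\OPT(I)$, so $\OPT(I)/\ALG(I)<1/r$. If $s(e^{(1)})+s(e^{(2)})>1$, then $\ALG(I)=s(e^{(1)})$ (the larger of two items that cannot coexist). Invoking Lemma~\ref{lem:10/9_to_(1+sqrt 2)/2:OPT2_min} with $e^*=e^{(2)}$, if $s(e^{(2)})+s(e_{i_1})\ge r$ or $s(e^{(2)})+s(e_{i_2})\ge r$ we are done, so assume $s(e^{(2)})+s(e_{i_j})<r$ for $j=1,2$. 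Together with $s(e^{(1)})+s(e^{(2)})>1$ this gives $s(e^{(1)})>s(e_{i_j})+(1-r)$ for $j=1,2$; since $s(e^{(1)})<r$, this forces $s(e_{i_j})<2r-1$, hence $\OPT(I)<2(2r-1)$, and summing the two inequalities yields $s(e^{(1)})>\tfrac12\OPT(I)+(1-r)$. Because $r\le 1/\sqrt 2<3/4$ we have $(2r-1)^2\le 1-r$, equivalently $2(2r-1)\le\tfrac{2(1-r)}{2r-1}$, so $\OPT(I)\,(r-\tfrac12)<1-r$, and therefore $s(e^{(1)})>\tfrac12\OPT(I)+(1-r)>\tfrac12\OPT(I)+\OPT(I)\,(r-\tfrac12)=r\cdot\OPT(I)$. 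In every case $\OPT(I)/\ALG(I)<1/r=\tfrac{1+\sqrt{1+4R}}{2R}$.

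I expect the structural argument of the second paragraph to be the main obstacle: one has to squeeze out of the single fact $|B_n^{(1)}|=1$ the complete description $B_n=\{e^{(1)},e^{(2)}\}$ with $e^{(1)}\in M_3$ and $e^{(2)}\in M_1\cup M_2$ a globally smallest item, which means excluding the other two branches of the algorithm and the case $B_n^{(2)}=\emptyset$. The other delicate point is the closing numeric estimate, where the hypothesis $R\le\frac{1+\sqrt 2}{2}$ (equivalently $r\le 1/\sqrt 2$) is used exactly through the inequality $(2r-1)^2\le 1-r$.
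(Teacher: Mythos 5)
Your proof is correct and reaches the same conclusion, but by a route that differs from the paper's in its case split and, more importantly, in the numeric estimate. Both you and the paper reduce the work to Lemma~\ref{lem:10/9_to_(1+sqrt 2)/2:OPT2_min} after identifying $B_n=\{e^{(1)},e^{(2)}\}$ with $e^{(1)}\in M_3$ and $e^{(2)}\in M_1\cup M_2$ the globally smallest item (what the paper calls $y$ and $e^*$). The paper then splits on $|\OPT\cap B_n|\in\{0,\ge 1\}$, whereas you split on whether $e^{(1)},e^{(2)}$ fit together in the knapsack; your split is more directly tied to the value of $\ALG(I)$. In the harder subcase, the paper bounds $\tfrac{s(e_{i_1})+s(e_{i_2})}{s(y)}\le\tfrac{2(r-s(e^*))}{1-s(e^*)}$ and then asserts this is $\le 2+\tfrac{2(r-1)}{1-r^2}$, which amounts to substituting $s(e^*)=r^2$; but since $\tfrac{2(r-x)}{1-x}$ is \emph{decreasing} in $x$ and here $s(e^*)<r^2$ (because $e^*\in M_1\cup M_2$), the inequality as written goes the wrong way. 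The correct substitution is the lower bound $s(e^*)>1-r$, which gives the ratio bound $\tfrac{2(2r-1)}{r}\le\tfrac{1}{r}$ for $r\le 3/4$ and rescues the conclusion. Your argument sidesteps this entirely: by deriving $s(e_{i_j})<2r-1$ from $s(e^{(1)})<r$ and combining it with $(2r-1)^2\le 1-r$ (using $r\le 1/\sqrt 2<3/4$), you obtain $s(e^{(1)})>r\cdot\OPT(I)$ directly and without needing a lower bound on $s(e^{(2)})$ at all. This makes your version both correct as written and a bit more robust than the paper's.
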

\begin{proof}
  Let $\OPT=\{e_{i_1},e_{i_2}\}$ where $i_1<i_2$.
  Let $y \in B_n^{(1)}$ and $e^*\in B_n\cap \argmin\{s(e)\mid e\in I\}$.
  Note that $y\in M_3$ and $y\not\in \OPT$.
  As $s(e^*)+s(y)\ge (1-r)+r^2\ge r$, we assume that $s(e^*)+s(y)>1$.

  Suppose that $|\OPT\cap B_n|=0$. 
  We assume that $s(e^*)+s(e_{i_1})<r$ and $s(e^*)+s(e_{i_2})<r$ since otherwise $\ALG(I)\ge r$ by Lemma~\ref{lem:10/9_to_(1+sqrt 2)/2:OPT2_min}.
  Then, the competitive ratio is at most
  \(\frac{s(e_{i_1})+s(e_{i_2})}{s(y)}
        \le \frac{2(r-s(e^*))}{1-s(e^*)}
        \le 2+\frac{2(r-1)}{1-s(e^*)}
        \le 2+\frac{2(r-1)}{1-r^2}
        \le \frac{1}{r}\)
  by $r<1$.

  Suppose that $|\OPT\cap B_n|\ge 1$.
  In this case, $e^*\in\{e_{i_1},e_{i_2}\}$.
  Let $\OPT=\{e^*,z\}$.
  We assume that $s(e^*)+s(z)<r$ by Lemma~\ref{lem:10/9_to_(1+sqrt 2)/2:OPT2_min}.
  Then, $s(e^*)+s(z)< s(e^*)+(r-s(e^*))=r$ and $s(B_n^{(1)}) = s(y)\ge r^2$.
  Hence, the competitive ratio is at most $1/r$.
\end{proof}

\begin{lemma}\label{lem:10/9_to_(1+sqrt 2)/2:OPT2_2}
  If $I_n\subseteq M$, $|\OPT| = 2$, $\OPT\cap M_3 = \emptyset$ and $|B_n^{(1)}|=2$,
  then Algorithm~\ref{alg:small2} is $\frac{1+\sqrt{1+4R}}{2R}~(=1/r)$-competitive when $10/9\le R\le \frac{1+\sqrt{2}}{2}$.
\end{lemma}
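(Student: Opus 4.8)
The plan is to follow the template of the preceding lemmas in this subsubsection: reduce to the situation where $\ALG(I)<r$, pin down exactly what the buffer $B_n$ looks like, and then feed the smallest item of $B_n$ into Lemma~\ref{lem:10/9_to_(1+sqrt 2)/2:OPT2_min}. Write $\OPT=\{e_{i_1},e_{i_2}\}$ with $i_1<i_2$; by hypothesis $e_{i_1},e_{i_2}\in M_1\cup M_2$, hence each has size in $(1-r,r^2)$ and $\OPT(I)=s(e_{i_1})+s(e_{i_2})>2(1-r)\ge r^2$ (the last inequality holds since $r\le\sqrt2/2<\sqrt3-1$). If $s(B_n^{(1)})\ge r$, then Observation~\ref{obs:10/9_to_(1+sqrt 2)/2:valid_bin} gives $r\le s(B_n^{(1)})\le 1$, so $\ALG(I)\ge r$, and since $\OPT(I)\le 1$ the ratio is at most $1/r$; so from now on I would assume $s(B_n^{(1)})<r$.

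Next I would identify $B_n$. Since $B_n^{(1)}$ consists of two medium items, $s(B_n^{(1)})>2(1-r)\ge r^2$. As $s(I_n\cap M)=s(I_n)\ge s(\OPT)\ge r^2$, Observation~\ref{obs:10/9_to_(1+sqrt 2)/2:smallest_M-set} tells us $B_n\cap M$ is a minimum-size subset of $I_n$ of total size $\ge r^2$. Because $B_n^{(1)}\subseteq I_n$ is already such a subset, if $B_n^{(2)}\ne\emptyset$ then $B_n^{(1)}\subsetneq B_n\cap M$ would have strictly smaller size, a contradiction; hence $B_n^{(2)}=\emptyset$, $B_n=B_n^{(1)}$ is a two-element set, and $\ALG(I)=s(B_n)<r$. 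Moreover, no element of $B_n$ can lie in $M_3$, since such an element alone would be a subset of size $\ge r^2$ strictly smaller than $B_n$; thus both elements of $B_n$ lie in $M_1\cup M_2$. Finally, by Observation~\ref{obs:10/9_to_(1+sqrt 2)/2:smallest_M-item} the smaller element $e^*$ of $B_n$ is a minimum-size item of $I_n$, so $s(e^*)>1-r$ and $\ALG(I)=s(B_n)\ge 2s(e^*)$.

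With this in hand I would invoke Lemma~\ref{lem:10/9_to_(1+sqrt 2)/2:OPT2_min} with this $e^*$: since $\ALG(I)<r$, that lemma's conclusion fails, so neither $s(e^*)+s(e_{i_1})\ge r$ nor $s(e^*)+s(e_{i_2})\ge r$ holds, i.e.\ $s(e_{i_1}),s(e_{i_2})<r-s(e^*)$ and $\OPT(I)<2(r-s(e^*))$. Combining with $\ALG(I)\ge 2s(e^*)$ yields
\[
\frac{\OPT(I)}{\ALG(I)}<\frac{2(r-s(e^*))}{2s(e^*)}=\frac{r}{s(e^*)}-1<\frac{r(1+r)}{r^2}-1=\frac1r=\frac{1+\sqrt{1+4R}}{2R},
\]
where the middle strict inequality uses $s(e^*)>1-r\ge r^2/(1+r)$; the inequality $1-r\ge r^2/(1+r)$ is equivalent to $1-2r^2\ge 0$, which holds because $R\le\frac{1+\sqrt2}{2}$ forces $r\le 1/\sqrt2$. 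This gives the claimed competitive ratio.

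The main obstacle I expect is the middle step: squeezing out of the single hypothesis $|B_n^{(1)}|=2$, together with Observations~\ref{obs:10/9_to_(1+sqrt 2)/2:valid_bin}, \ref{obs:10/9_to_(1+sqrt 2)/2:smallest_M-item}, and~\ref{obs:10/9_to_(1+sqrt 2)/2:smallest_M-set}, all four facts $B_n^{(2)}=\emptyset$, $|B_n|=2$, $B_n\subseteq M_1\cup M_2$, and ``$e^*$ is a globally smallest item of $I_n$'' — each of which is exactly what is needed for the one-line estimate above. Everything else is routine, modulo the algebraic check that $1-r\ge r^2/(1+r)$ holds precisely when $r\le 1/\sqrt2$, which is what makes $\frac{10}{9}\le R\le\frac{1+\sqrt2}{2}$ the right range.
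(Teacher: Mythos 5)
Your argument is correct, and it is genuinely cleaner than the paper's. The paper proves this lemma by a five-way case split on the composition of $B_n^{(1)}=\{y_1,y_2\}$ and on $|B_n^{(1)}\cap\OPT|\in\{0,1,2\}$, bounding the ratio separately in each case (each via Lemma~\ref{lem:10/9_to_(1+sqrt 2)/2:OPT2_min} and a slightly different one-line estimate). You instead feed the globally smallest item $e^*$ into Lemma~\ref{lem:10/9_to_(1+sqrt 2)/2:OPT2_min} once, conclude $\OPT(I)<2(r-s(e^*))$ and $\ALG(I)\ge 2s(e^*)$, and close with a single algebraic step; your inequality $\frac{r}{s(e^*)}-1<\frac{1}{r}$ using $s(e^*)>1-r\ge\frac{r^2}{1+r}$ is the same computation as the paper's $\frac{2r-1}{1-r}\le\frac{1}{r}$ in its Case~2, and it in fact absorbs all five of the paper's cases, including the degenerate Case~5 where $\OPT=B_n^{(1)}$ (there the lemma's hypothesis fails because $s(e^*)+s(e_{i_2})=s(B_n^{(1)})\ge r$, i.e.\ you are done before the estimate is even needed).

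One point worth flagging, though it does not affect correctness: to show $B_n^{(2)}=\emptyset$ you read Observation~\ref{obs:10/9_to_(1+sqrt 2)/2:smallest_M-set} literally as a statement about $B_n\cap M$ (which includes $B_n^{(2)}$). As written in the paper the observation is about $B_k\cap M$, but by inspection of Algorithm~\ref{alg:small2} the natural invariant is that $T_k=B_k^{(1)}\cap M$ (not $B_k\cap M$) is the minimum-size subset of size at least $r^2$; with $B_k^{(2)}$ nonempty, $B_k\cap M$ generally is \emph{not} a minimizer. So the inference that $B_n^{(2)}=\emptyset$ is not quite licensed by that observation. Fortunately you do not actually need it: Observation~\ref{obs:10/9_to_(1+sqrt 2)/2:smallest_M-item} already puts the globally smallest item $e^*$ in $B_n$, and since both items of $B_n^{(1)}$ have size at least $s(e^*)$ you still get $\ALG(I)\ge s(B_n^{(1)})\ge 2s(e^*)$, which is all the final estimate uses. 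Similarly, $y_2\in M_3$ cannot occur (it would contradict minimality of $T_n$ when $|T_n|=2$), so the exclusion of $M_3$ items from $B_n^{(1)}$ can be argued from $T_n$ directly without invoking the observation about $B_n\cap M$. With those two small repairs your proof is complete.
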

\begin{proof}
  Let $s(B_n^{(1)})=\{y_1,y_2\}$ $(s(y_1)\le s(y_2))$ and $e^*\in B_n\cap \argmin\{s(e)\mid e\in I\}$.
  Note that $y_1=e^*\in M_1\cup M_2$.
  We have five cases to consider.
\begin{description}
\item[Case 1:] $y_2\in M_3$. In this case, the lemma holds by $1\ge s(y_1)+s(y_2)\ge (1-r)+r^2\ge r$.

\item[Case 2:] $y_1,y_2\in M_1\cup M_2$ and $|B_n^{(1)}\cap\OPT|=0$.
    We assume that $s(y_1)+s(e_{i_1})<r$ and $s(y_1)+s(e_{i_2})<r$ since otherwise $\ALG(I)\ge r$ by Lemma~\ref{lem:10/9_to_(1+sqrt 2)/2:OPT2_min}.
    Thus, the competitive ratio is at most
    \(\frac{s(e_{i_1})+s(e_{i_2})}{s(y_1)+s(y_2)}\le\frac{2(r-s(e^*))}{2s(e^*)}\le\frac{2r-1}{1-r}\le\frac{1}{r}\),
    by $r\le 1/\sqrt{2}$.

\item[Case 3:] $y_1,y_2\in M_1\cup M_2$, $|B_n^{(1)}\cap\OPT|=1$, and $y_1\in\OPT$.
    Let $\OPT = \{y_1,z\}$.
    We can assume that $s(y_1)+s(z)<r$ by Lemma~\ref{lem:10/9_to_(1+sqrt 2)/2:OPT2_min}.
    Then, the competitive ratio is at most
    \(\frac{s(y_1)+s(z)}{s(y_1)+s(y_2)}\le\frac{s(y_1)+(r-s(y_1))}{s(e^*)+s(y_2)}\le\frac{r}{s(y_1)+s(y_2)}\le \frac{r}{r^2} = \frac{1}{r}\).

\item[Case 4:] $y_1,y_2\in M_1\cup M_2$, $|B_n^{(1)}\cap\OPT|=1$, and $y_2\in\OPT$. 
      Let $\OPT=\{y_2,z\}$.
      We assume that $s(z)+s(e^*)<r$ by Lemma~\ref{lem:10/9_to_(1+sqrt 2)/2:OPT2_min} and the competitive ratio is at most
      \(\frac{s(z)+s(y_2)}{s(y_1)+s(y_2)}\le\frac{(r-s(e^*)+s(y_2))}{s(e^*)+s(y_2)}\le\frac{r-s(e^*)}{s(e^*)}\le \frac{r}{1-r} - 1\le\frac{1}{r}\).

\item[Case 5:] $y_1,y_2\in M_1\cup M_2$ and $|B_n^{(1)}\cap\OPT|=2$. We have the lemma by $\OPT=B_n^{(1)}$. \qedhere
\end{description}
\end{proof}

Now, we are ready to prove Theorem~\ref{thm:10/9_to_(1+sqrt 2)/2}.
\begin{proof}[Proof of Theorem~\ref{thm:10/9_to_(1+sqrt 2)/2}]
  Let $\OPT\in\argmax\{s(X)\mid X\subseteq I_n,~s(X)\le 1\}$ and 
  $\OPT_M\in\argmax\{s(X)\mid X\subseteq I_n\cap M,~s(X)\le 1\}$.
  Without loss of generality, we can assume that $\sum_{i=1}^n s(e_i)>R$.

  If $e_i\in L$ for some $i$, then $r\le s(B_n^{(1)})\le 1$.
  Thus, we assume that all the items in the input sequence are not large, i.e., $I_n\cap L=\emptyset$.

  Suppose that Algorithm~\ref{alg:small2} discards some small items, i.e., $I_n\cap S\neq B_n\cap S$.
  Let $j$ be the round such that $I_{j-1}\cap S = B_{j-1}\cap S$ and $I_j\cap S\neq B_j\cap S$.
  Let $T_j\in\argmin\{s(B')\mid B'\subseteq (B_{j-1}\cup\{e_j\})\cap M,~s(B')>r\}$.
  Since $I_{j-1}\cap S = B_{j-1}\cap S$ and $I_j\cap S\neq B_j\cap S$, we have $s(T_j \cup (I_j\cap S))>1$.
  Since $s(e)<1-r~(\forall e\in S)$, there exists $S' \in I_j \cap S$ such that $r\le s(T_j\cup S') \le 1$.
  Therefore, if $I_n\cap S\neq B_n\cap S$, then $\ALG(I) \ge r$.

  Consequently, we assume $I_n\cap L=\emptyset$ and $I_n\cap S\subseteq B_n$.
  Then, the competitive ratio is at most
  \begin{align*}
    \frac{s(\OPT)}{s(B_n^{(1)})}\le \frac{s(\OPT_M)+s(I_n\cap S)}{s(B_n^{(1)}\cap M)+s(I_n\cap S)}\le \frac{s(\OPT_M)}{s(B_n^{(1)}\cap M)},
  \end{align*}
  and hence we can assume, without loss of generality, that $I_n\subseteq M$.

  Thus, by Lemmas~\ref{lem:10/9_to_(1+sqrt 2)/2:OPT1}--\ref{lem:10/9_to_(1+sqrt 2)/2:OPT2_2}, the theorem is proved.
\end{proof}

\subsection{\texorpdfstring{$\frac{4-\sqrt{2}}{2} \leq R \leq 17-9\sqrt{3}$}{frac{4-sqrt{2}}{2} leq R leq 17-9sqrt{3}}}

In this subsection, we consider the problem for $\frac{4-\sqrt{2}}{2} \leq R \leq 17-9\sqrt{3}$.
Let $r > 0$ be a real such that $\frac{R-r}{2(2r-1)} = r$, i.e , $r = \frac{\sqrt{16R+1}+1}{8}$.
Note that $\frac{1}{r}=\frac{\sqrt{16R+1}-1}{2R}$,  $\frac{2}{3}<r<\frac{3}{4} $ , $2r>R$, $r\geq R-1$, and $r\geq 2-R$.
We prove that the competitive ratio is $\frac{\sqrt{16R+1}-1}{2R}$ when $\frac{4-\sqrt{2}}{2} \leq R \leq 17-9\sqrt{3}$.

\subsubsection{Lower bound}
First we give a lower bound of the competitive ratio.
\begin{theorem}\label{thm:prop/removable_lower_ii}
  For any $\epsilon > 0$, 
  the competitive ratio of the proportional\&removable online knapsack problem with a buffer
  is at least $\frac{\sqrt{16R+1}-1}{2R}-\epsilon$
  when $\frac{4-\sqrt{2}}{2}\leq R < 17-9\sqrt{3}$. 
\end{theorem}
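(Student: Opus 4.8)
The plan is an adaptive adversary argument. Fix an arbitrary online algorithm $\ALG$ and $\epsilon>0$; I will construct a request sequence $I$ with $\OPT(I)/\ALG(I)\ge\frac1r-\epsilon$, where $r=\frac{\sqrt{16R+1}+1}{8}$ so that $\frac1r=\frac{\sqrt{16R+1}-1}{2R}$. The whole construction is governed by the identity $R=4r^2-r$ — equivalently $\frac{R-r}{2(2r-1)}=r$, equivalently $r+(2r-1)+(2r-1)^2=R$ — and by the stated facts $\tfrac23<r<\tfrac34$, $2r>R$, $r\ge R-1$, $r\ge 2-R$; from $R\ge\frac{4-\sqrt2}2$ one also gets $r\ge\frac1{\sqrt2}$ (hence $2r^2\ge1$) and from $R<17-9\sqrt3$ one gets $r<\sqrt3-1$. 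I first fix $\epsilon'>0$ small enough that $\frac1{r+c\epsilon'}\ge\frac1r-\epsilon$ for the (bounded) constant $c$ that will come out of the analysis, and small compared with every ``slack'' used below (e.g.\ smaller than $(2r-1)^2$, $2r-R$, $3-4r$, $r-\tfrac12$); all item sizes will be of the form $r,\ 2r-1,\ (2r-1)^2,\ 1-r,\ 2-2r,\ 3r-4r^2$, each shifted by a small multiple of $\epsilon'$ with signs chosen so that any total that should strictly exceed $R$ does.

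The sequence opens with an item of size $r$. If $\ALG$ rejects it, append a single item of size $1-r$: then $\OPT(I)=1$ while $\ALG$ retains at most the size-$(1-r)$ item, so the ratio is $\ge\frac1{1-r}>\frac1r$ (using $r>\tfrac12$). So assume the size-$r$ item is kept. The adversary then feeds a medium item of size just above $2r-1$ and, as long as $\ALG$ keeps the size-$r$ item together with the medium one, a small item of size just above $(2r-1)^2$; by $r+(2r-1)+(2r-1)^2=R$ there is a determined round at which $\ALG$'s current buffer plus the new item exceeds $R$, forcing $\ALG$ to discard one of these three items. At that point the adversary releases one last \emph{complement} item, choosing its size adaptively — depending on which item $\ALG$ just dropped, among the $1-r$, $2-2r$, $3r-4r^2$ types perturbed by $O(\epsilon')$ — so that some subset of the released items sums to exactly $1$ (whence $\OPT(I)=1$) while $\ALG$ cannot, whatever it now retains and absorbs, assemble a subset of value exceeding $r+O(\epsilon')$.

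The verification is a case analysis on which item $\ALG$ sacrifices at the critical round (and on what it keeps). The branch where $\ALG$ keeps the size-$r$ item is easy: it then holds at most one extra medium piece, $r+(2r-1)>1$ (from $r>\tfrac23$) prevents combining them, and $2r>R$, $2r^2\ge1$ make the complement unpackable next to it. The branch where $\ALG$ discards the size-$r$ item is also manageable: it is left with pieces of total size at most $R-r=2r(2r-1)<r$ (using $4r^2-2r<r$, i.e.\ $r<\tfrac34$), and the identity $r+(2r-1)+(2r-1)^2=R$ is precisely what makes ``retained value $+$ absorbed complement'' come out to $r+O(\epsilon')$. The hard part will be the branches in which $\ALG$ discards the medium piece and keeps a ``large $+$ small'' pair: this pair alone can have value strictly between $r$ and $1$, so the adversary must pick the complement (and possibly tune the medium/small sizes and the interleaving) so that this pair cannot be extended past $r$ and so that $\OPT(I)=1$ is witnessed by a subset containing the discarded medium piece — and it is here that the upper bound $r<\sqrt3-1$, i.e.\ $R<17-9\sqrt3$, becomes essential (and symmetrically $r\ge\frac1{\sqrt2}$, i.e.\ $R\ge\frac{4-\sqrt2}2$). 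Once all branches are disposed of, $\frac{\OPT(I)}{\ALG(I)}\ge\frac1{r+c\epsilon'}\ge\frac1r-\epsilon=\frac{\sqrt{16R+1}-1}{2R}-\epsilon$, as required. The only real work, beyond setting up the instance, is tracking the $O(\epsilon')$ perturbations through this branching so that the worst branch is exactly tight for $r\in[\tfrac1{\sqrt2},\sqrt3-1)$.
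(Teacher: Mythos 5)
Your overall plan---an adaptive adversary around the item size $r=\frac{\sqrt{16R+1}+1}{8}$ and the algebraic identity $R=4r^2-r$---is the right idea, and several of the size types you name ($r$, $1-r$, $2r-1$, $R-1$, $2-R$) do appear in the paper's own adversary. However, there is a genuine gap in the way you set up the opening phase, and it breaks the bound in one of the branches you flag as "manageable."

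You take the initial three sizes to be (just above) $r$, $2r-1$, and $(2r-1)^2$, chosen because $r+(2r-1)+(2r-1)^2=4r^2-r=R$. Now consider the branch where, when the buffer overflows, $\ALG$ discards the size-$r$ item. It is then left with pieces of total size roughly $(2r-1)+(2r-1)^2 = 2r(2r-1)=R-r$, which is indeed $<r$ as you observe. But when the adversary releases the complement $1-r$ (so that the discarded $r$ together with the complement forms $\OPT=1$), $\ALG$ can accept it into its remaining $r$ worth of buffer slack, and then pack \emph{all} of $\{2r-1,\,(2r-1)^2,\,1-r\}$ into the knapsack: the total is $R-2r+1=4r^2-3r+1=r+(2r-1)^2$, which lies strictly between $r$ and $1$ for every $r$ in $\bigl[\tfrac1{\sqrt2},\sqrt3-1\bigr)$. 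So in this branch $\ALG(I)\ge r+(2r-1)^2+O(\epsilon')$ and the ratio is only $\frac{1}{r+(2r-1)^2}+O(\epsilon')$, strictly below $1/r$ (numerically, at $r\approx0.72$ this gives $\approx1.09$ versus the target $\approx1.39$). The $O(\epsilon')$ perturbations cannot repair this, because $1-\bigl(R-2r+1\bigr)=r(3-4r)$ is a fixed positive constant. The other complements you list ($2-2r$ or $3r-4r^2$) fare no better: $2-2r$ lets $\ALG$ pack $\{2r-1,\,2-2r\}$ to value $1$, and $3r-4r^2$ makes $\OPT$ itself fall short of $1$.

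The paper avoids this by making the third opening item $1-r$ (not $(2r-1)^2$), so the opening triple sums to $2r>R$; after discarding $r$ the remaining two pieces sum to exactly $r$, and together with the complement $1-r$ the three surviving pieces total $1+O(\epsilon')$, which is just over the knapsack capacity and cannot all be packed. The identity $R=4r^2-r$ is still essential, but it is used later, in the branch where $\ALG$ keeps discarding $2r$-$1$-sized items: there the adversary builds a packing of two $(2r-1)$-pieces against $\ALG$'s buffer of total size about $R-r$, and the equality $\frac{2(2r-1)}{R-r}=\frac1r$ is exactly what closes that branch. So the identity determines the size $r$ and justifies a \emph{deep} branch, not the sizes of the opening items. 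To fix your proposal you would need to replace $(2r-1)^2$ by $1-r$ in the opening triple and then carry out the full case tree (including the repeated-discard branch) along the lines just described.
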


\begin{proof}
  Let $\ALG$ be an online algorithm.
  Consider the following item sequence:
  \begin{align*}
    r,\, 1-r+\frac{\epsilon}{4},\, 2r-1.
  \end{align*}
  Here, $r+(1-r+\epsilon/4)+(2r-1) = 2r + \epsilon/4> R$. 
  Hence, $\ALG$ must remove at least one of the items.

  If $\ALG$ removes the first item, then we assume that the fourth item with size $1-r$ arrives.
  In this case, the competitive ratio of the algorithm is at least $\frac{r+(1-r)}{(1-r+\epsilon/4)+(2r-1)} =  \frac{1}{r+\epsilon/4}\ge \frac{1}{r}-\epsilon$.

  If $\ALG$ removes the second item, then we assume that the fourth item with size $r-\epsilon/4$ arrives.
  In this case, the competitive ratio is at least $\frac{(1-r+\epsilon/4)+(r-\epsilon/4)}{r}=\frac{1}{r}$.

  If $\ALG$ removes the third item, then we assume that the fourth item with size $2r-1$ arrives.
  Here, if it discards the first or second item, then the next item with size $1-r$ or $r-\epsilon/4$ implies that the competitive ratio is at least $\frac{1}{r}-\epsilon$ by the similar analysis to the above.
  Thus, suppose that it discards the fourth item. We assume that the fifth item has size $R-1+\epsilon/4$.
  Now, it keeps three items with size $r$, $1-r+\epsilon/4$, $R-1+\epsilon/2$,
  whose total size $r+(1-r+\epsilon/4)+(R-1+\epsilon/2)=R+3\epsilon/4$ is larger than $R$.
  Hence, it must discards one of them.
  If it discards the item with size $r$, then the competitive ratio is at least $\frac{2(2r-1)}{(R-1+\epsilon/2)+(1-r+\epsilon/4)}=\frac{2(2r-1)}{R-r+3\epsilon/4}=\frac{1}{r+3\epsilon/(8(2r-1))}\ge \frac{1}{r}-\epsilon$.f
  If it discards the item with size $1-r+\epsilon/4$, then the next item has size $r-\epsilon/4$ and the competitive ratio is at least $\frac{(1-r+\epsilon/4) + (r-\epsilon/4)}{r} \geq \frac{1}{r} -\epsilon $.If it discards the item with size $R-1$, then the next item has size $1+r-R-3\epsilon/4$ and the competitive ratio is at least $\frac{(1-r+\epsilon/4) + (R-1+\epsilon/2) + (1+r-R-3\epsilon/4)}{r} = \frac{1}{r}$.

  Therefore, any online algorithm has competitive ratio at least $\frac{1}{r}-\epsilon = \frac{\sqrt{16R+1}-1}{2R}-\epsilon$.
\end{proof}

\subsubsection{Upper bound}
In this subsubsection, an item \(e\) is called {\em small}, {\em medium}, and {\em large}
if \(s(e)\le 1-r\), \(1-r< s(e)<r\), and \(r\le s(e)\), respectively.
Let $S$, $M$, and $L$ respectively denote the sets of small, medium, and large items.
$M$ is further partitioned into four subsets $M_i$ for $ 1 \le i \le 4$,
nwhere $M_1$, $M_2$, $M_3$, and $M_4$
respectively denote the set of the items $e$ with size
\(1-r< s(e)< 2r-1\), \(2r-1\le s(e)< 1/2\), \(1/2\le s(e)<r^2\), and \(r^2\le s(e)<r\)
(see Figure~\ref{fig:17-9sqrt(3)}).
An item $e$ is also called an {\em $M_i$-item} if \(e\in M_i\).

\begin{figure}[htbp]
\begin{center}
  \begin{tikzpicture}[scale=0.8]
    \draw[thick,->] (-1,0)--(11,0);
    \filldraw[thick,fill=gray,fill opacity=0.1] (0,0)--(0,1)--(2.5,1)--(2.5,0);
    \filldraw[thick,fill=gray,fill opacity=0.1] (2.5,0)--(2.8,.5)--(3.45,.5)--(3.75,0);
    \filldraw[thick,fill=gray,fill opacity=0.1] (3.75,0)--(3.75,.5)--(4.7,.5)--(5,0);
    \filldraw[thick,fill=gray,fill opacity=0.1] (5,0)--(5,.5)--(5.325,.5)--(5.625,0);
    \filldraw[thick,fill=gray,fill opacity=0.1] (5.625,0)--(5.625,.5)--(7.2,.5)--(7.5,0);
    \filldraw[thick,fill=gray,fill opacity=0.1] (7.5,0)--(7.5,1)--(10,1)--(10,0);
    \draw (1.25,0.5) node {$S$};
    \draw (3.125,0.8) node {$M_1$};
    \draw (4.375,0.8) node {$M_2$};
    \draw (5.3125,0.8) node {$M_3$};
    \draw (6.5625,0.8) node {$M_4$};
    \draw (8.75,0.5) node {$L$};
    \draw[thick] (0.00,-0.1)--(0.00,0.1);
    \draw (0,-0.5) node {$0$};
    \draw[thick] (2.5,-0.1)--(2.5,0.1);
    \draw (2,-0.4) node {$1-r$};
    \draw[thick] (3.75,-0.1)--(3.75,0.1);
    \draw (3.75,-0.4) node {$2r-1$};
    \draw[thick] (5,-0.1)--(5,0.1);
    \draw (5,-0.6) node {$\frac{1}{2}$};
    \draw[thick] (5.625,-0.1)--(5.625,0.1);
    \draw (5.625,-0.4) node {$r^2$};
    \draw[thick] (7.5,-0.1)--(7.5,0.1);
    \draw (7.5,-0.4) node {$r$};
    \draw[thick] (10.00,-0.1)--(10.00,0.1);
    \draw (10.0,-0.4) node {$1$};
  \end{tikzpicture}
  \caption{Item partition for $2-\sqrt{2}/2\le R\le 17-9\sqrt{3}$.}\label{fig:17-9sqrt(3)}
\end{center}
\end{figure}
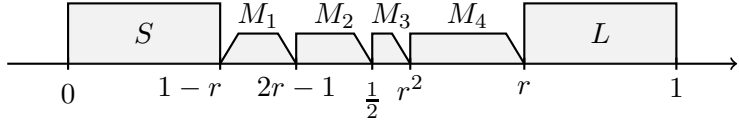

We consider Algorithm~\ref{alg:17-9sqrt(3)} for the problem.
If the algorithm can select a set of items $B'$ such that $r\le s(B')\le 1$, it keeps the set $B'$ until the end.
Otherwise, it first selects the smallest medium item.
If the total size of the two smallest items is smaller than the size of the smallest $M_4$-item, then it picks the second smallest $M_1$-item.
If the total size of the two smallest items is not smaller than the size of the smallest $M_4$-item, then it picks the smallest $M_4$-item.
If the remaining smallest medium item and the selected items can be taken into two bins with sizes $r$ and $R-r$, then it picks the item.
All the small items are taken into the remaining space.
We show that the algorithm is optimal when $\frac{4-\sqrt{2}}{2}\le R\le 17-9\sqrt{3}$.

\begin{algorithm}[htb]
  \caption{$\frac{\sqrt{16R+1}-1}{2R} (=\frac{1}{r})$-competitive algorithm}\label{alg:17-9sqrt(3)}
  $B_0\ot\emptyset$\;
  \For{\(i\ot 1,2,\dots\)}{%
    \lIf{$\exists B'\subseteq B_{i-1}\cup\{e_i\}$ such that $r\le s(B')\le 1$}{$B_i\ot B'$}\label{line:17-9sqrt(3)_win}
    \Else{
      let $B_i\ot \emptyset$, $A_j\ot (B_{i-1}\cup\{e_i\})\cap M_j~(j=1,2,3,4)$\;
      \If{$|A_1\cup A_2\cup A_3|>0$}{
        let $a\in\argmin_{e\in A_1\cup A_2\cup A_3}s(e)$ and $B_i\ot \{a\}$\;
      }
      \If{$|A_1|\ge 2$}{
        let $b\in\argmin_{e\in A_1\setminus B_i}s(e)$\;
        \lIf{$s(B_i)+s(b)\le\min_{e\in A_4}s(e)$}{$B_i\ot B_i\cup\{b\}$}
      }
      \If{$|A_4|>0$ and $|B_i\cap M_1|\le 1$}{
        let $a\in\argmin_{e\in A_4}s(e)$ and $B_i\ot B_i\cup\{a\}$\;
      }
     \If{$(|B_i \cap M_1| = 2 \text{ and } s(B_i \cap M_1)\le R- r)$ or $(|B_i \cap M_4| = 1\text{ and }s(B_i \cap M_4)\le R- r)$}{
        let $a \in \argmin_{e \in (\cup_{j=1}^{4} A_j) \setminus B_i}s(e)$\;
        \lIf{$s(B_i)+s(a) \le R$}{$B_i\ot B_i\cup\{a\}$}
      }
      $B_i\ot B_i\cup ((B_{i-1}\cup\{e_i\})\cap S)$\;
    }
  }
\end{algorithm}

% \begin{algorithm}[htb]
%   \caption{$\frac{\sqrt{16R+1}-1}{2R} (=\frac{1}{r})$-competitive algorithm}\label{alg:17-9sqrt(3)}
%   $B_0\ot\emptyset$\;
%   \For{\(i\ot 1,2,\dots\)}{%
%     \lIf{$\exists B'\subseteq B_{i-1}\cup\{e_i\}$ such that $r\le s(B')\le 1$}{$B_i\ot B'$}
%     \Else{
%       let $B_i\ot \emptyset$, $A_j\ot (B_{i-1}\cup\{e_i\})\cap M_j~(j=1,2,3,4)$\;
%       \If{$|A_1|>0$}{
%         let $a\in\argmin_{e\in A_1}s(e)$ and $B_i\ot B_i\cup\{a\}$\;
%         \If{$|A_1|\ge 2$}{
%           let $b\in\argmin_{e\in A_1\setminus\{a\}}s(e)$\;
%           \lIf{$s(a)+s(b)\le\min_{e\in A_4}s(e)$}{$B_i\ot B_i\cup\{b\}$}
%         }
%       }
%       \ElseIf{$|A_2|>0$}{
%         let $a\in\argmin_{e\in A_2}s(e)$ and $B_i\ot B_i\cup\{a\}$\;
%       }
%       \ElseIf{$|A_3|>0$}{
%         let $a\in\argmin_{e\in A_3}s(e)$ and $B_i\ot B_i\cup\{a\}$;
%       }
%       \If{$|A_4|>0$ and $|B_i\cap M_1|\le 1$}{
%         let $a\in\argmin_{e\in A_4}s(e)$ and $B_i\ot B_i\cup\{a\}$\;
%       }
%       %\If{$(|B_i \cap M_1| = 2$ and $s(B_i \cap M_1)\le R- r)$ or $(|B_i \cap M_4| = 1$ and $s(B_i \cap M_4)\le R- r)$}{
%       \If{$B_i\subsetneq \cup_{j=1}^{4} A_j$}{
%         let $a \in \argmin_{e \in (\cup_{j=1}^{4} A_j) \setminus B_i}s(e)$\;
%         \If{$s(B_i)+s(a) \le R$}{$B_i\ot B_i\cup\{a\}$}
%         }
%       $B_i\ot B_i\cup ((B_{i-1}\cup\{e_i\})\cap S)$\;
%     %   \For{$e\in B_i\cap S$}{
%     %     \lIf{$s(B_i)+s(e)\le R$}{$B_i\ot B_i\cup\{e\}$}
%     %   }
%     }
%   }
% \end{algorithm}

\begin{theorem}\label{thm:17-9sqrt(3)}
  Algorithm~\ref{alg:17-9sqrt(3)} is $\frac{\sqrt{16R+1}-1}{2R} (=1/r)$-competitive when $\frac{1+\sqrt{2}}{2}\le R \le 17-9\sqrt{3}$.
\end{theorem}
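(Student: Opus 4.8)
The plan is to mirror the structure of the proof of Theorem~\ref{thm:10/9_to_(1+sqrt 2)/2}: reduce to the case where the entire input consists of medium items, and then analyze the algorithm by a case split on the structure of the optimal solution $\OPT$. First I would handle the easy reductions. If some $e_i\in L$, then $r\le s(e_i)\le 1$ and the condition in Line~\ref{line:17-9sqrt(3)_win} ever-after guarantees $\ALG(I)\ge r$, so the competitive ratio is at most $1/r$. Next, if the algorithm ever discards a small item, I would argue (exactly as before) that at the round $j$ where a small item first gets dropped, the medium items kept by the algorithm together with some subset of the small items form a set of size in $[r,1]$ — here I use that every small item has size $\le 1-r$, so greedily adding small items to a block of size $>r-$something cannot overshoot past $1$ without first landing in $[r,1]$; this again forces $\ALG(I)\ge r$. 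Thus we may assume $I_n\cap L=\emptyset$ and $I_n\cap S\subseteq B_n$, and then the usual density/size inequality $\frac{s(\OPT)}{s(B_n)}\le \frac{s(\OPT_M)}{s(B_n\cap M)}$ lets us assume $I_n\subseteq M$.

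With $I_n\subseteq M$, I would branch on $|\OPT|$. Since every medium item has size $>1-r$ and $4(1-r)>1$ (because $r>3/4$), we have $|\OPT|\le 3$. The case $|\OPT|=1$ is immediate unless the single optimal item lies in $M_4$, in which case $s(B_n)\ge r^2$ (the algorithm always keeps an $M_4$-item or something at least that big when it cannot win), giving ratio $\le r/r^2=1/r$; the $|\OPT|\ge 2$ subcases where $\OPT$ contains an $M_3$ or $M_4$ item should be handled by showing that the algorithm's kept items, together with the later-arriving optimal item, already reach size $r$ — using size facts like $(1-r)+r^2\ge r$ and $r^2+r^2\ge r$. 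The genuinely delicate case is $|\OPT|=2$ with $\OPT\subseteq M_1\cup M_2$, and a further subcase analysis on which of the two ``slots'' of $B_n$ (the size-$r$ bin and the size-$(R-r)$ bin) are occupied and by items from which $M_i$, analogous to Lemmas~\ref{lem:10/9_to_(1+sqrt 2)/2:OPT2_1} and~\ref{lem:10/9_to_(1+sqrt 2)/2:OPT2_2} but with more patterns because the algorithm's second bin now has capacity $R-r$ rather than $r^2$, and the algorithm may put either the second-smallest $M_1$-item or the smallest $M_4$-item there depending on the comparison with $\min_{e\in A_4}s(e)$.

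The main obstacle is exactly that last case: I expect to need a careful set of auxiliary observations bounding $s(a)+s(b)$ for $a,b$ in various combinations of $M_1,\dots,M_4$ (e.g.\ $a,b\in M_1\cup M_2\Rightarrow s(a)+s(b)<1$; $a\in M_1,b\in M_4\Rightarrow s(a)+s(b)\ge r$ since $(1-r)+r^2\ge r$; $a\in M_2,b\in M_2\Rightarrow s(a)+s(b)\ge 2(2r-1)$, and comparing $2(2r-1)$ with $r$ and with $R-r$), together with a lemma playing the role of Lemma~\ref{lem:10/9_to_(1+sqrt 2)/2:OPT2_min}: if the smallest item $e^*$ the algorithm keeps satisfies $s(e^*)+s(e_{i_1})\ge r$ or $s(e^*)+s(e_{i_2})\ge r$ then $\ALG(I)\ge r$, proved by tracking the smallest medium item seen so far. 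The arithmetic constraints $2r>R$, $r\ge R-1$, $r\ge 2-R$, and $\frac{R-r}{2(2r-1)}=r$ should be precisely what make the remaining ratio bounds (of the form $\frac{2(r-s(e^*))}{s(e^*)+(\text{something})}\le 1/r$, or $\frac{\text{twice an }M_2\text{-pair bound}}{R-r}$-type estimates) go through, and verifying that the stated range $\frac{4-\sqrt2}{2}\le R\le 17-9\sqrt3$ is exactly the interval on which all of these inequalities hold simultaneously is the bookkeeping part I would do last.
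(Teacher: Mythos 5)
Your overall strategy is aligned with the paper's: reduce to $I_n\cap L=\emptyset$ and $I_n\subseteq M$ using the $\frac{s(\OPT_M)+s(S)}{s(B^*\cap M)+s(S)}\le\frac{s(\OPT_M)}{s(B^*\cap M)}$ argument, then branch on $|\OPT|\in\{1,2,3\}$ (using $4(1-r)>1$), with the $|\OPT|\ge3$ and $|\OPT|=1$ cases dispatched quickly. But there is a genuine gap in how you plan to handle $|\OPT|=2$, which is where the real work lies.

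The paper's case analysis is made tractable by a structural observation you do not have: if both an $M_1$-item and an $(M_2\cup M_3)$-item appear in the input, the algorithm wins, because it always retains the smallest medium item and $(1-r)+(2r-1)=r\le s(e')+s(e'')\le(2r-1)+r^2\le1$ (Observation~\ref{17-9sqrt(3):M_2_M_3}). This lets the paper split Lemma~\ref{lem:17-9sqrt(3):2} cleanly into ``$e_j\in M_1$, so $I_n\cap(M_2\cup M_3)=\emptyset$'' and ``$e_j\in M_2\cup M_3$, so $I_n\cap M_1=\emptyset$,'' with the latter dispatched in one line via $r\le2(2r-1)$. Your proposed replacement---an analogue of Lemma~\ref{lem:10/9_to_(1+sqrt 2)/2:OPT2_min} saying ``if $s(e^*)+s(e_{i_\ell})\ge r$ then win''---is weaker: it does not by itself kill the cross-class combinations, so you would still be facing all of $M_1\times M_2$, $M_1\times M_3$, $M_1\times M_4$, $M_2\times M_2$, etc.\ as live subcases, and you misidentify the delicate core as $\OPT\subseteq M_1\cup M_2$ when in the paper it reduces to $\OPT\subseteq M_1$ only. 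Moreover, even in that reduced case the paper's argument is substantially more intricate than your sketch anticipates: Case~1-2 of Lemma~\ref{lem:17-9sqrt(3):2} has a further subcase on whether the third-smallest $M_1$-item $m_3$ is still in $B_n$ or was evicted, with distinct ratio bounds $\frac{2(2r-1)}{R-r}=1/r$ (using the defining identity of $r$) versus $\frac{2(R-1)}{2(1-R+r)}<1/r$ versus $\frac{2(R-1)}{1-r/2}<1/r$ depending on what the buffer currently holds. None of these specific computations, nor the observation that triggers them, is present in your sketch; you explicitly defer ``the bookkeeping.'' In short, the reductions and the $|\OPT|\ne2$ cases are fine and match the paper, but the $|\OPT|=2$ case---which is the heart of the proof---is not established.
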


Let $I = (e_1,\dots,e_n)$ be the input sequence, 
$I_k=\{e_1,\dots,e_k\}$ $(k=0,1,\dots,n)$ be the set of first $k$ items in $I$,
$\OPT\in \argmax\{s(X)\mid s(X)\le 1,~X\subseteq I_n\}$ be an optimal solution, and 
$B^* \in \argmax\{s(B') \mid B' \subseteq B_n,\  s(B') \le 1\}$ be an outcome of Algorithm~\ref{alg:17-9sqrt(3)}. 

We see that the algorithm does not violate the buffer constraint.
\begin{observation}
  Algorithm~\ref{alg:17-9sqrt(3)} is feasible, i.e., $s(B_i)\le R$ for all $i\in\{0,1,\dots,n\}$.
\end{observation}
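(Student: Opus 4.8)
The plan is to case on which branch of Algorithm~\ref{alg:17-9sqrt(3)} executes at round $i$. If the test on Line~\ref{line:17-9sqrt(3)_win} succeeds, then $B_i=B'$ for some $B'$ with $s(B')\le 1\le R$ and we are done; all the content is in the \textbf{else} branch, where no subset of $B_{i-1}\cup\{e_i\}$ has size in $[r,1]$. In that branch every item the algorithm selects is medium or small, so $B_i=(B_i\cap M)\cup S_i$ with $S_i\coloneqq(B_{i-1}\cup\{e_i\})\cap S$.

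The one tool I would isolate first is a ``no-jump'' lemma: if $X\subseteq B_{i-1}\cup\{e_i\}$ has $s(X)<r$, then $s(X)+s(S_i)<r$. Indeed, inserting the items of $S_i$ into $X$ one at a time raises the running size by at most $1-r$ per step (small items have size $\le 1-r$, the length of the forbidden interval $[r,1]$), so the running size can never leap over $[r,1]$; if it ever landed in $[r,1]$ we would exhibit a subset of $B_{i-1}\cup\{e_i\}$ of size in $[r,1]$, contradicting the else branch. Granting this, feasibility reduces to a structural claim: \emph{in the else branch $B_i\cap M$ can be partitioned as $P\sqcup Q$ with $s(P)<r$ and $s(Q)\le R-r$.} For then, applying the lemma to $X=P$, $s(B_i)=s(Q)+\bigl(s(P)+s(S_i)\bigr)<(R-r)+r=R$.

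To prove the structural claim I would walk through the \textbf{else} branch line by line: it adds at most one item of $M_1\cup M_2\cup M_3$, at most one further $M_1$-item (which makes $|B_i\cap M_1|=2$ and thereby disables the $M_4$ step), at most one $M_4$-item, and at most one more medium item in the last \textbf{if}; hence $|B_i\cap M|\le 3$, and a short bookkeeping gives the exhaustive list of reachable configurations. For each one I would exhibit $P$ and $Q$, using the stated bounds $2/3<r<3/4$, $r^2\ge 1/2$, and $r\le\sqrt3-1$ (the last from $R\le 17-9\sqrt3$), which give $r^2\le R-r$, $1/2\le r^2$, and $(2r-1)+r^2\le 1$. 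The recurring moves are: a single medium item has size $<r$, so it can go into $P$; an item of $M_1\cup M_2\cup M_3$ has size $<r^2\le R-r$, so it can go into $Q$; a pair of $M_1$-items (total $<4r-2<1$), or a pair $\{a_1,a\}$ with $a_1$ the smallest medium in $M_1$ and $a\notin M_4$ (total $<(2r-1)+r^2\le 1$), has total size $<1$, hence $<r$ by the else branch, so it can go into $P$; and whenever the last \textbf{if} fires, its guard supplies a set (the two $M_1$-items, or the chosen $M_4$-item) of size $\le R-r$ to serve as $Q$, while the single item it then adds has size $<r$ and goes into $P$.

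The step I expect to be the main obstacle is ruling out the one configuration that would break this scheme, $B_i\cap M=\{a_1,a_4,a\}$ with both $a_4,a\in M_4$: there $s(a_1)+s(a_4)>(1-r)+r^2>r$, which by the else branch forces $s(a_1)+s(a_4)>1$, whence the guard $s(a_1)+s(a_4)+s(a)\le R$ of the last \textbf{if} gives $s(a)<R-1$, contradicting $s(a)\ge r^2$ since $3r^2-r-1<0$ for $r<3/4$. A similar $R<3/2$ versus step-guard argument is needed to pin $a_1$ to $M_1$ in the three-item configurations. Beyond these eliminations, the delicate part is just enumerating the reachable configurations under the interlocking guards — the $\min_{e\in A_4}s(e)$ comparison in the second \textbf{if} and the $|B_i\cap M_1|\le 1$ test in the third — and the rest is routine interval arithmetic in $r$.
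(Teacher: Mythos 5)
Your proposal follows the paper's strategy exactly: both reduce feasibility to exhibiting a partition $B_i\cap M=B_i^{(1)}\cup B_i^{(2)}$ with $s(B_i^{(1)})\le r$ and $s(B_i^{(2)})\le R-r$, and both invoke (your ``no-jump'' lemma) the observation that small items of size $\le 1-r$ cannot push the running total over $[r,1]$ without landing inside it. The only difference is one of detail: the paper asserts the partition's existence in a single sentence, while you carry out the case enumeration, exhibit $P$ and $Q$ in each reachable configuration, and explicitly rule out the problematic $\{a_1,a_4,a\}$ configuration with two $M_4$-items --- precisely the work the paper's terse proof leaves to the reader.
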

\begin{proof}
If the condition at Line~\ref{line:17-9sqrt(3)_win} is satisfied in round $i$, then we have $s(B_i)\le 1\le R$.
Thus, we assume that the condition is not satisfied in round $i$.

The total size of medium items is not larger than $R$ because the total size of an $M_3$-item and an $M_4$-item is at most $r^2+r\le R$.
In addition, all the small items must fit the buffer since otherwise there exists a subset $B'\subseteq B_i$ that satisfies the condition at Line~\ref{line:17-9sqrt(3)_win}.
This is because there exists a partition of $B_i\cap M=B_i^{(1)}\cup B_i^{(2)}$ such that $s(B_i^{(1)})\le r$ and $s(B_i^{(2)})\le R-r$.
\end{proof}

Next, we observe that the algorithm outputs a solution $B^*$ such that $r\le s(B^*)\le 1$ if 
(i) $|I_n\cap L|\ge 1$ or (ii) $|I_n\cap M_1| \ge 1$ and $|I_n\cap (M_2\cup M_3)| \ge 1$.
Note that, if $s(B^*)\geq r$, then the competitive ratio is at most $\frac{1}{r}$.
\begin{observation}\label{17-9sqrt(3):L}
If there exists a large item in $I$, then $r\le s(B^*)\le 1$.
\end{observation}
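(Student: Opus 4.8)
The plan is to exploit a simple absorbing property of Algorithm~\ref{alg:17-9sqrt(3)}: once the buffer holds a set whose total size lies in $[r,1]$, it stays in this ``good'' regime forever after. First I would establish the following monotonicity claim: if $s(B_k)\in[r,1]$ for some round $k$, then $s(B_j)\in[r,1]$ for every $j\ge k$. This is an easy induction on $j$. For the step from $j$ to $j+1$, note that $B_j\subseteq B_j\cup\{e_{j+1}\}$ and $s(B_j)\in[r,1]$ by the inductive hypothesis, so the existence condition tested at Line~\ref{line:17-9sqrt(3)_win} in round $j+1$ is met (with $B'=B_j$), and hence the algorithm assigns $B_{j+1}\ot B'$ for some set with $r\le s(B')\le 1$.

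Second, I would locate a round in which the buffer first enters this good regime. Let $e_k$ be a large item in $I$, so $r\le s(e_k)\le 1$, where $s(e_k)\le 1$ holds because every item has size at most $1$. Then $\{e_k\}\subseteq B_{k-1}\cup\{e_k\}$ is a witness for the condition at Line~\ref{line:17-9sqrt(3)_win} in round $k$; consequently $B_k$ is set to some $B'$ with $r\le s(B')\le 1$, so $s(B_k)\in[r,1]$.

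Combining the two parts, $s(B_n)\in[r,1]$. Since $s(B_n)\le 1$, the set $B_n$ itself is admissible as the final transfer, so $s(B^*)\ge s(B_n)\ge r$, while $s(B^*)\le 1$ holds by the definition of $B^*$. Hence $r\le s(B^*)\le 1$. There is no genuinely hard step here; the only subtlety is recognizing that the predicate at Line~\ref{line:17-9sqrt(3)_win} can only become ``easier'' when one more item is added to the buffer, which is precisely what makes the good regime absorbing and lets a single large item lock the algorithm into a $1/r$-competitive state.
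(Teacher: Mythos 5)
Your proposal is correct and takes essentially the same approach as the paper's one-line proof, which implicitly relies on exactly the absorbing property of the condition at Line~\ref{line:17-9sqrt(3)_win} that you spell out; you have simply made the induction explicit.
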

\begin{proof}
This statement holds by $r\le s(e)\le 1$ for all $e\in L$.
\end{proof}

\begin{observation}\label{17-9sqrt(3):M_2_M_3}
  If $|I_n\cap M_1| \ge 1$ and $|I_n\cap (M_2\cup M_3)| \ge 1$, then $r\le s(B^*)\le 1$.
\end{observation}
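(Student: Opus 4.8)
The plan is to show that the guard at Line~\ref{line:17-9sqrt(3)_win} must fire at some round. Once it fires at round $i$, the chosen set $B_i$ itself has size in $[r,1]$, so it remains a valid witness for the guard at every later round; hence $s(B_j)\in[r,1]$ for all $j\ge i$, and in particular $r\le s(B_n)=s(B^*)\le 1$, which is the claim. So the whole task reduces to locating one round at which the guard fires.

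First I would record the key size estimate: for any $a\in M_1$ and $b\in M_2\cup M_3$,
\[r=(1-r)+(2r-1)<s(a)+s(b)<(2r-1)+r^2\le 1 ,\]
where the last inequality is $r^2+2r-2\le 0$, i.e.\ $r\le\sqrt 3-1$; this holds throughout the regime $R\le 17-9\sqrt3$ since $r=\frac{\sqrt{16R+1}+1}{8}$ is increasing in $R$ and equals exactly $\sqrt3-1$ at $R=17-9\sqrt3$ (this tightness is precisely why that endpoint appears). Consequently, whenever the set $B_{i-1}\cup\{e_i\}$ contains an $M_1$-item together with an $M_2\cup M_3$-item, that two-element set witnesses the condition at Line~\ref{line:17-9sqrt(3)_win}, so the guard fires at round $i$.

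It then remains to produce such a round. I would argue by contradiction: suppose the guard never fires, so the algorithm always enters the \textbf{else} branch, and there it always retains $\argmin\{s(e)\mid e\in A_1\cup A_2\cup A_3\}$, since the first \textbf{if} inserts this item into $B_i$ and nothing is ever removed from $B_i$ later in the round. Let $p$ and $q$ be the rounds at which the first $M_1$-item and the first $M_2\cup M_3$-item arrive (both exist by hypothesis, and $p\ne q$); assume $p<q$, the case $q<p$ being symmetric after interchanging the roles of $M_1$ and $M_2\cup M_3$. For $p\le i\le q-1$ no $M_2\cup M_3$-item has yet appeared, so $A_2\cup A_3=\emptyset$ at round $i$ and the retained item $\argmin\{s(e)\mid e\in A_1\}$ lies in $M_1$ whenever $A_1\ne\emptyset$. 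An easy induction then gives $B_i\cap M_1\ne\emptyset$ for all $p\le i\le q-1$: it holds at $i=p$ because $e_p\in A_1$, and $B_{i-1}\cap M_1\ne\emptyset$ forces $A_1\supseteq B_{i-1}\cap M_1\ne\emptyset$ at round $i$. But then at round $q$ the set $B_{q-1}\cup\{e_q\}$ contains an $M_1$-item (from $B_{q-1}$) and the $M_2\cup M_3$-item $e_q$, so by the size estimate the guard fires at round $q$ — contradicting the assumption.

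The two points that deserve a little care are (i) the size estimate near the endpoints of the regime, which is tight and where one should double-check the value of $r$, and (ii) the bookkeeping inside the \textbf{else} branch, namely that $\argmin\{s(e)\mid e\in A_1\cup A_2\cup A_3\}$ is never discarded later in the round; both follow directly from the description of Algorithm~\ref{alg:17-9sqrt(3)}. The $M_4$-items, the small items, and all the conditional additions in the \textbf{else} branch play no role in this argument, so I expect no real obstacle beyond getting these two verifications right.
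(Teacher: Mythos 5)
Your argument is correct and is essentially the paper's argument, just unpacked: the paper also locates the first round $k$ at which both an $M_1$-item and an $M_2\cup M_3$-item have appeared, uses the fact that the algorithm always keeps the smallest item of $A_1\cup A_2\cup A_3$ (which you spell out via induction), and then applies the same size estimate $r\le s(e')+s(e'')\le(2r-1)+r^2\le 1$. Your explicit verification that $(2r-1)+r^2\le 1$ is equivalent to $r\le\sqrt3-1$, with equality exactly at $R=17-9\sqrt3$, is a useful sanity check that the paper leaves implicit.
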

\begin{proof}
  Let $k$ be the smallest round such that $|I_k\cap M_1| \ge 1$ and $|I_k\cap (M_2\cup M_3)| \ge 1$.
  Then, $B_{k-1}\cup\{e_k\}$ contains an $M_1$-item $e'$ and an $M_2$- or $M_3$-item $e''$ because the algorithm always keeps the smallest medium item (if $s(B_{k-1})\notin [r,1]$).
  Thus, the claim holds since $r=(1-r)+(2r-1)\le s(e')+s(e'')\le (2r-1)+r^2\le 1$.
\end{proof}

By Observation~\ref{17-9sqrt(3):L}, it is sufficient to consider the case that $I_n\cap L=\emptyset$.
Let $\OPT_M\in \argmax\{s(X) \mid X \subseteq I_n\cap M, s(X)\le 1\}$.
Then, the competitive ratio of Algorithm~\ref{alg:17-9sqrt(3)} is
$$\frac{s(\OPT)}{s(B^*)}
\le \frac{s(\OPT_M) + s(I_n\cap S)}{s(B^*\cap M) + s(I_n\cap S)}
\le \frac{s(\OPT_M)}{s(B^*\cap M)}.$$
Therefore, we can assume $I_n\subseteq M$ without loss of generality.

Now, we prove Theorem~\ref{thm:17-9sqrt(3)}.
We consider the following three cases separately:
$|\OPT|\ge 3$, $|\OPT|=2$, and $|\OPT|\le 1$.
\begin{lemma}\label{lem:17-9sqrt(3):3}
If $I_n\subseteq M$ and $|\OPT|\ge 3$, then the competitive ratio is at most $\frac{\sqrt{16R+1}-1}{2R} (=1/r)$.
\end{lemma}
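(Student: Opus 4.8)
The plan is to exploit the fact that when $|\OPT|\ge 3$, every item in $\OPT$ is small relative to $r^2$, and a greedy-type argument forces $\ALG$ to accumulate enough total size. First I would note that since $I_n\subseteq M$ and all items have size $>1-r$, and $|\OPT|\ge 3$ means $3(1-r)<s(\OPT)\le 1$, so $r>2/3$ is consistent and in fact $|\OPT|=3$ exactly (because $4(1-r)>1$ when $r>3/4$, but here $r<3/4$; so I need to check $|\OPT|\le 3$ more carefully — if $r$ is close to $2/3$ we could have $|\OPT|=4$ with four items each slightly above $1-r\approx 1/3$, but $4\cdot(1-r)>1$ iff $r>3/4$; since $r<3/4$ this does \emph{not} rule out $|\OPT|=4$). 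Let me restructure: I would first argue that any $e\in\OPT$ satisfies $s(e)< r^2$ whenever $|\OPT|\ge 2$ wait, that is not enough either. The cleaner route: if $|\OPT|\ge 3$, then at least $|\OPT|-1\ge 2$ of the items in $\OPT$ have size $<1/2$, hence lie in $M_1\cup M_2$, and in fact since $s(\OPT)\le 1$ with at least three summands each exceeding $1-r$, each item of $\OPT$ has size $<1-2(1-r)=2r-1$, i.e.\ \emph{every} item of $\OPT$ is an $M_1$-item (using $2r-1\le 1/2\le r^2$). So $\OPT\subseteq M_1$ and $3(1-r)<s(\OPT)\le 1$.

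Next I would track what $\ALG$ holds at the end. Assume $s(B_n)\notin[r,1]$, since otherwise $s(B^*)\ge r$ and we are done. By the algorithm's behaviour, since only $M_1$-items (together with possibly some $M_2,M_3,M_4$-items that arrived but, as we argued, $\OPT$ forces their absence from $\OPT$ — but they may still appear in $I_n$!). Hmm — the subtlety is that $I_n$ may contain $M_2,M_3,M_4$-items even though $\OPT\subseteq M_1$. But then Observation~\ref{17-9sqrt(3):M_2_M_3} applies the moment an $M_1$-item and an $M_2$- or $M_3$-item coexist, giving $s(B^*)\ge r$ immediately. And if $I_n$ contains an $M_4$-item together with two $M_1$-items whose total size is at least the smallest $M_4$-item's size, the algorithm packs two $M_1$-items; otherwise it packs one $M_1$-item plus the smallest $M_4$-item. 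In either branch the retained set has size $\ge r$: two $M_1$-items chosen so that their sum reaches the $M_4$-threshold, i.e.\ $\ge \min_{A_4}s(e)\ge r^2$, which is not quite $r$ — so I would then invoke the fourth \texttt{if} block, which adds a further smallest medium item when there is room up to $R-r$, pushing the total past $r$ by the additivity $r^2+(1-r)\ge r$ or $2(1-r)+(1-r)$ type inequalities. The key numeric facts $r\ge R-1$, $r\ge 2-R$, $2r>R$, $2/3<r<3/4$ listed before the theorem are exactly what make these threshold comparisons work.

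So the structural skeleton is: (1) reduce to $I_n\subseteq M$ (already done before the lemma); (2) show $\OPT\subseteq M_1$ and $|\OPT|=3$ using $3(1-r)<s(\OPT)\le 1<4(1-r)$ — wait, $4(1-r)>1$ fails for $r<3/4$, so I instead bound $|\OPT|\le 3$ by observing each $M_1$-item has size $>1-r\ge 1/4$ and $1=s(\OPT)$ cannot contain four items of size $>1/4$... that gives $\le 3$ only if each is $\ge 1/4$ strictly, which holds since $1-r>1/4$; good; (3) assume $s(B_n)\notin[r,1]$, and case on whether $I_n$ contains an $M_2$- or $M_3$-item (then Obs.~\ref{17-9sqrt(3):M_2_M_3} finishes) or not (then $I_n\subseteq M_1\cup M_4$, and the algorithm's $M_1/M_4$ handling retains a set whose size I lower-bound by $r$ using the stated inequalities); (4) conclude $s(B^*)\ge r$, hence $s(\OPT)/s(B^*)\le 1/r=\frac{\sqrt{16R+1}-1}{2R}$. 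The main obstacle is step (3): verifying that in \emph{every} branch of the algorithm's medium-item logic, and for \emph{every} arrival order of the (at most a few) relevant medium items, the final retained set reaches size $r$ — this requires a careful case analysis keyed to which of "two $M_1$-items" versus "one $M_1$-item plus the smallest $M_4$-item" the algorithm commits to, combined with whether the fourth \texttt{if} block fires; the numeric slack is thin and relies precisely on $R\le 17-9\sqrt 3$.
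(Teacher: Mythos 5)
Your step (2) is correct: since every item of $I_n$ has size $>1-r>1/4$ and $s(\OPT)\le 1$, you get $|\OPT|\le 3$, hence $|\OPT|=3$; and each member of $\OPT$ has size $\le 1-2(1-r)=2r-1$, forcing $\OPT\subseteq M_1$. This matches the paper.

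Step (3) has a genuine gap, and you essentially say so yourself. After dispatching $M_2,M_3$ via Observation~\ref{17-9sqrt(3):M_2_M_3}, you are left with $I_n\subseteq M_1\cup M_4$ and try to bound $s(B_n)$ from below by~$r$ by inspecting the final buffer. This cannot work in general: when the algorithm keeps two $M_1$-items, it does so \emph{because} their total size is at most $\min_{e\in A_4}s(e)$ (you have the inequality direction backwards), and two $M_1$-items can sum to as little as $2(1-r)<r$, so the final buffer alone does not certify size $\ge r$. Appealing to the fourth \texttt{if} block does not obviously rescue this without a delicate case analysis that you acknowledge you have not done.

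The idea you are missing is to look at a specific round rather than at $B_n$. Let $a,b,c\in\OPT\subseteq M_1$ arrive in that order and let $k$ be the round at which $c$ arrives, and suppose for contradiction that $s(B_i)\notin[r,1]$ for all $i$. Just before the win check at round $k$, the buffer $B_{k-1}\cup\{c\}$ contains, by the algorithm's $M_1/M_4$ logic, either two $M_1$-items of total size $\le s(a)+s(b)$, or else a single smallest $M_4$-item $m_4$ whose size is strictly below the two smallest $M_1$-items' sum, hence $s(m_4)<s(a)+s(b)$. In the first case, those two items together with $c$ have total size in $[3(1-r),\,s(a)+s(b)+s(c)]\subseteq[r,1]$. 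In the second case, $\{m_4,c\}$ has size in $[r^2+(1-r),\,s(a)+s(b)+s(c)]\subseteq[r,1]$ since $(r-1)^2\ge 0$ gives $r^2+(1-r)\ge r$. Either way, the win condition of Line~\ref{line:17-9sqrt(3)_win} fires at round $k$, contradicting the assumption. This is exactly why one must argue about the round where the \emph{third} optimum item appears, since its presence in $B_{k-1}\cup\{e_k\}$ is what makes the size lower bound $r$ attainable.
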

\begin{proof}
  We have $|\OPT \cap M_1| = 3$ since otherwise $s(\OPT)>1$. 
  Let $a,b,c \in \OPT \cap M_1$ arrive in the order of $a,b,c$. 
  Let $k$ be the round such that $c$ arrives.
  Suppose to the contrary that $s(B_k)\notin [r,1]$.
  In the round $k$, the algorithm keeps two $M_1$-items whose total size is at most $s(a) + s(b)$, or has an $M_4$-item with size at most $s(a) + s(b)$. 
  In both cases, there exists $B'\subseteq B_{k-1}\cup\{e_k\}$ such that $r\le 3(1-r)\le s(B')\le s(a)+s(b)+s(c)\le 1$, a contradiction.
  Hence, $r\le s(B^*)\le 1$ and the claim holds.
\end{proof}

\begin{lemma}\label{lem:17-9sqrt(3):1}
If $I_n\subseteq M$ and $|\OPT|\le 1$, then the competitive ratio is at most $\frac{\sqrt{16R+1}-1}{2R} (=1/r)$.
\end{lemma}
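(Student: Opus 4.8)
The plan is to reduce to a handful of configurations and then simply read off the inequality $s(B^*)\ge r\cdot s(\OPT)$, which gives ratio $\le 1/r$. Write $e_{\max}$ for a largest item of $I$; since $\sum_i s(e_i)>R>1$ we have $I\ne\emptyset$ and a single medium item fits the knapsack, so $\OPT=\{e_{\max}\}$. First, if the condition on Line~\ref{line:17-9sqrt(3)_win} ever holds at some round $i$, then $s(B_i)\in[r,1]$, and since the chosen set survives into $B_{i+1}\subseteq B_i\cup\{e_{i+1}\}$ the condition keeps being satisfied, so $s(B_n)\in[r,1]$ and $s(B^*)\ge r$; done. Hence assume no subset of any $B_{i-1}\cup\{e_i\}$ has size in $[r,1]$. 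If $e_{\max}\in B_n$ then $s(B^*)\ge s(e_{\max})=s(\OPT)$; done. So assume $e_{\max}\notin B_n$. Finally, by Observation~\ref{17-9sqrt(3):M_2_M_3} we may assume $I$ does not contain both an $M_1$-item and an $(M_2\cup M_3)$-item.

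Since $|\OPT|=1$, every pair of (medium) items has total size $>1$; as $|I|\ge 2$, $e_{\max}$ cannot lie in $M_1\cup M_2$ (otherwise all items have size $<1/2$ and some pair fits), so $e_{\max}\in M_3\cup M_4$. I would record two invariants of Algorithm~\ref{alg:17-9sqrt(3)} in the ``no win'' regime, proved by induction on the round using that $B_i$ is rebuilt from $B_{i-1}\cup\{e_i\}$: (a) $B_i$ contains the smallest item of $I_i$ lying in $M_1\cup M_2\cup M_3$; (b) if $I_i$ contains an $M_4$-item and $B_i$ holds at most one $M_1$-item, then $B_i$ contains some $M_4$-item, whose size is therefore $\ge r^2$.

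Now the case analysis. If $e_{\max}\in M_3$, then no $M_4$-item and (by the reduction) no $M_1$-item occurs, so $I\subseteq M_2\cup M_3$; by (a), $B_n$ contains the smallest item $a$, with $s(a)\ge 2r-1$, and as $s(e_{\max})<r^2$ we get $s(\OPT)/s(B^*)\le s(e_{\max})/s(a)<r^2/(2r-1)\le 1/r$, the last step being $r^3\le 2r-1$, which holds for $r\in(2/3,3/4)$. If $e_{\max}\in M_4$, let $a_{\min}$ be the smallest item of $I$ (so $a_{\min}\ne e_{\max}$), which lies in $B_n$ by (a), or by (b) when $I\subseteq M_4$. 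If $s(e_{\max})\le\frac{1}{1+r}$, then $s(a_{\min})>1-s(e_{\max})\ge r\,s(e_{\max})$, done. Otherwise $s(e_{\max})>\frac{1}{1+r}>r^2$, and I split on composition using the reduction: if $I\subseteq M_4$ then $s(a_{\min})\ge r^2>r\,s(e_{\max})$; if $I$ has an $(M_2\cup M_3)$-item then $I$ has no $M_1$-item, so by (b) $B_n$ contains an $M_4$-item and $s(B^*)\ge r^2>r\,s(e_{\max})$; and if $I$ has an $M_1$-item then $I\subseteq M_1\cup M_4$, and I claim that at round $n$ either $B_n$ contains an $M_4$-item, giving $s(B^*)\ge r^2$, or $B_n$ holds two $M_1$-items $a,b$ whose guard $s(a)+s(b)\le\min\{s(e):e\in A_4\}<r$ makes $\{a,b\}$ knapsack-feasible, giving $s(B^*)\ge s(a)+s(b)>2(1-r)\ge r^2$. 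In every case $s(B^*)>r\,s(e_{\max})=r\,s(\OPT)$, so the ratio is below $1/r$.

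The two places where the right endpoint of the interval is used are $2(1-r)\ge r^2$, i.e.\ $r\le\sqrt3-1$, equivalently $R\le 17-9\sqrt3$, and $r^3\le 2r-1$, which holds throughout $(2/3,3/4)$. I expect the main obstacle to be the last sub-case ($e_{\max}\in M_4$ with an $M_1$-item present): one must argue that at round $n$ Algorithm~\ref{alg:17-9sqrt(3)} is in exactly one of the two claimed states. This comes down to showing the $M_4$-\textbf{if} is skipped at round $n$ only because $B_n$ holds two $M_1$-items — the alternative ``$A_4=\emptyset$ at round $n$'' is excluded because $e_{\max}\in M_4$ has been presented and invariant (b) keeps some $M_4$-item alive in every round where at most one $M_1$-item is held — together with checking that in the two-$M_1$ state the retained pair really does satisfy $s(a)+s(b)<r<1$. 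Once this bookkeeping about the algorithm's buffer is established, the rest reduces to the two numeric inequalities above and the size-class boundaries of Figure~\ref{fig:17-9sqrt(3)}.
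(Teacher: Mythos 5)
Your proof is correct and follows the same high-level strategy as the paper's: analyze the position of the (unique) optimal item $m^*=e_{\max}$ among $M_1\cup M_2$, $M_3$, $M_4$ and produce a lower bound on $s(B^*)$ from what the algorithm keeps. The paper's own proof is terser: for $m^*\in M_3$ it bounds $s(e^*)\ge 1-r^2$ via ``$\{m^*,e^*\}$ would be a better solution'' and checks $r^2/(1-r^2)\le 1/r$ (your bound $s(a)\ge 2r-1$ is weaker but still suffices since $r^3\le 2r-1$ on $(2/3,3/4)$); for $m^*\in M_4$ the paper simply asserts $s(B^*)\ge r^2$ with no justification. Your invariants (a)–(b) and the case split over the composition of $I$ (pure $M_4$, $M_2\cup M_3$ present, $M_1$ present hence $I\subseteq M_1\cup M_4$) supply exactly the bookkeeping needed to justify that assertion, which is a genuine improvement in rigor. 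Two small points worth tightening: in the two-$M_1$-item branch you justify $s(a)+s(b)<1$ via ``$\min_{e\in A_4}s(e)<r$,'' but at round $n$ one may have $A_4=\emptyset$ (all $M_4$-items discarded earlier); the conclusion still holds because $M_1$-items have size $<2r-1$, so $s(a)+s(b)<4r-2<1$ for $r<3/4$, and this fallback should be stated. And the opening WLOG ``$\sum_i s(e_i)>R$'' is not automatic for this algorithm; what you actually use is only $|I|\ge 2$, and the case $|I|\le 1$ is trivial (the algorithm keeps the lone item), so it is cleaner to state it that way rather than appealing to an unproved total-size assumption.
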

\begin{proof}
Since the claim is clear when $\OPT=\emptyset$, we assume $|\OPT|= 1$.
Let $\OPT=\{m^*\}$.
Note that $m^*$ is the largest item in $I_n$ and $s(B_n)\le s(m^*)<r$.

If $m^*\in M_1\cup M_2$, we have $I\coloneqq(m^*)$ (otherwise, there exists a better solution than $\OPT$).
Thus, $B^*=\{m^*\}$ and the competitive ratio is one.

If $m^*\in M_3$, we have $I_n\cap M_1=\emptyset$ by Observation~\ref{17-9sqrt(3):M_2_M_3}. % and $I_n\cap M_4=\emptyset$.
Let $e^* \in \argmin\{s(e) \mid e\in I_n\}$. 
Then, we have $s(e) \geq 1-r^2$ (otherwise $\{m^*,e^*\}$ is a better solution).
Thus, $s(B^*)\ge 1-r^2$ and the competitive ratio is at most $r^2/(1-r^2) \le 1/r$.

If $m^*\in M_4$, we have $s(B^*)\ge r^2$ and the competitive ratio is at most $r/r^2 = 1/r$.
\end{proof}

\begin{lemma}\label{lem:17-9sqrt(3):2}
  If $I_n\subseteq M$ and $|\OPT|=2$, then the competitive ratio is at most $\frac{\sqrt{16R+1}-1}{2R} (=1/r)$.
\end{lemma}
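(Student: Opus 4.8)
The plan is to reduce to a sharp case analysis on the two items of $\OPT$, mirroring the structure of Lemmas~\ref{lem:17-9sqrt(3):3} and~\ref{lem:17-9sqrt(3):1} but with more sub-cases. First I would dispose of the trivial possibility: if $s(B^*)\ge r$, then since $s(\OPT)\le 1$ the ratio is at most $1/r$, so from now on assume $s(B^*)<r$. Observe that once the test in Line~\ref{line:17-9sqrt(3)_win} succeeds in some round it succeeds in every later round (because $B_{i-1}\cup\{e_i\}$ still contains the witnessing subset), so $s(B^*)<r$ forces Algorithm~\ref{alg:17-9sqrt(3)} into the \textbf{Else} branch in every round; in particular no subset of any $B_i$ has size in $[r,1]$, and at every round the buffer retains the smallest item of $A_1\cup A_2\cup A_3$, and, whenever it does not hold two $M_1$-items, the smallest $M_4$-item. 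By Observation~\ref{17-9sqrt(3):M_2_M_3} we may assume either $I_n\cap M_1=\emptyset$ or $I_n\cap(M_2\cup M_3)=\emptyset$, and in the latter case $I_n\subseteq M_1\cup M_4$. Write $\OPT=\{x,y\}$, ordered so that $y$ does not arrive before $x$, and split according to which of $M_1,M_2,M_3,M_4$ contain $x$ and $y$.

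The "easy" cases are those in which $\OPT$ pairs an $M_1$-item with an item of size at least $2r-1$, or two items each of size at least $2r-1$; here I would use the arithmetic identities and inequalities that hold precisely for $2/3<r\le\sqrt3-1$ (the range matching $\tfrac{1+\sqrt2}{2}\le R\le 17-9\sqrt3$, since $R=r(4r-1)$): $(1-r)+(2r-1)=r$; $(1-r)+r^2>r$ (as $(r-1)^2>0$); $3(1-r)\ge r$ (as $r\le 3/4$); $2(2r-1)\ge r$ (as $r\ge 2/3$); and $2(1-r)\ge r^2$ (equivalently $r\le\sqrt3-1$). Combining these with the fact that the algorithm always keeps the smallest item of $A_1\cup A_2\cup A_3$ and (barring two $M_1$-items) the smallest $M_4$-item, one shows in each such case that the buffer already contains a subset of size in $[r,1]$, contradicting $s(B^*)<r$; these cases are therefore vacuous. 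What survives the case split and these eliminations is the situation in which $\OPT$ consists of two small items: two $M_1$-items, or one $M_1$-item and one $M_4$-item, with $I_n\subseteq M_1\cup M_4$ throughout.

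For the surviving cases I would trace Algorithm~\ref{alg:17-9sqrt(3)} on $I_n\subseteq M_1\cup M_4$: it keeps the smallest $M_1$-item $a$; then either the second-smallest $M_1$-item $b$ (when $s(a)+s(b)\le\min_{e\in A_4}s(e)$) or else the smallest $M_4$-item; and, if the "core" it holds has size at most $R-r=2r(2r-1)$, it also takes the smallest remaining medium item via the final \textbf{if}. The point is that $R-r$ is calibrated so that every pair of $M_1$-items has total size less than $4r-2=(R-r)/r$: hence if the two smallest $M_1$-items have total size $>R-r$, then already $s(a)+s(b)>r\cdot s(\OPT)$ and we are done; whereas if their total is at most $R-r$, the algorithm adds a third item, and then either the three retained items have total size at most $1$ — so size at least $3(1-r)\ge r$, contradicting $s(B^*)<r$ — or they have total size exceeding $1$, and in that last situation one uses $s(a_1)+s(a_2)+s(a_3)>1$ together with $2(1-r)\ge r^2$ to check directly that the best pair inside $B_n$ has size at least $r\cdot s(\OPT)$. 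The sub-case with one $M_1$- and one $M_4$-item is handled by the same strategy, using $(1-r)+r^2>r$ to make the pair \{smallest $M_1$-item, smallest $M_4$-item\} a winning subset whenever the algorithm would simultaneously hold both, and otherwise tracking which of the two $M_1$-items remains and appealing again to the $R-r$ threshold.

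I expect the main obstacle to be precisely this last bookkeeping: one must keep track of whether an $M_4$-item is present (which decides whether the algorithm keeps two $M_1$-items or an $M_1$- and an $M_4$-item), whether the retained core is below the threshold $R-r$ (which decides whether the final \textbf{if} adds a third item), and whether the resulting retained set fits within size $1$ — and this branches into a moderate number of configurations in the tightest of which both $R-r=2r(2r-1)$ and $r\le\sqrt3-1$ are invoked together and are tight at $R=17-9\sqrt3$. Organizing the argument so that every branch terminates either in a "subset of size in $[r,1]$" contradiction or in a clean inequality $s(B^*)\ge r\cdot s(\OPT)$ is the delicate part; the arithmetic inside each branch is routine.
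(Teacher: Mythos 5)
Your high-level plan — case analysis on which $M_i$'s the two items of $\OPT$ come from, elimination of the mixed cases by winning-subset contradictions, and a threshold argument around $R-r$ for the surviving case — is the same strategy the paper uses, and the inequalities you list ($(1-r)+(2r-1)=r$, $(1-r)+r^2\ge r$, $3(1-r)\ge r$, $2(2r-1)\ge r$) are exactly the right ones for eliminating the easy cases. However, the execution of the hard case has a real gap.

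First, a misidentification: you name as ``surviving'' the case $\OPT=\{M_1\text{-item},\,M_4\text{-item}\}$, but that case is in fact vacuous and dispatched the same way as the other mixed cases. At the round the $M_4$-item $e_k$ arrives, the buffer already contains the smallest medium item $e^*$ with $s(e^*)\le s(e_j)$, and $r\le(1-r)+r^2\le s(e^*)+s(e_k)\le s(e_j)+s(e_k)\le 1$, so Line~\ref{line:17-9sqrt(3)_win} fires immediately. The only genuine hard case is $\OPT\subseteq M_1$. Second, and more importantly, your sketch of the hard case conflates the composition of $\OPT$ with the composition of $B_n$. The paper's proof splits $\OPT\subseteq M_1$ into two sub-cases depending on whether $B_n\cap M_4=\emptyset$: when $B_n$ contains an $M_4$-item, the algorithm is \emph{not} holding two $M_1$-items (it traded the second for the $M_4$-item), so your inequality $s(a)+s(b)>r\cdot s(\OPT)$ cannot be applied to two items that are both in $B_n$; instead one must argue via $s(m_2)\ge R-r$ or, if not, derive $s(m_1)>1-R+r$, then use the no-winning-subset assumption to bound every $M_1$-item in $I_n$ by $s(m)<R-1$ and land on the ratio $2(R-1)/(1-r/2)<1/r$. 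That derived bound $s(m)<R-1$ — and its analogue $2(R-1)/(2(1-R+r))<1/r$ when $B_n\cap M_4=\emptyset$ — is the load-bearing estimate in the surviving case, and it does not appear in your sketch. Your phrase ``the algorithm adds a third item'' is also imprecise: which third item is present changes across rounds, and the paper has to separately handle $m_3\in B_n$ and $m_3\notin B_n$ (where $m_3$ is the third smallest $M_1$-item), arguing in the latter case that $s(m_1)+s(m_2)>R-r$ from the discard condition. The inequality $2(1-r)\ge r^2$ that you invoke does not appear in (or substitute for) any of these steps. Without the dichotomy on $B_n\cap M_4$ and the bound $s(m)<R-1$, the hard case does not close.
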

\begin{proof}
Let $\OPT=\{e_{j},e_{k}\}$ with $j < k$.
If $r\le s(B_i)\le 1$ for some $i$, then $r\le s(B^*)\le 1$ and the competitive ratio is at most $\frac{\sqrt{16R+1}-1}{2R}$.
Thus, suppose that $s(B_i)\notin [r,1]$ for all $i$.

%% If $x_1 \notin M_1$ or $x_2 \notin M_1$ , the competitive ratio is at most $1/r$.  
\begin{description}
\item[Case 1:] Suppose that $e_{j} \in M_1$.
  By Observation~\ref{17-9sqrt(3):M_2_M_3}, we can assume $I_n\cap (M_2\cup M_3)=\emptyset$.
  Let $e^* \in \argmin\{s(e) \mid e \in B_{k-1}\}$.
  If $e_k\in M_4$, then we have $r\le s(B_k)\le 1$ by $r \leq s(e^*) + s(e_k) \leq s(e_j) + s(e_k) \leq 1$, a contradiction.
  Otherwise, i.e., $e_k\in M_1$, we consider the following two cases.

\item[Case 1-1:] Suppose that $e_j,e_k\in M_1$ and $B_n \cap M_4 \ne \emptyset$.
  Let $m_1 \in \argmin\{s(e)\mid e \in B_n\cap M_1\}$ (i.e., $s(m_1)=\min_{e\in I_n} s(e)$) and $m_2 \in B_n\cap M_4$.
  By $s(m_1)+s(m_2)\ge (1-r)+r^2\ge r$ and $s(B_n)\notin [r,1]$, we have $s(m_1)+s(m_2)> 1$.
  If $s(m_2)\ge R-r$, then the competitive ratio is at most $\frac{2(2r-1)}{R-r} = 1/r$.
  Thus, we have $R-r>s(m_2)$ and $s(m_1) > 1-s(m_2) > 1-R+r$.
  If $s(m_1) \geq r/2$, there exists a round $\ell$ such that $r\le s(B_\ell)\le 1$.
  Thus, we can assume $s(m_1) < r/2$ and hence we have $s(m_2) > 1-s(m_1) > 1 - r/2$.
  In addition, we have $s(m)<R-1$ for any $m\in I_n \cap M_1$ by $s(m)+s(m_1)<r$.
  Therefore, the competitive ratio is at most $\frac{2(R-1)}{1-r/2} < 1/r$.

\item[Case 1-2:] Suppose that $e_j,e_k\in M_1$ and $B_n \cap M_4 = \emptyset$.
  In this case, we have $|B_n\cap M_1| \ge 2$.
  Let $m_1 \in \argmin\{s(e)\mid e \in B_n\cap M_1\}$ and $m_2 \in \argmin\{s(e)\mid e \in (B_n\cap M_1)\setminus\{m_1\}\}$.
  Note that $s(m_1)=\min_{e\in I_n}s(e)$ and $s(m_2)=\min_{e\in I_n\setminus\{m_1\}}s(e)$.
  We can assume $s(m_1) + s(m_2) < R-r $ since otherwise the competitive ratio is at most $\frac{2(2r-1)}{R-r}=1/r$.
  If $|I_n\cap M_1| = 2$, we have $\{m_1,m_2\}=\{e_j,e_k\}$ and the competitive ratio is one.
  Hence, we assume $|I_n\cap M_1| \ge 3$.
  Let $m_3 \in \argmin\{s(e)\mid e \in I_n \setminus \{ m_1,m_2\}\}$.
  If $|\argmin\{s(e)\mid e \in I_n \setminus \{ m_1,m_2\}\}|>1$,
  we pick the one that maximizes $\max\{i\mid m_3\in B_{i-1}\cup\{e_i\}\}$.

  Suppose that $m_3\in B_n$.
  Then, we can assume  $s(m_2) + s(m_3) < R-r$ since otherwise the competitive ratio is at most $\frac{2(2r-1)}{R-r}=1/r$.
  We have $s(m_1) > 1-R+r$ (otherwise $r\le 3(1-r)\le s(m_1)+s(m_2)+s(m_3)\le 1$) and hence $R-1 > s(m) > 1-R+r$ for any $m\in I_n \cap M_1$.
  Thus, the competitive ratio is at most $\frac{2(R-1)}{2(1-R+r)} < 1/r$.

  Suppose that $m_3\notin B_n$.
  Let $\ell$ be the round such that $m_3$ is discarded (i.e., $m_3\in B_{\ell-1}\cup\{e_{\ell}\}$ and $m_3\notin B_{\ell}$).
  Here, we have two cases to consider: $B_{\ell}=\{m_1,m_2\}$ and $B_{\ell}=\{m',m_4\}$ with $m'\in\{m_1,m_2\}$ and $m_4\in M_4$.
  If $B_{\ell}=\{m_1,m_2\}$, then $s(m_1)+s(m_2)>R-r$ (otherwise $B_{\ell}=\{m_1,m_2,m_3\}$).
  If $B_{\ell}=\{m',m_4\}$ with $m'\in\{m_1,m_2\}$ and $m_4\in M_4$,
  we have $s(m_1)+s(m_2)>R-r$ (otherwise $s(m_1)+s(m_2)\le R-r$ implies $|B_n|\ge 3$, a contradiction).
  Hence, in both cases, the competitive ratio is at most $\frac{2(2r-1)}{R-r}=\frac{1}{r}$.
  %% we have $s(m_1)+s(m_2)+s(m_4)>R$ (otherwise $B_n\cap M_4\ne\emptyset$).
  %% Hence, the competitive ratio is at most \(\frac{s(m_j)+s(m_k)}{s(m_1)+s(m_2)}\le \frac{2(2r-1)}{R-s(m_4)} \le \frac{2(2r-1)}{R-r}=\frac{1}{r}\).
  
\item[Case 2:] Suppose that $e_j \in M_2\cup M_3$.
  By Observation~\ref{17-9sqrt(3):M_2_M_3}, we can assume $I_n\cap M_1=\emptyset$.
  Let $e^* \in \argmin\{s(e')\mid e' \in B_{k -1}\}$. Then, $e^*\in M_2\cup M_3$.
  By $r\le 2(2r-1)\leq s(e^*) + s(e_k) \le s(e_j)+s(e_k)\le 1$, we have $r\le s(B_k)\le 1$, a contradiction.

\item[Case 3:] Suppose that $e_j \in M_4$.
  In the round $k$, the algorithm keeps an $M_4$-item whose size is at most $s(e_j)$ or two $M_1$-items whose total size is at most $s(e_j)$.
  Thus, $r\le s(B_k)\le 1$, a contradiction. \qedhere
\end{description}
\end{proof}

Hence, Theorem~\ref{thm:17-9sqrt(3)} is proved from Lemmas~\ref{lem:17-9sqrt(3):3}, \ref{lem:17-9sqrt(3):1}, and \ref{lem:17-9sqrt(3):2}.

\subsection{\texorpdfstring{$2\sqrt{3}-2\le R\le \frac{3}{2}$}{2sqrt{3}-2le Rle frac{3}{2}}}
In this subsection, we consider the problem for $2\sqrt{3}-2< R\le \frac{3}{2}$.
We prove that the competitive ratio is $R/2$ for the case.

\subsubsection{Lower bound}
First, we give a lower bound of the competitive ratio.

\begin{theorem}\label{thm:prop/removable_lower_iii}
  For any sufficiently small positive real $\epsilon$, 
  the competitive ratio of the proportional\&removable online knapsack problem with a buffer
  is at least $2/R-\epsilon$ when $2\sqrt{3}-2<R<2$.
\end{theorem}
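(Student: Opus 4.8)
The plan is to exhibit, for any online algorithm $\ALG$, a short adversarial sequence forcing competitive ratio at least $2/R - \epsilon$ by the usual "force a removal, then punish the removal" technique already used in Theorems~\ref{thm:lower_1leRle2} and~\ref{thm:prop/removable_lower_ii}. The target ratio $2/R$ strongly suggests the optimal packing should have value close to $1$ while the algorithm is pinned at roughly $R/2$; so I would start the sequence with a couple of items each slightly larger than $R/2$ (say of size $R/2 + \delta$ for a small $\delta$ depending on $\epsilon$), whose two-fold total exceeds $R$, so that $\ALG$ must discard one of them. Whichever it discards, I follow up with a complementary item that, together with the discarded one, would have filled the knapsack to exactly $1$ (or close to it), while the surviving items the algorithm kept sum to only about $R/2$ in the knapsack — giving ratio $\approx 1/(R/2) = 2/R$.

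The key steps, in order: (1) fix $\epsilon' > 0$ small enough that $\frac{2}{R + (\text{error terms in }\epsilon')} \ge 2/R - \epsilon$; (2) present the first block of items of size just above $R/2$ and argue the buffer constraint $s(B_i) \le R$ forces at least one discard; (3) branch on which item is discarded, and in each branch append a "completion" item so that $\OPT$ of the resulting sequence is $1$ (or $1 - O(\epsilon')$) while $\ALG$'s buffer can realize at most $\approx R/2$ into the knapsack; (4) possibly iterate the forcing step a bounded number of times — since $2\sqrt{3}-2 < R < 2$ the buffer can hold at most two such "large-ish" items, so the branching tree has bounded depth and each leaf yields the desired bound; (5) conclude by taking the worst branch. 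I would mirror the bookkeeping of Theorem~\ref{thm:prop/removable_lower_ii}, which handles a three-item base sequence with a similar nested case analysis, and adapt the thresholds to the regime $2\sqrt{3}-2 < R < 2$ where $R/2$ is the binding quantity.

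The main obstacle I anticipate is the case analysis in step~(3)–(4): when $\ALG$ keeps two mid-sized items it may be able to combine a subset of them more cleverly than "just take one," so I must choose the item sizes so that \emph{no} subset of what $\ALG$ retains fits in the knapsack with value exceeding $R/2 + O(\epsilon')$ — this is exactly where the lower endpoint $2\sqrt{3}-2$ (equivalently $r \ge \sqrt{3}-1$, since here $r = R/2$ and $r^2 \ge 1-r$ forces two such items not to fit) enters, and getting the inequalities on $r^2$ versus $1-r$ versus $R/2$ to line up is the delicate bit. Once the size parameters are pinned down, the completion items and the ratio computations are routine. I would also double-check the boundary behavior at $R = 2\sqrt{3}-2$ against the adjacent regime's bound $\frac{1+\sqrt{3}}{2}$ to make sure the two lower bounds agree there, as the paper's remark about chaining endpoints suggests they should.
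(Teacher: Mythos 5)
Your high-level plan (force a removal, then punish it; mirror the nested case analysis of Theorem~\ref{thm:prop/removable_lower_ii}) is the right family of techniques, but the specific construction you propose would not work, and your algebraic explanation of the threshold is incorrect.

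The concrete problem with ``a couple of items each slightly larger than $R/2$'': suppose the first items have size $R/2+\delta_1$ and $R/2+\delta_2$. They exceed the buffer, so one is discarded, say the one with $\delta_i$. Your plan is to give the completion $1-R/2-\delta_i$. But the algorithm still holds the other item of size $R/2+\delta_j$, and $\bigl(R/2+\delta_j\bigr)+\bigl(1-R/2-\delta_i\bigr)=1+\delta_j-\delta_i$. If the algorithm happened to discard the larger of the two ($\delta_i>\delta_j$), this sum is \emph{strictly less than} $1$, so the algorithm packs both and matches $\OPT$ — ratio $\approx 1$, not $2/R$. There is no choice of $\delta_1,\delta_2$ that avoids this for every branch. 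The paper's construction instead begins with \emph{three distinct sizes} $R/2,\ (R^2+\epsilon)/4,\ R-1$, chosen so that \emph{every pair} of them already overflows the knapsack ($R/2+R^2/4>1$, $R/2+(R-1)>1$, $R^2/4+(R-1)>1$) while the triple overflows the buffer. That is the essential structural property: after any discard and any completion, the algorithm can never combine two retained items to beat $R/2$, so the adversary can force $\OPT=1$ versus $\ALG\le R/2$. The nested branching after the second-item discard (your step (4)) then continues with one more complementary item and again exploits the same property.

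Your claimed algebraic source of the lower endpoint is also off. You wrote that $2\sqrt{3}-2$ corresponds to ``$r\ge\sqrt{3}-1$, since $r=R/2$ and $r^2\ge 1-r$''. But $r^2\ge 1-r$ holds already for $r\ge(\sqrt{5}-1)/2\approx 0.618$, i.e. $R\ge\sqrt 5 -1 \approx 1.24$, which is a different threshold. The inequality that actually pins down $2\sqrt{3}-2$ is $\tfrac{R^2}{4}+(R-1)>1$ (the smallest two of the three items must already overflow the knapsack), equivalently $R^2+4R-8>0$, equivalently $R>2\sqrt{3}-2$. So the binding constraint is a relation among all three of the sizes $R/2$, $R^2/4$, $R-1$, not the $r^2\ge 1-r$ condition you wrote. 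The spirit of your proposal is sound, but as written the sizes and the threshold derivation both need to be replaced.
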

\begin{proof}
  Consider the following input sequence:
  \begin{align*}
  \frac{R}{2},\frac{R^2+\epsilon}{4},R-1.
  \end{align*}
  Here, we have $R/2>R^2/4>R-1$ and $R^2/4+(R-1)>1$.
  
  If an online algorithm discards the first item,
  then the competitive ratio is at least $\frac{R/2}{(R^2+\epsilon)/4}>\frac{2/R}{1+\epsilon/R^2}>2/R-\epsilon$.
  If an online algorithm discards the third item,
  suppose that the next item has size $2-R$, and 
  then the competitive ratio is at least $\frac{(R-1)+(2-R)}{R/2}\ge 2/R$.

  If an online algorithm discards the second item,
  then we assume that the fourth item with size $1-R/2+\epsilon/4$ arrives.
  Now, it must discard an item because $R/2+(R-1)+(1-R/2+\epsilon/4)>R$.
  If it removes the first item, suppose that the next item has size $1-R/2$,
  and then the competitive ratio is at least $\frac{R/2+(1-R/2)}{(R-1)+(1-R/2+\epsilon/4)}>2/R-\epsilon$.
  If it removes the third item, suppose that the next item has size $1-(R^2+\epsilon)/4$,
  and then the competitive ratio is at least $\frac{(R^2+\epsilon)/4+(1-(R^2+\epsilon)/4)}{R/2}>2/R-\epsilon$.
  Finally, if it discards the item with size $1-R/2+\epsilon/4$, 
  suppose that the next item has size $R/2-\epsilon/4$,
  and then the competitive ratio is at least $\frac{(1-R/2+\epsilon/4)+(R/2-\epsilon/4)}{R/2}=2/R$.
  
  Hence, the competitive ratio is at least $2/R-\epsilon$.
\end{proof}

\subsubsection{Upper bound}
In this subsubsection, an item \(e\) is called {\em small}, {\em medium}, and {\em large}
if \(s(e)\le 1-R/2\), \(1-R/2< s(e)<R/2\), and \(R/2\le s(e)\), respectively.
Let $S$, $M$, and $L$ respectively denote the sets of small, medium, and large items.
$M$ is further partitioned into four subsets $M_i$ for $ 1 \le i \le 4$,
where $M_1$, $M_2$, $M_3$, and $M_4$ respectively denote the set of the items $e$ with size
\(1-R/2< s(e)< R/4\), \(R/4\le s(e)< 1/2\), \(1/2\le s(e)<R^2/4\), and \(R^2/4\le s(e)<R/2\)
(see Figure~\ref{fig:sizes2/R}).
An item $e$ is also called an {\em $M_i$-item} if \(e\in M_i\).

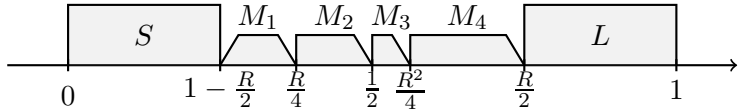
\begin{figure}[htbp]
\begin{center}
  \begin{tikzpicture}[scale=0.8]
    \draw[thick,->] (-1,0)--(11,0);
    \filldraw[thick,fill=gray,fill opacity=0.1] (0,0)--(0,1)--(2.5,1)--(2.5,0);
    \filldraw[thick,fill=gray,fill opacity=0.1] (2.5,0)--(2.8,.5)--(3.45,.5)--(3.75,0);
    \filldraw[thick,fill=gray,fill opacity=0.1] (3.75,0)--(3.75,.5)--(4.7,.5)--(5,0);
    \filldraw[thick,fill=gray,fill opacity=0.1] (5,0)--(5,.5)--(5.325,.5)--(5.625,0);
    \filldraw[thick,fill=gray,fill opacity=0.1] (5.625,0)--(5.625,.5)--(7.2,.5)--(7.5,0);
    \filldraw[thick,fill=gray,fill opacity=0.1] (7.5,0)--(7.5,1)--(10,1)--(10,0);
    \draw (1.25,0.5) node {$S$};
    \draw (3.125,0.8) node {$M_1$};
    \draw (4.375,0.8) node {$M_2$};
    \draw (5.3125,0.8) node {$M_3$};
    \draw (6.5625,0.8) node {$M_4$};
    \draw (8.75,0.5) node {$L$};
    \draw[thick] (0.00,-0.1)--(0.00,0.1);
    \draw (0,-0.5) node {$0$};
    \draw[thick] (2.5,-0.1)--(2.5,0.1);
    \draw (2.5,-0.4) node {$1-\frac{R}{2}$};
    \draw[thick] (3.75,-0.1)--(3.75,0.1);
    \draw (3.75,-0.4) node {$\frac{R}{4}$};
    \draw[thick] (5,-0.1)--(5,0.1);
    \draw (5,-0.4) node {$\frac{1}{2}$};
    \draw[thick] (5.625,-0.1)--(5.625,0.1);
    \draw (5.625,-0.4) node {$\frac{R^2}{4}$};
    \draw[thick] (7.5,-0.1)--(7.5,0.1);
    \draw (7.5,-0.4) node {$\frac{R}{2}$};
    \draw[thick] (10.00,-0.1)--(10.00,0.1);
    \draw (10.0,-0.4) node {$1$};
  \end{tikzpicture}
  \caption{Item partition for $2\sqrt{3}-2\le R\le \frac{3}{2}$.}\label{fig:sizes2/R}
\end{center}
\end{figure}

We consider Algorithm~\ref{alg:2/R}.
If the algorithm can select a set of items $B'$ such that $r\le s(B')\le 1$, it keeps the set $B'$ until the end.
Otherwise, it first selects the smallest $M_1$- and $M_2$-items.
If there exists no $M_2$-item or the total size of the two smallest $M_1$ items is not larger than the smallest $M_4$-item,
then it picks the second smallest $M_1$-item.
If there exists no item in $M_1\cup M_2$ or $M_4$, then it picks the smallest $M_3$-item.
If the algorithm does not take three items in $M_1\cup M_2$, then it picks the smallest $M_4$-item.
All the small items are taken into the remaining space.
We show that the algorithm is optimal when $2\sqrt{3}-2\le R\le \frac{3}{2}$.

\begin{algorithm}[htb]
  \caption{$\frac{2}{R}$-competitive algorithm}\label{alg:2/R}
  $B_0\ot\emptyset$\;
  \For{\(i\ot 1,2,\dots\)}{%
    \lIf{$\exists B'\subseteq B_{i-1}\cup\{e_i\}$ such that $R/2\le s(B')\le 1$}{$B_i\ot B'$}\label{line:2/R_win}
    \Else{
      let $B_i\ot \emptyset$, $A_j\ot (B_{i-1}\cup\{e_i\})\cap M_j~(j=1,2,3,4)$\;
      \If{$|A_1|>0$}{
        let $a\in\argmin_{e\in A_1}s(e)$ and $B_i\ot B_i\cup\{a\}$\;
        \If{$|A_1|\ge 2$}{
          let $b\in\argmin_{e\in A_1\setminus\{a\}}s(e)$\;
          \lIf{$|A_2|=0$ or $s(a)+s(b)\le\min_{e\in A_4}s(e)$}{$B_i\ot B_i\cup\{b\}$}
        }
      }
      \If{$|A_2|>0$}{
        let $a\in\argmin_{e\in A_2}s(e)$ and $B_i\ot B_i\cup\{a\}$\;
      }
      \If{$|A_3|>0$ and ``$|A_1\cup A_2|=0$ or $|A_4|=0$''}{
        let $a\in\argmin_{e\in A_3}s(e)$ and $B_i\ot B_i\cup\{a\}$\;
      }
      \If{$|A_4|>0$ and $|B_i\cap (M_1\cup M_2)|\le 2$}{
        let $a\in\argmin_{e\in A_4}s(e)$ and $B_i\ot B_i\cup\{a\}$\;
      }
      $B_i\ot B_i\cup ((B_{i-1}\cup\{e_i\})\cap S)$\;
    }
  }
\end{algorithm}

\begin{theorem}\label{thm:2/R}
  Algorithm~\ref{alg:2/R} is $\frac{2}{R}$-competitive when $2\sqrt{3} - 2\le R \le \frac{3}{2}$.
\end{theorem}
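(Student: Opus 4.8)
The plan is to follow the template established for Theorems~\ref{thm:10/9_to_(1+sqrt 2)/2} and~\ref{thm:17-9sqrt(3)}, adapting it to the five‑part partition $S,M_1,\dots,M_4,L$ of Figure~\ref{fig:sizes2/R}. The first step is to record the elementary inequalities that hold exactly on the range $2\sqrt3-2\le R\le 3/2$ and that govern the branching of Algorithm~\ref{alg:2/R}: the left endpoint is the unique point where $2(1-R/2)=R^2/4$, so throughout the range $2(1-R/2)\le R^2/4$ (two small $M_1$-items need not be bulkier than an $M_4$-item, which is what the test ``$s(a)+s(b)\le\min_{e\in A_4}s(e)$'' exploits), while $R\le 3/2$ gives $3(1-R/2)\ge R/2$; moreover $R>\sqrt2$ implies $R^2/4>1/2$, so two non-small items at least one of which lies in $M_3\cup M_4$ overflow the knapsack except in the degenerate all-$\tfrac12$ case. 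Using these I would verify that Algorithm~\ref{alg:2/R} never overflows the buffer --- the medium part of $B_i$ consists of at most two items of $M_1\cup M_2$, or a single $M_3$-item, together with at most one $M_4$-item --- and then collect the \emph{winning} configurations in the spirit of Observations~\ref{17-9sqrt(3):L} and~\ref{17-9sqrt(3):M_2_M_3}: if some round contains a large item, two $M_2$-items, an $M_1$- and an $M_3$-item, an $M_1$- and an $M_4$-item of total at most $1$, or an $M_2$- and an $M_3$-item of total at most $1$, then the test on Line~\ref{line:2/R_win} succeeds and $s(B^*)\in[R/2,1]$, so the competitive ratio is at most $2/R$.

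Next I would carry out the standard reduction. If the algorithm ever discards a small item, then (since small items have size at most $1-R/2$ and hence leave slack at least $1-R/2$ below the upper threshold) the medium core together with the accumulated small items already passed through the window $[R/2,1]$ --- a winning round --- so all small items are retained; comparing against the best medium-only packing $\OPT_M$ then gives $s(\OPT)/s(B^*)\le s(\OPT_M)/s(B^*\cap M)$, exactly as in the proof of Theorem~\ref{thm:17-9sqrt(3)}, so we may assume $I_n\subseteq M$.

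The heart of the argument is then a case analysis on $|\OPT|$. For $|\OPT|\ge 3$: every medium item exceeds $1-R/2$ and $4(1-R/2)>1$, so $|\OPT|=3$; using $R>\sqrt2$ and $R\le 3/2$ one checks the three items all lie in $M_1$, and then the kept pair of smallest $M_1$-items (or the implied presence of an $M_3$-item) forms a winning set. For $|\OPT|\le1$: a single item $m^*$, the largest in $I_n$, with $s(m^*)<R/2$; if $m^*\in M_1\cup M_2$ then $I=(m^*)$ and the ratio is $1$, if $m^*\in M_4$ then (since two items of $M_1\cup M_2$ cannot both be present, else $|\OPT|\ge2$) the algorithm keeps an $M_4$-item and $s(B^*)\ge R^2/4$, and if $m^*\in M_3$ then every other item has size at least $1-R^2/4$ (else $\OPT$ is not optimal), so $s(B^*)\ge 1-R^2/4$; in all cases the bound $R^3+2R^2\le 8$ (valid for $R\le 3/2$) yields ratio at most $2/R$. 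For $|\OPT|=2$, write $\OPT=\{e_j,e_k\}$ with $j<k$ and assume no round is winning; since $R>\sqrt2$, the only type-pairs not already covered by a winning observation are $\{M_1,M_1\}$, $\{M_1,M_2\}$ and $\{M_2,M_2\}$, and within each I would split on which medium items $B_n$ actually contains --- two $M_1$-items, an $M_1$-item with an $M_4$-item, an $M_2$-item with an $M_4$-item, a single $M_3$-item, and so on --- bounding $s(B^*)$ from below by the sum of the two smallest medium item sizes or by a single kept item, just as in Cases~1--3 of Lemma~\ref{lem:17-9sqrt(3):2}. A useful auxiliary fact is that when $|\OPT|=2$ no three $M_1$-items can sum to at most $1$, so the two smallest $M_1$-items have total more than $1-R/4$; combined with $s(\OPT)<R/2$ for a pair of $M_1$-items this gives ratio at most $\tfrac{2R}{4-R}\le 2/R$ for $R\le 3/2$.

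I expect the $|\OPT|=2$ case to be the main obstacle, and within it the sub-cases where the algorithm ends up holding two $M_1$-items, or an $M_1$-item together with an $M_4$-item: such a combination can have total size exceeding $1$, so it is not itself a winning set, and one must argue that if $s(B_n)>1$ then some earlier round was winning, which requires carefully tracking which items were discarded and when --- the same subtlety that made Case~1-2 of Lemma~\ref{lem:17-9sqrt(3):2} delicate. It is precisely in these sub-cases that the branching tests ``$|A_2|=0$ or $s(a)+s(b)\le\min_{e\in A_4}s(e)$'' and ``$|B_i\cap(M_1\cup M_2)|\le2$'' are invoked, together with the endpoint inequalities $2(1-R/2)\le R^2/4$ (needing $R\ge2\sqrt3-2$) and $3(1-R/2)\ge R/2$ (needing $R\le3/2$). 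Assembling all cases shows the competitive ratio is at most $2/R$, matching the lower bound of Theorem~\ref{thm:prop/removable_lower_iii}.
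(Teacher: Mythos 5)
Your plan tracks the paper's skeleton closely --- verify feasibility, record the winning observations (large item, two $M_2$-items, an $M_1$-item with an $M_3$-item coexisting in some round's buffer), reduce to $I_n\subseteq M$, and split on $|\OPT|$. But the step where you prune the $|\OPT|=2$ case down to the three type-pairs $\{M_1,M_1\}$, $\{M_1,M_2\}$, $\{M_2,M_2\}$ is where the argument breaks. You claim that any $\OPT=\{e_j,e_k\}$ with types in $\{M_1,M_3\}$, $\{M_1,M_4\}$, $\{M_2,M_3\}$, or $\{M_2,M_4\}$ is ``already covered by a winning observation.'' That conflates ``the two OPT items have those types'' with ``some round's buffer simultaneously holds items of those types.'' The observations about $M_1$--$M_3$ or $M_2$--$M_3$ coexistence only fire when both items (or representatives of their classes) are present in $B_{i-1}\cup\{e_i\}$ at the same round; the smallest $M_3$- or $M_4$-item can be discarded before the $M_1$- or $M_2$-item of $\OPT$ ever arrives, and then no winning observation applies directly. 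The paper's Lemma~\ref{lem:2/R:3} has six cases precisely because of this, and its Cases~(iii), (iv), (vi) each require tracking the round $\ell$ at which the relevant smallest $M_3$- or $M_4$-item was discarded and analyzing what $B_\ell$ must have contained.

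Most seriously, for the pair $\{M_1,M_3\}$ (paper's Case~(iii)) there need not be any winning round at all: the algorithm can legitimately end up with only an $M_1$-item and an $M_2$-item whose total is strictly below $R/2$. The paper handles this by deriving size constraints on the discarded $M_3$-item $m^*$, the retained $M_2$-item $m_1$ (namely $s(m_1)>1-R^2/4$), and the $M_1$-item $e_j$ (namely $s(e_j)<R/2-s(m_1)$), and then bounding the ratio directly as $\frac{(R/2-1+R^2/4)+R^2/4}{(1-R^2/4)+(1-R/2)}$, which one checks is $<2/R$ on the given interval. No ``winning configuration'' argument replaces this computation, so the reduction you propose would leave this case unhandled. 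Secondarily, in the $|\OPT|\ge3$ case you assert all three items lie in $M_1$, but two $M_1$-items and one $M_2$-item can also have total size at most $1$ on this range (minimum total $2-3R/4<1$), so that subcase must be treated as the paper does in Lemma~\ref{lem:2/R:1}. The rest of the proposal (feasibility, the reduction to $M$, and $|\OPT|\le1$) is sound and essentially mirrors the paper.
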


Let $I = (e_1,\dots,e_n)$ be the input sequence, 
$I_k=\{e_1,\dots,e_k\}$ $(k=0,1,\dots,n)$ be the set of first $k$ items in $I$,
$\OPT\in \argmax\{s(X)\mid s(X)\le 1,~X\subseteq I_n\}$ be an optimal solution, and 
$B^* \in \argmax\{s(B') \mid B'\subseteq B_n,\  s(B') \le 1\}$ be an outcome of Algorithm~\ref{alg:2/R}. 

We see that the algorithm does not violate the buffer constraint.
\begin{observation}
  Algorithm~\ref{alg:2/R} is feasible, i.e., $s(B_i)\le R$ for all $i\in\{0,1,\dots,n\}$.
\end{observation}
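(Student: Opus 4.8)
The plan is to argue per round by splitting on whether the condition at Line~\ref{line:2/R_win} fires in round $i$. If it does, then $B_i=B'$ for some $B'\subseteq B_{i-1}\cup\{e_i\}$ with $s(B')\le 1\le R$, and we are done; the base case $B_0=\emptyset$ is trivial as well. So for the remainder assume we are in the \emph{losing case}: no subset $B'\subseteq B_{i-1}\cup\{e_i\}$ satisfies $R/2\le s(B')\le 1$. In particular $B_{i-1}\cup\{e_i\}$ contains no large item (a single large item already has size in $[R/2,1]$), so $B_i=(B_i\cap M)\cup(B_i\cap S)$.

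The core step is to show that, in the losing case, $B_i\cap M$ admits a partition $B_i\cap M=B_i^{(1)}\cup B_i^{(2)}$ with $s(B_i^{(1)})<R/2$ and $s(B_i^{(2)})<R/2$. Every medium item has size strictly less than $R/2$, so this is immediate when $|B_i\cap M|\le 2$. For $|B_i\cap M|\ge 3$ I would enumerate the possible outputs of the ``else'' branch: at most one item is taken from each of $M_2,M_3,M_4$ and at most two from $M_1$, and the losing case forbids $B_{i-1}\cup\{e_i\}$ from containing both an $M_1$-item and an $M_3$-item, since such a pair has total size at least $(1-R/2)+1/2=3/2-R/2\ge R/2$ and at most $R/4+R^2/4<1$ for $R\le 3/2$, hence would be a winning subset. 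Matching this against the precise branch conditions, the only three-element configurations that survive are $\{m_1,m_1',m_2\}$, $\{m_1,m_1',m_4\}$, and $\{m_1,m_2,m_4\}$ (two $M_1$-items $m_1,m_1'$, one $M_2$-item $m_2$, one $M_4$-item $m_4$), and there is no configuration with four medium items. In the first two, $\{m_1,m_1'\}$ has size $<R/4+R/4=R/2$ and the remaining singleton has size $<R/2$; in the third, $\{m_4\}$ has size $<R/2$, while $s(m_1)+s(m_2)<R/4+1/2<1$ forces $s(m_1)+s(m_2)<R/2$ in the losing case, so $\{m_1,m_2\}$ is the other part.

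Given such a partition, first $s(B_i\cap M)<R$. Second, each small item has size at most $1-R/2$, so starting from $B_i^{(1)}$ (of size $<R/2$) and appending the items of $B_i\cap S$ one at a time, each partial sum grows by at most $1-R/2$; thus whenever a partial sum is below $R/2$, the next one is strictly below $R/2+(1-R/2)=1$. Since every such partial sum is realised by a subset of $B_{i-1}\cup\{e_i\}$, the losing case forbids any of them from lying in $[R/2,1]$, so by induction they all remain below $R/2$; in particular $s(B_i^{(1)})+s(B_i\cap S)<R/2$. Therefore $s(B_i)=\bigl(s(B_i^{(1)})+s(B_i\cap S)\bigr)+s(B_i^{(2)})<R/2+R/2=R$, as required.

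The main obstacle is the finite but fiddly case analysis establishing the $R/2$--$R/2$ partition of $B_i\cap M$: one must match the exact conditions in the ``else'' branch (the rule for taking a second $M_1$-item, the cap $|B_i\cap(M_1\cup M_2)|\le 2$ for taking an $M_4$-item, and the guard ``$|A_1\cup A_2|=0$ or $|A_4|=0$'' for taking an $M_3$-item) against the possible multisets of medium classes, and invoke the losing-case hypothesis either to bound a pair's total size below $R/2$ or to exclude a pair outright as a winning subset. Everything else is routine arithmetic with $2\sqrt{3}-2\le R\le 3/2$.
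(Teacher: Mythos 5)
Your proof is correct, and its core strategy matches the paper's: exhibit a partition of $B_i\cap M$ into two parts each of size $<R/2$, then absorb the small items into one part while observing that any partial sum entering $[R/2,1]$ would contradict the losing case (each small item has size $\le 1-R/2$, so a partial sum below $R/2$ cannot jump past $1$).

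Where you go beyond the paper is in the care taken with the partition. The paper compresses this to the remark ``$s(B_i\cap(M_1\cup M_2))<R/2$ (otherwise the condition is satisfied)'' and then bounds the medium part by $R/2+R/2$. That remark is actually not airtight: when $B_i$ holds two $M_1$-items plus an $M_2$-item (which happens when $|A_2|>0$ and $A_4=\emptyset$), their total can exceed $R/2$ without $B_{i-1}\cup\{e_i\}$ admitting any subset in $[R/2,1]$ --- the two $M_1$-items together are below $R/2$, each $M_1$--$M_2$ pair can be just below $R/2$, and the triple can be just above $1$, which is possible for $R>4/3$, hence throughout $[2\sqrt3-2,3/2]$. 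The observation survives only because in that configuration no $M_4$-item is added (the cap $|B_i\cap(M_1\cup M_2)|\le2$ blocks it), so the medium total is still $<R/2+1/2<R$. Your per-configuration partition --- $\{m_1,m_1'\}$ vs.\ the remaining singleton, or $\{m_4\}$ vs.\ $\{m_1,m_2\}$ with the losing-case bound $s(m_1)+s(m_2)<R/2$ --- never invokes that faulty aggregate bound, so it closes the gap cleanly. Your enumeration of the surviving $\ge3$-element configurations is also correct: the $M_1$/$M_3$ exclusion (since any such pair has size in $[(1-R/2)+1/2,\,R/4+R^2/4)\subseteq[R/2,1)$) together with the branch guard $|A_1\cup A_2|=0$ or $|A_4|=0$ for $M_3$ forces $|B_i\cap M|\le2$ whenever $B_i\cap M_3\neq\emptyset$, and the cap on $M_4$ then yields exactly the three triples you list and no quadruple. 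In short: same approach, but your version is the more rigorous one.
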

\begin{proof}
If the condition at Line~\ref{line:2/R_win} is satisfied in round $i$, then we have $s(B_i)\le 1\le R$.
Thus, we assume that the condition is not satisfied in round $i$.
Here, we remark that $s(B_i\cap (M_1\cup M_2))<R/2$ (otherwise the condition is satisfied).
If $B_i\cap M_3\ne\emptyset$, then the total size of medium items in $B_i$ is at most $R^2/4+R/2<R$.
If $B_i\cap M_3=\emptyset$, then the total size of medium items in $B_i$ is at most $R/2+R/2\le R$.
In addition, all the small items must fit the buffer since otherwise there exists a subset $B'\subseteq B_i$ that satisfies the condition at Line~\ref{line:2/R_win}.
\end{proof}

Next, we observe some sufficient conditions such that the algorithm outputs a solution $B^*$ such that $R/2\le s(B^*)\le 1$.
Note that, if $s(B^*)\geq R/2$, then the competitive ratio is at most $\frac{2}{R}$.
\begin{observation}\label{obs:2/R:L}
If $|I_n\cap L|\ge 1$, then $R/2\le s(B^*)\le 1$.
\end{observation}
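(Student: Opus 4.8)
The plan is to show that the guard at Line~\ref{line:2/R_win} becomes \emph{permanently} satisfied once a large item appears, so that from that round on the buffer always contains a subset of size in $[R/2,1]$; evaluating this at round $n$ immediately gives $s(B^*)\ge R/2$ (and $s(B^*)\le 1$ is automatic).

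First I would record the trivial fact that every large item $e$ satisfies $R/2\le s(e)\le 1$: the lower bound is the definition of $L$, and the upper bound is the standing assumption $s(e)\le 1$ from Section~\ref{section:preliminaries}. Let $e_j$ be the first large item of $I$. Since $\{e_j\}\subseteq B_{j-1}\cup\{e_j\}$ and $R/2\le s(e_j)\le 1$, the condition tested at Line~\ref{line:2/R_win} holds in round $j$, so the algorithm sets $B_j$ to some $B'$ with $R/2\le s(B_j)\le 1$.

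Next I would prove, by induction on $i$ with $j\le i\le n$, that $R/2\le s(B_i)\le 1$. The base case $i=j$ is the previous paragraph. For the step, assume $R/2\le s(B_i)\le 1$; then $B_i$ itself is a subset of $B_i\cup\{e_{i+1}\}$ of size in $[R/2,1]$, so the condition at Line~\ref{line:2/R_win} is again satisfied in round $i+1$, and $B_{i+1}$ is set to some $B'$ with $R/2\le s(B_{i+1})\le 1$. Taking $i=n$ gives $R/2\le s(B_n)\le 1$. Because $B_n$ is a feasible candidate in the definition $B^* \in \argmax\{s(B')\mid B'\subseteq B_n,\ s(B')\le 1\}$, we conclude $s(B^*)\ge s(B_n)\ge R/2$, while $s(B^*)\le 1$ holds by definition of $B^*$.

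There is essentially no obstacle here: the only subtlety is making sure the algorithm can never ``drop'' the good set it has already stored, which is exactly what the one-line induction verifies (in every later round the stored set is still available as an admissible choice for $B'$).
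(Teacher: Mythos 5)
Your proof is correct and takes essentially the same approach as the paper's one-line argument: the paper simply notes that $R/2\le s(e)\le 1$ for all $e\in L$, and your explicit induction (that once the guard at Line~\ref{line:2/R_win} fires, the stored set remains an admissible witness forever) is precisely the reasoning left implicit there.
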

\begin{proof}
This statement holds by $R/2\le s(e)\le 1$ for all $e\in L$.
\end{proof}

\begin{observation}\label{obs:2/R:M2M2}
If $|I_n\cap M_2|\ge 2$, then $R/2\le s(B^*)\le 1$.
\end{observation}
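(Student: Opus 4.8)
The plan is to show that once two $M_2$-items have both arrived, the algorithm is forced---at the latest in the round when the second one arrives---to detect a subset of the current buffer of size in $[R/2,1]$, and that from that round onward the buffer always retains such a subset. The starting point is the self-perpetuation of the ``winning'' branch (Line~\ref{line:2/R_win}): if $s(B_\ell)\in[R/2,1]$ for some round $\ell$, then $B_\ell\subseteq B_\ell\cup\{e_{\ell+1}\}$ is itself a witness for the condition at round $\ell+1$, so $B_{\ell+1}$ is again chosen with $s(B_{\ell+1})\in[R/2,1]$; inductively $s(B_j)\in[R/2,1]$ for all $j\ge\ell$, and since $B_n$ is then a feasible subset of itself we get $s(B^*)\ge s(B_n)\ge R/2$ together with $s(B^*)\le1$. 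Hence it suffices to produce a single round $\ell\le n$ with $s(B_\ell)\in[R/2,1]$.

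Next I would let $e_j$ and $e_k$ ($j<k\le n$) be the first two $M_2$-items in order of arrival and argue by contradiction: assume $s(B_i)\notin[R/2,1]$ for every $i\le n$. Then at each round $i<k$ the condition at Line~\ref{line:2/R_win} fails (otherwise $s(B_i)\in[R/2,1]$), so the algorithm executes the else-branch, in which the ``$|A_2|>0$'' block retains, unconditionally, the smallest item of $(B_{i-1}\cup\{e_i\})\cap M_2$. A short induction starting at round $j$ then shows that $B_i$ contains the smallest $M_2$-item of $I_i$ for all $i$ with $j\le i<k$: the base case is that $e_j$ is the unique $M_2$-item of $I_j$ and is retained, and in the step the set $(B_i\cup\{e_{i+1}\})\cap M_2$ still contains the previous minimum together with possibly $e_{i+1}$, so its minimum is exactly the smallest $M_2$-item of $I_{i+1}$. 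In particular $B_{k-1}$ contains an $M_2$-item $m\ne e_k$.

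Finally, at round $k$ the set $B_{k-1}\cup\{e_k\}$ contains the two distinct $M_2$-items $m$ and $e_k$; since every $M_2$-item has size in $[R/4,1/2)$ we get $R/2\le s(m)+s(e_k)<1$, so $\{m,e_k\}$ witnesses the condition at Line~\ref{line:2/R_win} in round $k$ and forces $s(B_k)\in[R/2,1]$, contradicting the assumption. Therefore some round $\ell\le n$ satisfies $s(B_\ell)\in[R/2,1]$, and the first step gives $R/2\le s(B^*)\le1$. The main obstacle is purely the bookkeeping in the induction that the running minimal $M_2$-item stays in the buffer until the second $M_2$-item appears: one must be careful that the contradiction hypothesis is exactly what keeps the algorithm in the else-branch up to round~$k$, and that the ``$|A_2|>0$'' retention carries no side conditions. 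The size estimate $R/2\le s(m)+s(e_k)<1$ and the self-perpetuation of the winning state are immediate from the definitions and use nothing about the precise range $2\sqrt3-2\le R\le\frac32$ beyond the shape of the partition.
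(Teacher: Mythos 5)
Your proof is correct and follows essentially the same approach as the paper's: locate the first round $k$ at which two $M_2$-items have appeared, observe that the else-branch always retains the current smallest $M_2$-item so $B_{k-1}\cup\{e_k\}$ contains two of them, and conclude $R/2\le R/4+R/4\le s(e')+s(e_k)<1$ witnesses the condition at Line~\ref{line:2/R_win}. You merely make explicit two steps the paper leaves implicit: the persistence of the ``winning'' state once it is reached, and the induction showing the smallest $M_2$-item stays in the buffer until $e_k$ arrives.
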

\begin{proof}
  Let $k$ be the smallest round such that $|I_k\cap M_2| \ge 2$.
  Then, $e_k\in M_2$ and $B_{k-1}$ contains an $M_2$-item $e'$  (if $B_{k-1}\notin [r,1]$).
  Thus, the claim holds since $R/2=R/4+R/4\le s(e_k)+s(e')\le 1$.
\end{proof}

\begin{observation}\label{obs:2/R:M1M3}
If $|(B_{i-1}\cup\{e_i\})\cap M_1|\ge 1$ and $|(B_{i-1}\cup\{e_i\})\cap M_3|\ge 1$ for some $i$, 
then $R/2\le s(B^*)\le 1$.
\end{observation}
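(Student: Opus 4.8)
The plan is to reduce the statement to triggering the ``win'' test at Line~\ref{line:2/R_win} of Algorithm~\ref{alg:2/R}. Concretely, I would fix a round $i$ satisfying the hypothesis and pick an $M_1$-item $a$ and an $M_3$-item $c$ from $B_{i-1}\cup\{e_i\}$; since $M_1\cap M_3=\emptyset$, the set $\{a,c\}$ is a genuine two-element subset of $B_{i-1}\cup\{e_i\}$.

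Next I would bound $s(a)+s(c)$ using the thresholds of the partition. From $s(a)>1-R/2$ and $s(c)\ge 1/2$ we get $s(a)+s(c)>\tfrac{3}{2}-\tfrac{R}{2}\ge \tfrac{R}{2}$, where the last inequality uses $R\le 3/2$; and from $s(a)<R/4$ and $s(c)<R^2/4$ we get $s(a)+s(c)<\tfrac{R(1+R)}{4}\le \tfrac{15}{16}<1$, again using $R\le 3/2$. Hence $R/2\le s(\{a,c\})\le 1$, so in round $i$ there exists $B'\subseteq B_{i-1}\cup\{e_i\}$ with $R/2\le s(B')\le 1$, the condition at Line~\ref{line:2/R_win} is satisfied, and $B_i$ is set to such a $B'$.

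Finally I would observe that this property is absorbing: if $s(B_j)\in[R/2,1]$ for some $j\ge i$, then $B_j$ is itself a subset of $B_j\cup\{e_{j+1}\}$ of size in $[R/2,1]$, so Line~\ref{line:2/R_win} triggers again in round $j+1$ and $s(B_{j+1})\in[R/2,1]$. By induction $s(B_n)\in[R/2,1]$, and since $s(B_n)\le 1$ we get $B^*=B_n$, so $R/2\le s(B^*)\le 1$, which is exactly the claim. This mirrors the proof of Observation~\ref{17-9sqrt(3):M_2_M_3}; the only points needing care are the two threshold inequalities (both of which use only $R\le 3/2$, not the lower bound on $R$) and the remark that the win branch persists, and neither poses a real obstacle.
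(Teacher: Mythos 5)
Your argument follows the same strategy as the paper's: pick an $M_1$-item $a$ and an $M_3$-item $c$ from $B_{i-1}\cup\{e_i\}$, bound $s(a)+s(c)$ into $[R/2,1]$, and invoke the win test at Line~\ref{line:2/R_win}. The lower bound $s(a)+s(c)>(1-R/2)+1/2\ge R/2$ matches the paper exactly. Your upper bound, however, is not only a different computation but the \emph{correct} one: the paper writes $s(e')+s(e'')\le 1/2+R^2/4\le 1$, but $1/2+R^2/4\le 1$ requires $R\le\sqrt{2}$, whereas this subsection has $2\sqrt{3}-2\le R\le 3/2$ and $2\sqrt{3}-2>\sqrt{2}$, so that final inequality fails throughout the relevant range. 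Your bound $s(a)+s(c)<R/4+R^2/4=R(R+1)/4\le 15/16<1$ uses the actual $M_1$ ceiling $R/4$ (rather than the looser $1/2$) and holds for all $R\le 3/2$, repairing the slip. The closing paragraph on the win condition being absorbing, i.e., once $s(B_j)\in[R/2,1]$ the set $B_j$ itself witnesses Line~\ref{line:2/R_win} in round $j+1$, is correct and makes explicit a step the paper leaves implicit (here and in Observations~\ref{17-9sqrt(3):L}--\ref{17-9sqrt(3):M_2_M_3} and~\ref{obs:2/R:L}--\ref{obs:2/R:M2M2}).
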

\begin{proof}
Let $e'\in (B_{i-1}\cup\{e_i\})\cap M_1$ and $e''\in (B_{i-1}\cup\{e_i\})\cap M_3$.
The claim holds since $R/2\le (1-R/2)+1/2\le s(e')+s(e'')\le 1/2+R^2/4\le 1$.
\end{proof}

By Observation~\ref{obs:2/R:L}, it is sufficient to consider the case that $I_n\cap L=\emptyset$.
Let $\OPT_M\in \argmax\{s(X) \mid X \subseteq I_n\cap M, s(X)\le 1\}$.
Then, the competitive ratio of Algorithm~\ref{alg:2/R} is
$$\frac{s(\OPT)}{s(B^*)}
\le \frac{s(\OPT_M) + s(I_n\cap S)}{s(B^*\cap M) + s(I_n\cap S)}
\le \frac{s(\OPT_M)}{s(B^*\cap M)}.$$
Therefore, we can assume $I_n\subseteq M$ without loss of generality.

Now, we prove Theorem~\ref{thm:2/R}.
As $s(e)>1/4$ for all $e\in M$,
we consider the following three cases separately: $|\OPT|=3$, $|\OPT|=2$, and $|\OPT|\le 1$.
\begin{lemma}\label{lem:2/R:1}
If $I_n\subseteq M$ and $|\OPT|=3$, then the competitive ratio is at most $2/R$.
\end{lemma}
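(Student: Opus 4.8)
The plan is to prove the stronger statement that, under the hypotheses $I_n\subseteq M$ and $|\OPT|=3$, the buffer of Algorithm~\ref{alg:2/R} eventually contains a subset of size in $[R/2,1]$; since this forces $s(B^*)\ge R/2\ge (R/2)s(\OPT)$, it yields competitive ratio at most $2/R$.

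First I would carry out the reductions. Because every medium item has size greater than $1-R/2$, each of the three items of an optimal solution with $s(\OPT)\le 1$ has size less than $1-2(1-R/2)=R-1\le\tfrac12$, so $\OPT\subseteq M_1\cup M_2$. If $\OPT$ used two $M_2$-items then $|I_n\cap M_2|\ge 2$ and Observation~\ref{obs:2/R:M2M2} already gives $s(B^*)\ge R/2$; hence we may assume $|I_n\cap M_2|\le 1$, so $\OPT$ has at least two $M_1$-items. Two cases remain: Case~A, where all three items of $\OPT$ lie in $M_1$, and Case~B, where exactly two lie in $M_1$ and the third is the unique $M_2$-item of $I_n$. (Observation~\ref{obs:2/R:M1M3} additionally rules out $M_3$-items in $I_n$, though this will not be needed.)

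Next, assume for contradiction that the condition in Line~\ref{line:2/R_win} is never satisfied (otherwise, as in the earlier lemmas, $B^*=B_n$ already has size in $[R/2,1]$ and we are done). Under this assumption I would record the invariant that after each round $i$ the buffer contains the smallest $M_1$-item of $I_i$ together with \emph{either} the second smallest $M_1$-item of $I_i$ \emph{or} an $M_4$-item of size at most the sum of the two smallest $M_1$-items of $I_i$, and also the smallest $M_2$-item of $I_i$ whenever $I_i\cap M_2\neq\emptyset$. Establishing this invariant and propagating it across rounds --- in particular describing what happens once the algorithm drops the second $M_1$-item in favour of a smaller $M_4$-item, and checking the property survives subsequent rounds --- is the main bookkeeping obstacle; everything afterwards is a short calculation.

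Finally I would apply the invariant. Let $w$ be the last-arriving item of $\OPT$ and $k$ its round. Suppose first that we are in Case~A, or in Case~B with $w$ being the $M_2$-item; then the other two items of $\OPT$ are $M_1$-items $x,y$ that have arrived by round $k-1$, so by the invariant $B_{k-1}$ contains two $M_1$-items of total size at most $s(x)+s(y)$, or an $M_4$-item of size at most $s(x)+s(y)$. Adjoining $w$ produces a subset of $B_{k-1}\cup\{e_k\}$ of size at most $s(\OPT)\le 1$ and of size at least $3(1-R/2)$ or $2-\tfrac{3R}{4}$ (two $M_1$-items plus an $M_1$- or $M_2$-item), or at least $R^2/4+(1-R/2)$ or $R^2/4+R/4$ (an $M_4$-item plus an $M_1$- or $M_2$-item); each of these is $\ge R/2$ throughout $2\sqrt3-2\le R\le\tfrac32$, the inequalities reducing respectively to $R\le\tfrac32$, $R\le\tfrac85$, $(R-2)^2\ge 0$, and $R\ge 1$. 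In the other subcase (Case~B with $w$ an $M_1$-item), the remaining $M_1$-item of $\OPT$ and the unique $M_2$-item $z$ of $I_n$ already lie in $B_{k-1}$ by the invariant, and together with $w$ they give a three-element subset of $B_{k-1}\cup\{e_k\}$ of size at most $s(\OPT)\le 1$ and at least $2(1-R/2)+R/4=2-\tfrac{3R}{4}\ge R/2$. In every case the condition of Line~\ref{line:2/R_win} would be met at round $k$, contradicting the assumption that it never holds; hence it does hold at some round, so $s(B^*)\ge R/2$ and the competitive ratio is at most $2/R$.
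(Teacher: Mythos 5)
Your approach is essentially the same as the paper's: you reduce to the cases $\OPT\subseteq M_1\cup M_2$ with $|\OPT\cap M_2|\le 1$, look at the round $k$ when the last $\OPT$-item $w$ arrives, and argue that $B_{k-1}\cup\{e_k\}$ then contains a subset with size in $[R/2,1]$, contradicting the assumption that Line~\ref{line:2/R_win} never triggers. The arithmetic you use to lower-bound the candidate subsets (checking $3(1-R/2)\ge R/2$, $2-3R/4\ge R/2$, $R^2/4+(1-R/2)\ge R/2$, $R^2/4+R/4\ge R/2$) is correct and matches the paper's. The paper's proof of this lemma is structured around the same claim --- that at round $j$ the buffer holds two $M_1$-items whose total size is bounded by the two earlier $\OPT\cap M_1$ items, or a suitably small $M_4$-item --- which is exactly what your invariant captures; the paper simply asserts it for the specific round in question rather than phrasing it as a running invariant.

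That said, there are two genuine gaps in what you wrote. First, and most importantly, you explicitly leave the invariant unproven (``establishing this invariant \dots is the main bookkeeping obstacle''), and the invariant \emph{is} the heart of the argument: the whole contradiction rests on knowing what the algorithm has retained at round $k-1$. To close this you need an induction that tracks what happens after the algorithm first sacrifices the second-smallest $M_1$-item for an $M_4$-item, and you need to phrase the $M_4$ clause so it refers to the current smallest two $M_1$-items of $I_i$ (not of $A_1$, since a previously dropped $M_1$-item is not in $A_1$). Second, in the subcase where $w$ is an $M_1$-item you claim ``the remaining $M_1$-item of $\OPT$ \dots already lie[s] in $B_{k-1}$ by the invariant''; the invariant does not guarantee that. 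It only guarantees that the \emph{smallest} $M_1$-item $m_1$ of $I_{k-1}$ is in $B_{k-1}$. Fortunately the bound still goes through with $m_1$ in place of the remaining $\OPT$ item, because $s(m_1)$ is no larger and hence $s(\{w,m_1,z\})\le s(\OPT)\le 1$ still holds, but the statement as written is incorrect and you should make the substitution explicit. You also need to say what happens in this subcase when the invariant is realized by the $M_4$-item option: there you would take $\{w,m\}$ with $m\in M_4$ and use $R^2/4+(1-R/2)\ge R/2$ for the lower bound, and you need an upper bound on $s(m)$ to get $s(\{w,m\})\le 1$, which requires the $M_4$ clause of your invariant to be quantified against the two smallest $M_1$-items of $I_{k-1}$ (both of which are bounded by items of $\OPT$).
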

\begin{proof}
  We have two cases to consider: (i) $|\OPT \cap M_1| = 3$ and (ii) $|\OPT\cap M_1|=2$ and $|\OPT\cap M_2|=1$.
  Without loss of generality, we assume that $s(B_n)\notin [R/2,1]$ (otherwise the competitive ratio is clearly at most $2/R$).

  Suppose that $|\OPT\cap M_1|=3$.
  Let $\{a,b,c\}=\OPT \cap M_1$ arrive in the order of $a,b,c$.
  Let $j$ be the round such that $c$ arrives.
  At the beginning of round $j$, the algorithm keeps two $M_1$-items whose total size is at most $s(a) + s(b)$, or keeps an $M_4$-item with size at most $s(a) + s(b)$.
  Thus, we have $R/2\le s(B_j)\le 1$ since $3(1-R/2)\ge R/2$ and $s(a)+s(b)+s(c)\le 1$, a contradiction.

  Suppose that $|\OPT\cap M_1|=2$.
  Let $j$ be the round such that $j=\max\{i\mid  e_i\in\OPT\}$.
  Then, $B_{j-1}\cup\{e_j\}$ contains an $M_2$-item with size at most $s(\OPT\cap M_2)$.
  In addition, $B_{j-1}\cup\{e_j\}$ contains two $M_1$ items whose total size is at most $s(\OPT\cap M_1)$, or keeps an $M_4$-item with size at most $s(\OPT\cap M_1)$.
  Thus, we have $R/2\le s(B_j)\le 1$ since $2(1-R/2)+R/4\ge R/2$ and $s(\OPT)=s(\OPT\cap M_1)+s(\OPT\cap M_2)\le 1$, a contradiction.
\end{proof}

\begin{lemma}\label{lem:2/R:2}
If $I_n\subseteq M$ and $|\OPT|\le 1$, then the competitive ratio is at most $2/R$.
\end{lemma}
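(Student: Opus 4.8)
Proof proposal for Lemma~\ref{lem:2/R:2}.

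The plan is to follow the scheme of Lemma~\ref{lem:17-9sqrt(3):1}: collapse the instance to a single ``record'' item and then argue that Algorithm~\ref{alg:2/R} always keeps a surviving item of size not much smaller than that record. Since the claim is immediate when $\OPT=\emptyset$, I would assume $\OPT=\{m^*\}$, so $m^*$ is a largest item of $I_n$ and $s(\OPT)=s(m^*)$; recall we may assume $I_n\subseteq M$, hence $1-R/2<s(m^*)<R/2$. If at some round $i$ we have $R/2\le s(B_i)\le 1$, then $s(B^*)\ge R/2$ and the ratio is at most $2/R$; so I would assume the ``win'' condition never fires. Likewise, if $m^*\in B_n$, then no $B'\subseteq B_n$ with $s(B')\le 1$ can beat $\{m^*\}$, so $s(B^*)=s(m^*)$ and the ratio is $1\le 2/R$; hence I would assume $m^*\notin B_n$ as well. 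The key consequence of $|\OPT|=1$ is that every item $e\ne m^*$ satisfies $s(e)+s(m^*)>1$ (otherwise $\{e,m^*\}$ would be a feasible set of larger size than $\OPT$); in particular $|I_n|\ge 2$ and every item of $B_n$, none of which is $m^*$, has size strictly larger than $1-s(m^*)$, so $s(B^*)>1-s(m^*)$.

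Next I would split on the size class of $m^*$. If $m^*\in M_1\cup M_2$, then $s(m^*)<1/2$, so any other item would have size $>1-s(m^*)>1/2>s(m^*)$, contradicting that $m^*$ is largest; so this case cannot occur once $m^*\notin B_n$. If $m^*\in M_3$, i.e.\ $1/2\le s(m^*)<R^2/4$, the elementary bound above already suffices:
\[
  \frac{s(\OPT)}{s(B^*)}\ <\ \frac{s(m^*)}{1-s(m^*)}\ <\ \frac{R^2/4}{1-R^2/4}\ \le\ \frac{2}{R},
\]
where the last inequality is equivalent to $R^3+2R^2\le 8$, which holds throughout $2\sqrt3-2\le R\le 3/2$ (it is increasing in $R$ and negative at $R=3/2$).

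The remaining and genuinely delicate case is $m^*\in M_4$, i.e.\ $R^2/4\le s(m^*)<R/2$: here $s(B^*)>1-s(m^*)$ is too weak, as it only yields ratio $<\tfrac{R/2}{1-R/2}\le\tfrac{2}{R}$ for $R\le\sqrt5-1$, which is below our range. Instead I would show $s(B^*)\ge R^2/4$, which gives ratio $<\tfrac{R/2}{R^2/4}=\tfrac{2}{R}$. For this I would trace what Algorithm~\ref{alg:2/R} retains. Since $m^*\in M_4$, the algorithm's last rule keeps the smallest $M_4$-item at every round after $m^*$ arrives, unless the buffer at that round already holds two $M_1$-items and one $M_2$-item (the only way $|B_i\cap(M_1\cup M_2)|=3$). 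In the first situation $B_n$ retains an $M_4$-item of size $\ge R^2/4$ and we are done. In the second situation, call the three retained medium items $a,b\in M_1$ and $c\in M_2$; using the ``win'' condition never fires together with the bounds $s(a)+s(b)>2(1-R/2)>R/4$ and $s(c)\ge R/4$ (valid for $R\le 3/2$), one gets $s(a)+s(b)+s(c)>R/2$, so this total cannot lie in $[R/2,1]$ and therefore exceeds $1$; hence $c$ together with the larger of $a,b$ forms a feasible subset of $B_n$ (their total is $<R/4+1/2<1$) of size at least $\tfrac23\bigl(s(a)+s(b)+s(c)\bigr)>\tfrac23\ge R^2/4$. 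The main obstacle I anticipate is making this last case airtight: one must rule out the possibility that $m^*$ is dropped while $B_n$ ends up holding, say, an $M_3$-item of size only slightly above $1/2$ and nothing larger, which I would exclude by combining Observations~\ref{obs:2/R:M2M2} and~\ref{obs:2/R:M1M3} (no $M_1$- and $M_3$-item, and no two $M_2$-items, can coexist in the buffer) with the fact that the algorithm unconditionally keeps the smallest $M_1$-item once one has appeared, so that $B_n$ can never return to a configuration lacking $M_1$-items after $m^*$ has been discarded. Having handled all three size classes of $m^*$, the lemma follows.
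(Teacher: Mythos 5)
Your proposal is correct and follows essentially the same case split as the paper's proof: reduce to $I_n\subseteq M$ and $|\OPT|=1$, then case on which $M_i$ contains $m^*$, showing in particular that $s(B^*)\ge R^2/4$ when $m^*\in M_4$ via the dichotomy ``$B_n$ retains an $M_4$-item or $B_n$ holds two $M_1$-items plus an $M_2$-item.'' The only cosmetic differences are that you first reduce to $m^*\notin B_n$ and exploit $s(B^*)>1-s(m^*)$ for the $M_3$ case (the paper instead uses that some $M_3$-item must survive since $I_n\cap M_4=\emptyset$), and in the $M_4$/three-item subcase you take the two largest of $\{a,b,c\}$ via pigeonhole where the paper simply lower-bounds by one $M_1$-item plus one $M_2$-item; both yield the same $R^2/4$ threshold, so the argument goes through.
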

\begin{proof}
  Since the claim is clear when $\OPT=\emptyset$, we assume $|\OPT|= 1$.
  Let $\OPT=\{m^*\}$.
  Note that $m^*$ is the largest item in $I_n$ and $s(B_n)\le s(m^*)<R/2$.

  If $m^*\in M_1\cup M_2$, then $I\coloneqq(m^*)$ and the competitive ratio is one.
  If $m^*\in M_3$, then the competitive ratio is at most $\frac{R^2/4}{1/2}\le 2/R$ since $B_n\cap M_3\ne\emptyset$ by $I_n\cap M_4=\emptyset$.
  If $m^*\in M_4$, then the competitive ratio is at most $\frac{R/2}{\min\{R^2/4, (1-R/2)+R/4\}}=\frac{R/2}{R^2/4}=2/R$ by $R^2/4\le (1-R/2)+R/4$
  since $B_n\cap M_4\ne\emptyset$ or ``$B_n\cap M_1\ne\emptyset$ and $B_n\cap M_2\ne\emptyset$''.
\end{proof}

\begin{lemma}\label{lem:2/R:3}
  If $I_n\subseteq M$ and $|\OPT|=2$, then the competitive ratio is at most $2/R$.
\end{lemma}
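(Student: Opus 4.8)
The plan is to imitate the case analysis in the proof of Lemma~\ref{lem:17-9sqrt(3):2}. Fix an optimal pair $\OPT=\{e_j,e_k\}$ with $j<k$. If $R/2\le s(B_i)\le1$ at some round $i$, then $s(B^*)\ge R/2\ge s(\OPT)\cdot\frac R2$ and the ratio is at most $2/R$; so I would assume throughout that $s(B_i)\notin[R/2,1]$ for every $i$, i.e.\ that the win-branch at Line~\ref{line:2/R_win} never fires. I would first collect the arithmetic that holds for $2\sqrt3-2\le R\le\frac32$: the thresholds satisfy $1-\frac R2<\frac R4<\frac12\le\frac{R^2}4<\frac R2$; every ``good'' combination fills the knapsack to at least $R/2$, namely $2(1-\frac R2)+\frac R4\ge\frac R2$, $3(1-\frac R2)\ge\frac R2$, $(1-\frac R2)+\frac{R^2}4\ge\frac R2$, $(1-\frac R2)+\frac12\ge\frac R2$, and $\frac R4+\frac12\ge\frac R2$; moreover $\frac{R^2}2>1$; and the decisive competitive inequality $\frac{2R}{4-R}\le\frac2R$, equivalent to $R^2+R-4\le0$, holds. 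Finally I would prove, by a short induction in the no-win regime, the invariant that at the end of every round $B_i$ contains the smallest item of $I_i\cap M_1$, the smallest item of $I_i\cap M_2$, and --- whenever $|B_i\cap(M_1\cup M_2)|\le2$ --- the smallest item of $I_i\cap M_4$.

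Next I would rule out every type-pair for $\{e_j,e_k\}$ except $M_1M_1$ and $M_1M_4$. Observation~\ref{obs:2/R:M2M2} forbids two $M_2$-items; $\frac{R^2}2>1$ forbids two $M_4$-items; $\frac12+\frac{R^2}4\ge1$ forbids an $M_3$- and an $M_4$-item; and two $M_3$-items force both sizes to equal $\frac12$, so a feasible subset of size $1$ arises when the second one arrives. For each remaining combination containing an $M_2$- or an $M_3$-item ($M_1M_2$, $M_2M_3$, $M_2M_4$, $M_1M_3$), the invariant says the algorithm holds, by round $\max\{j,k\}$, a tracked item of the type of $e_j$ of size at most $s(e_j)$ (for the $M_1M_3$ case I would combine this with Observation~\ref{obs:2/R:M1M3} and an arrival-order argument); pairing it with $e_k$ produces a feasible set of size in $[R/2,1]$ via the relevant inequality above, contradicting the no-win assumption. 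Thus $\{e_j,e_k\}\in\{M_1M_1,\ M_1M_4\}$.

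For $\OPT=\{f,h\}$ with $f\in M_1$ and $h\in M_4$, the key point is that the win-branch must fire when the later of $f,h$ arrives: by the invariant the buffer then already contains either the smallest $M_1$-item $m_1$ (with $s(m_1)\le s(f)$) or the smallest $M_4$-item $m_4^{\min}$ (with $s(m_4^{\min})\le s(h)$), and the corresponding pair $\{m_1,h\}$ or $\{f,m_4^{\min}\}$ has size at most $s(\OPT)\le1$ and at least $(1-\frac R2)+\frac{R^2}4\ge\frac R2$. The only subtlety is the moment when $|B\cap(M_1\cup M_2)|=3$ (so that no $M_4$-item is being tracked); there I would use that retaining two $M_1$-items alongside an $M_2$-item forces $s$(those two)$\,\le s(m_4^{\min})$, which keeps the pairing of the smallest $M_1$-item with $h$ feasible.

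The last case, $e_j,e_k\in M_1$, is the crux, handled as in Cases~1-1 and~1-2 of Lemma~\ref{lem:17-9sqrt(3):2}. Here $s(\OPT)<2\cdot\frac R4=\frac R2$, and --- crucially --- the naive bound $s(B^*)\ge s(m_1)+s(m_2)>2(1-\frac R2)=2-R$ on the two smallest $M_1$-items is \emph{not} enough, because $\frac{R/2}{2-R}\ge\frac2R$ on this range. Instead I would observe that a feasible triple of $M_1$-items has size $\ge3(1-\frac R2)\ge\frac R2>s(\OPT)$ and would beat $\OPT$; hence if $|I_n\cap M_1|\ge3$ the three smallest $M_1$-items $m_1\le m_2\le m_3$ satisfy $s(m_1)+s(m_2)+s(m_3)>1$, so the feasible pair $\{m_1,m_2\}\subseteq B_n$ has size $>1-s(m_3)>1-\frac R4$, giving $\frac{s(\OPT)}{s(B^*)}<\frac{R/2}{1-R/4}=\frac{2R}{4-R}\le\frac2R$; if $|I_n\cap M_1|\le2$ then $B_n\supseteq\OPT$ unless an $M_4$-item makes the algorithm drop the second $M_1$-item, in which case I would fall back on the feasible pair (smallest $M_1$-item, smallest $M_2$-item) of size $\ge1-\frac R4$, and on the $M_1M_4$ argument for the surviving $M_4$-item. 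The main obstacle is exactly this bookkeeping: showing that in every branch --- in particular whenever the ``$|B\cap(M_1\cup M_2)|\le2$'' gate in Algorithm~\ref{alg:2/R} blocks the smallest $M_4$-item from being kept --- some feasible pair actually present in $B_n$ has size at least $s(\OPT)\cdot\frac R2$.
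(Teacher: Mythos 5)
Your plan breaks down at the \textbf{$M_1M_2$ pair}, which you claim to rule out by the ``win-branch fires'' contradiction. The argument would need the inequality $(1-\tfrac R2)+\tfrac R4\ge\tfrac R2$, i.e.\ $1-\tfrac R4\ge\tfrac R2$, i.e.\ $R\le\tfrac43$. But the lemma's range is $R\in[2\sqrt3-2,\tfrac32]\approx[1.46,1.5]$, where $R>\tfrac43$, so an $M_1$-item paired with an $M_2$-item can have total size strictly below $R/2$ — not enough to trigger Line~\ref{line:2/R_win}. Notice that your own ``good combinations'' list contains $2(1-\tfrac R2)+\tfrac R4\ge\tfrac R2$ but \emph{not} the weaker $(1-\tfrac R2)+\tfrac R4\ge\tfrac R2$, yet the latter is exactly what the $M_1M_2$ ruling-out would require. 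The paper therefore does \emph{not} reach a contradiction in this case; instead it uses Observation~\ref{obs:2/R:M2M2} to argue the unique $M_2$-item $e_k$ survives in $B_n$ alongside the smallest $M_1$-item and then bounds the ratio directly by
\[
\frac{s(e_j)+s(e_k)}{(1-R/2)+s(e_k)}\le\frac{R/4+R/4}{(1-R/2)+R/4}=\frac{2}{R}\cdot\frac{R^2}{4-R}<\frac2R .
\]
A similar gap appears in your treatment of $M_1M_3$: when $B_n\cap M_3=\emptyset$ (all $M_3$-items get evicted) the win-branch may genuinely never fire, and the paper again falls back on a second delicate ratio bound $\frac{2}{R}\cdot\frac{R(R+2)(R-1)}{(2-R)(R+4)}<\frac2R$; waving at ``an arrival-order argument'' does not supply this. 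Your bookkeeping in the $M_1M_1$ case is also confused: once both optimal items are in $M_1$, optimality forces $I_n\cap M_2=\emptyset$, so the ``fall back on the pair (smallest $M_1$-item, smallest $M_2$-item)'' branch you invoke cannot arise. The high-level case decomposition you set up matches the paper's, and the $M_1M_4$, $M_2M_3$, $M_2M_4$ eliminations are in the right spirit, but the missing ratio-bound arguments for $M_1M_2$ and the $B_n\cap M_3=\emptyset$ subcase of $M_1M_3$ are essential and cannot be replaced by the contradiction you propose.
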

\begin{proof}
  Without loss of generality, we assume that $s(B_n)\notin [R/2,1]$ (otherwise the competitive ratio is clearly at most $2/R$).
  Let $\OPT=\{e_j,e_k\}$ with $s(e_j)\le s(e_k)$.
  We have six cases to consider:
  (i) $e_j,e_k\in M_1$,
  (ii) $e_j\in M_1$, $e_k\in M_2$,
  (iii) $e_j\in M_1$, $e_k\in M_3$,
  (iv) $e_j\in M_1$, $e_k\in M_4$,
  (v) $e_j\in M_2$, $e_k\in M_2$, and
  (vi) $e_j\in M_2$, $e_k\in M_3\cup M_4$.
  \begin{description}
  \item[Case (i):] Suppose that $e_j,e_k\in M_1$. 
    In this case $I_n\cap M_2=\emptyset$ by the optimality of $\{e_j,e_k\}$.
    Note that $e_j,e_k$ are largest two items of $I_n\cap M_1$ and $s(\OPT)< R/4+R/4=R/2$.
    If $|I_n\cap M_1|=2$, then the competitive ratio is clearly one.
    If $|I_n\cap M_1|\ge 3$, then $s(B^*) > 1-s(e_k)$ (because any triplet of items in $I_n\cap M_1$ cannot be packed together into the knapsack) and the competitive ratio is at most
    \begin{align*}
      \frac{s(e_j)+s(e_k)}{s(B^*)}
      < \frac{2s(e_k)}{1-s(e_k)}
      %\le \max_{1-R/2\le t\le R/4}\frac{2t}{1-t}
      < \frac{R/2}{1-R/4}
      =\frac{2}{R}\cdot \frac{R^2}{4-R}<\frac{2}{R},
    \end{align*}
    where the last inequality holds since $f(R)\coloneqq\frac{R^2}{4-R}$ is monotone increasing for $R\in[0,4)$
    and $f(3/2)=9/10<1$.

  \item[Case (ii):] Suppose that $e_j\in M_1$ and $e_k\in M_2$.
    By Observation~\ref{obs:2/R:M2M2}, $|I_n\cap M_2| = 1$ and $e_k \in B_n$.
    Thus, the competitive ratio is at most
    \begin{align*}
    \frac{s(e_j)+s(e_k)}{(1-R/2)+s(e_k)}
    \le\frac{R/4+s(e_k)}{(1-R/2)+s(e_k)}
    \le \frac{R/4+R/4}{(1-R/2)+R/4}=\frac{2}{R}\cdot \frac{R^2}{4-R}<\frac{2}{R}.
    \end{align*}

  \item[Case (iii):] Suppose that $e_j\in M_1$ and $e_k\in M_3$.
    If $B_n\cap M_3\ne\emptyset$ then $R/2\le s(B_n)\le 1$ by Observation~\ref{obs:2/R:M1M3}.
    Otherwise, let $\ell$ be the round such that the item $m^*\in \argmin\{s(e)\mid e\in I_n\cap M_3\}$ is discarded,
    i.e., $m^*\in B_{\ell-1}\cup\{e_{\ell}\}$ and $m^*\notin B_{\ell}$.
    Note that, by Observation~\ref{obs:2/R:M1M3}, $I_\ell \cap M_1 = \emptyset$.
    Then, $|B_\ell\cap M_2|=|B_\ell\cap M_4|=1$ and $|B_\ell\cap M_1|=0$.
    Let $B_\ell\cap M_2=\{m_1\}$ and $B_\ell\cap M_4=\{m_2\}$.
    We remark that $I_n\cap M_2=\{m_1\}$ by Observation~\ref{obs:2/R:M2M2}.
    Here, $j>\ell$ since otherwise $|B_\ell\cap M_1|\ge 1$.
    Also, we have $s(m_1)>1-s(m^*)>1-R^2/4$ and $s(e_j)<R/2-s(m_1)<R/2-(1-R^2/4)$.
    Hence, the competitive ratio is at most
    \begin{align*}
    \frac{s(e_j)+s(e_k)}{s(m_1)+(1-R/2)}
    \le\frac{(R/2-1+R^2/4)+(R^2/4)}{(1-R^2/4)+(1-R/2)}
    %\le \frac{R/2-1+R^2/2}{(1-R^2/4)+(1-R/2)}
    =\frac{2}{R}\cdot \frac{R(R+2)(R-1)}{(2-R)(R+4)}
    <\frac{2}{R},
    \end{align*}
    where the last inequality holds since $g(R)\coloneqq\frac{R(R+2)(R-1)}{(2-R)(R+4)}$ is monotone increasing for $R\in[0,2)$
    and $g(3/2)=21/22<1$.

  \item[Case (iv):] Suppose that $e_j\in M_1$ and $e_k\in M_4$.
    If $B_n\cap M_4\ne\emptyset$ then $R/2\le s(B_n)\le 1$
    since $B_n$ contains the smallest $M_1$-item $e'$ and the smallest $M_4$-item $e''$,
    which satisfy $R/2\le s(e')+s(e'')\le s(e_j)+s(e_k)\le 1$.
    Otherwise, let $\ell$ be the round such that the item $m^*\in \argmin\{s(e)\mid e\in I_n\cap M_4\}$ is discarded,
    i.e., $m^*\in B_{\ell-1}\cup\{e_{\ell}\}$ and $m^*\notin B_{\ell}$.
    Then $|B_\ell\cap M_1|=2$ and $|B_\ell\cap M_2|=1$.
    Let $|B_\ell\cap M_1|=\{m_1,m_2\}$ with $s(m_1)\le s(m_2)$.
    Then, we have $s(m_1)+s(m_2)\le s(m^*)\le s(e_k)$ and $s(m_1)+s(m^*)>1$.
    Note that $j>\ell$ since otherwise $\{e_j,m^*\}\subseteq B_{j-1}\cup\{e_{\ell}\}$ and $R/2\le (1-R/2)+R^2/4\le s(e_j)+s(m^*)\le s(e_j)+s(e_k)\le 1$, a contradiction.
    Then, there exist two items $m_1',m_2'\in B_{j-1}\cap M_1$ such that $s(m_1')+s(m_2')\le s(m_1)+s(m_2)$,
    and hence $B'\coloneqq \{e_j,m_1',m_2'\}\subseteq B_{j-1}\cup\{e_j\}$ satisfies $R/2\le 3(1-R/2)\le s(B')\le s(e_j)+s(m_1)+s(m_2)\le s(e_j)+s(e_k)\le 1$, a contradiction.
    
  \item[Case (v):] Suppose that $e_j,e_k\in M_2$. In this case, we have $R/2\le s(B^*)\le 1$ by Observation~\ref{obs:2/R:M2M2}.
    
  \item[Case (vi):] Suppose that $e_j\in M_2$ and $e_k\in M_3\cup M_4$.
    If $B_n\cap (M_3\cup M_4) \ne\emptyset$, then $R/2\le s(B_n)\le 1$
    since $B_n$ contains $e'\in\argmin\{s(e)\mid e\in I_n\cap M_2\}$ and $e''\in\argmin\{s(e)\mid e\in I_n\cap (M_3\cup M_4)\}$,
    which satisfy $R/2\le R/4+1/2\le s(e')+s(e'')\le s(e_j)+s(e_k)\le 1$.
    Otherwise, let $\ell$ be the round such that the item $m^*\in \argmin\{s(e)\mid e\in I_n\cap (M_3\cup M_4)\}$ is discarded,
    i.e., $m^*\in B_{\ell-1}\cup\{e_{\ell}\}$ and $m^*\notin B_{\ell}$.
    Then, we have $B_\ell\cap M_2=\{e_j\}$ by Observation~\ref{obs:2/R:M2M2}
    and $R/2\le s(e_j)+s(m^*)\le s(e_j)+s(e_k)\le 1$, a contradiction. \qedhere
  \end{description}
\end{proof}

Hence, Theorem~\ref{thm:2/R} is proved from Lemmas~\ref{lem:2/R:1}, \ref{lem:2/R:2}, and \ref{lem:2/R:3}.

\subsection{General \texorpdfstring{$R$}{R}}

In this subsection, we consider proportional\&removable case with general $R$.
By Theorem~\ref{thm:general_removable_general}, the upper bound of the competitive ratio is $1+O(\log R/R)$.
Hence, we only give a lower bound of the competitive ratio.

\begin{theorem}\label{thm:lower_proportional_removable_general}
%  Let $R = \frac{n-1}{2} \ (n \in \mathbb{N} , n \ge 2)$. Then for arbitrary small $\epsilon > 0$, the competitive ratio is at least $1 + \frac{1}{2R} - \epsilon$.
  For any positive real $\epsilon< 1$, 
  the competitive ratio of the proportional\&removable online knapsack problem with a buffer is
  at least $1 + \frac{1}{\lceil 2R\rceil+1} - \epsilon$.
\end{theorem}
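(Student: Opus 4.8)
The plan is to design an adaptive adversary and to show that against any deterministic algorithm $\ALG$, for every sufficiently small $\epsilon'>0$, we can produce an input on which $\OPT(I)/\ALG(I)\ge 1+\tfrac1{\lceil 2R\rceil+1}-\epsilon$. Write $m:=\lceil 2R\rceil$. I would follow the template already used for the paper's other removable lower bounds (Theorems~\ref{thm:general/removable_lower_R>1}, \ref{thm:prop/removable_lower_ii}, \ref{thm:prop/removable_lower_iii}): in the first phase feed a block of items whose total size slightly exceeds $R$, so that the buffer is forced to discard at least one item; in the second phase, react to which item was discarded by appending a single ``complement'' item, so that the offline optimum becomes $1$ while the buffer's best subset is stuck below $\tfrac{m+1}{m+2}=\bigl(1+\tfrac1{\lceil 2R\rceil+1}\bigr)^{-1}$.

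Concretely, I would use one ``anchor'' item of size about $\tfrac{m+1}{m+2}$ together with a block of items of size slightly above $\tfrac12$ (so that the knapsack admits only one item from the block), with the number of items and the $O(\epsilon')$-perturbations tuned so that the total size is $>R$ while the anchor together with all-but-one of the remaining items still fits in the buffer. If $\ALG$ retains the anchor and discards a block item $e_j$, the adversary appends an item of size $1-s(e_j)$; then $\OPT\ge s(e_j)+(1-s(e_j))=1$, whereas in the buffer the anchor gives value $\approx\tfrac{m+1}{m+2}$ and every retained block item has size larger than $s(e_j)$, so none of them can be paired with the new item without overflowing the knapsack, giving $\ALG\le\tfrac{m+1}{m+2}+O(\epsilon')$ and hence the desired ratio. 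If instead $\ALG$ discards the anchor, the adversary appends the anchor's (tiny) complement, so $\OPT=1$ and $\ALG\le\tfrac12+O(\epsilon')$, which only strengthens the bound.

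The hard part is the case analysis certifying the bound against \emph{every} deterministic algorithm, not just the obvious greedy one. The delicate situations are: (a) $\ALG$ cleverly retains a block item that is \emph{smaller} than the one it discards, hoping to combine it with the forthcoming complement and reach value close to $1$ — ruling this out is exactly what forces the sizes (and possibly the length) of the block to be chosen carefully, and may require continuing the adversarial sequence past a single complement and finally complementing a suitably chosen discarded item; and (b) $\ALG$ discards several items at once, in which case the adversary must complement the largest discarded item whose size is below every retained item. Pinning down those parameters, verifying that the ``$>R$'' condition (forcing a discard) and the ``retain anchor plus all but one'' condition are simultaneously satisfiable for all $R\ge 1$, and absorbing the perturbations $\epsilon'$ into the final slack $\epsilon$, are the remaining, more routine, steps.
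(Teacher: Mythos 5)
Your plan shares the paper's adaptive‐adversary template (force a discard, then complement a discarded item), but the concrete construction you propose — one anchor of size $\approx\tfrac{n}{n+1}$ plus a block of items all slightly above $\tfrac12$ — does not work, and the gap you flag in (a)/(b) is not ``routine.''

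The failure is this: suppose the block items have sizes $\tfrac12+\delta_1<\cdots<\tfrac12+\delta_k$ with all $\delta_i$ small. $\ALG$ can keep the anchor and discard only the \emph{largest} block item $e_j$. Then every retained block item is smaller than $e_j$, so the complement $1-s(e_j)=\tfrac12-\delta_j$ can be paired with the second‐largest retained block item $e_{j'}$ to get value $1-(\delta_j-\delta_{j'})$, which is arbitrarily close to $1$. To prevent this you need the gaps $\delta_j-\delta_{j'}$ to be at least $\Omega(1/n)$, so the block must span a constant‐length interval; but all block items must also stay below $1$ while summing, with the anchor, to just above $R$, which forces about $2R$ block items and thus spacing $O(1/R^2)=o(1/n)$. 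These two requirements contradict each other, so your construction cannot be tuned to work for all $R$. Your fallback of ``complement the largest discarded item whose size is below every retained item'' is also undefined when $\ALG$ discards exactly one item and it is the largest.

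The paper avoids this by making the first‐phase item sizes span the whole interval, $s(e_i)=\tfrac{i}{n}+\tfrac{\epsilon}{n^2}$ for $i=1,\dots,n-1$ with $n=\lceil 2R\rceil+1$, and then complementing the discarded $e_j$ by $e_n$ of size $1-s(e_j)=\tfrac{n-j}{n}-\tfrac{\epsilon}{n^2}$. Every subset $B^*\subseteq B_n$ then has size of the form $\tfrac{\text{(integer)}}{n}+(\text{signed perturbation of order }\epsilon/n^2)$; the sign of the perturbation is arranged (there is at most one $-\epsilon/n^2$ term, from $e_n$) so that the integer numerator of any feasible $B^*$ is at most $n-1$, giving $\ALG(I)\le\tfrac{n-1}{n}+\tfrac{\epsilon}{n}$ regardless of which item $\ALG$ discards, with only a short separate check for $|B^*|\le 2$ when $e_n\in B^*$. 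This arithmetic ``carry'' argument is the missing idea in your proposal: it is what makes the bound hold uniformly over all of $\ALG$'s discard strategies, including the discard‐the‐largest strategy that breaks the anchor‐plus‐uniform‐block construction.
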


\begin{proof}
Let $n\coloneqq \lceil 2R\rceil+1$ and let $\ALG$ be an online algorithm.
Consider the item sequence $I\coloneqq(e_1,\dots,e_{n-1},e_n)$ where
$s(e_i)=\frac{i}{n}+\frac{\epsilon}{n^2}$ for $i=1,\dots,n-1$ (we will set $s(e_n)$ later depending on $\ALG$).
At the end of $(n-1)$st round, $\ALG$ must discard at least one item because $\sum^{n-1}_{i=1} s(e_i) > \frac{n-1}{2}=\frac{\lceil 2R\rceil}{2} \ge R$.
Suppose that $\ALG$ discards $e_j$, and let $s(e_n)=1-s(e_j)$.
Then, we have $\OPT(I)=s(e_j)+s(e_n)=1$.
We will prove that $\ALG(I)$ is at most $1-(1-\epsilon)/n$,
which implies that the competitive ratio of $\ALG$ is at least $\frac{1}{1-(1-\epsilon)/n}\ge 1+(1-\epsilon)/n\ge 1+\frac{1}{\lceil 2R\rceil+1}-\epsilon$.

Let $B^*$ be the output of $\ALG$, i.e., $s(B^*)=\ALG(I)$.
We have two cases to consider: $e_n\notin B^*$ and $e_n\in B^*$.
\begin{description}
\item[Case 1:] If $e_n\not\in B^*$, then we have $s(B^*)=\frac{\sum_{e_i\in B^*} i}{n}+\frac{|B^*|\cdot\epsilon}{n^2}$.
We assume $B^*\ne\emptyset$ since otherwise $s(B^*)=0$.
Since $s(B^*)\le 1$ and $\frac{|B^*|\cdot\epsilon}{n^2}>0$, we have $\frac{\sum\{i\mid e_i\in B^*\}}{n}\le\frac{n-1}{n}$.
Hence, we obtain $s(B^*)\le \frac{n-1}{n}+\frac{n\cdot\epsilon}{n^2}=1-\frac{1-\epsilon}{n}$.

\item[Case 2:] If $e_n\in B^*$, then we have $s(B^*)=\frac{(n-j)+\sum\{i\mid e_i\in B^*\setminus\{e_n\}\}}{n}+\frac{(|B^*|-2)\cdot\epsilon}{n^2}$.
We assume $|B^*|\ge 3$ since otherwise $s(B^*)\le \frac{n-1}{n}$ by $e_j\not\in B^*$.
Since $s(B^*)\le 1$ and $\frac{(|B^*|-2)\cdot\epsilon}{n^2}>0$, we have $\frac{(n-j)+\sum\{i\mid e_i\in B^*\setminus\{e_n\}\}}{n}\le\frac{n-1}{n}$.
Hence, we obtain $s(B^*)\le \frac{n-1}{n}+\frac{n\cdot\epsilon}{n^2}=1-\frac{1-\epsilon}{n}$. \qedhere
\end{description}
\end{proof}

\section*{Acknowledgments}
  The first author was supported by RGC (HKU716412E) and NSFC (11571060).
  The second author was supported by JSPS KAKENHI Grant Number 16K16005.
  The third author was supported by JSPS KAKENHI Grant Number JP24106002, JP25280004, JP26280001, and JST CREST Grant Number JPMJCR1402.

\bibliography{ref}

\begin{thebibliography}{10}

\bibitem{AKL2019}
Susanne Albers, Arindam Khan, and Leon Ladewig.
\newblock {Improved Online Algorithms for Knapsack and GAP in the Random Order
  Model}.
\newblock In {\em Proceedings of {APPROX/RANDOM}}, pages 22:1--22:23, 2019.

\bibitem{AK2009}
Badanidiyuru Ashwinkumar and Robert Kleinberg.
\newblock Randomized online algorithms for the buyback problem.
\newblock In {\em Internet and Network Economics}, pages 529--536, 2009.

\bibitem{BHK2009}
Moshe Babaioff, Jason~D. Hartline, and Robert~D. Kleinberg.
\newblock Selling ad campaigns: Online algorithms with cancellations.
\newblock In {\em Proceedings of EC}, 2009.

\bibitem{BIKK2007}
Moshe Babaioff, Nicole Immorlica, David Kempe, and Robert Kleinberg.
\newblock A knapsack secretary problem with applications.
\newblock In {\em Proceedings of APPROX/RANDOM}, pages 16--28. Springer, 2007.

\bibitem{CJS2016}
Marek Cygan, {\L}ukasz Je{\.{z}}, and Ji{\v{r}}{\'i} Sgall.
\newblock Online knapsack revisited.
\newblock {\em Theory of Computing Systems}, 58(1):153--190, 2016.

\bibitem{HKM2014}
Xin Han, Yasushi Kawase, and Kazuhisa Makino.
\newblock Online unweighted knapsack problem with removal cost.
\newblock {\em Algorithmica}, 70(1):76--91, 2014.

\bibitem{HKM2015}
Xin Han, Yasushi Kawase, and Kazuhisa Makino.
\newblock Randomized algorithms for online knapsack problems.
\newblock {\em Theoretical Computer Science}, 562:395--405, 2015.

\bibitem{IT2002}
Kazuo Iwama and Shiro Taketomi.
\newblock Removable online knapsack problems.
\newblock In {\em Proceeding of ICALP}, pages 293--305, 2002.

\bibitem{IZ2007}
Kazuo Iwama and Guochuan Zhang.
\newblock Optimal resource augmentations for online knapsack.
\newblock In {\em Proceedings of APPROX/RANDOM}, pages 180--188, 2007.

\bibitem{KHM2016}
Yasushi Kawase, Xin Han, and Kazuhisa Makino.
\newblock Proportional cost buyback problem with weight bounds.
\newblock {\em Theoretical Computer Science}, 2016.

\bibitem{KHM2018}
Yasushi Kawase, Xin Han, and Kazuhisa Makino.
\newblock Unit cost buyback problem.
\newblock {\em Theory of Computing Systems}, 2018.

\bibitem{KI2017}
Yasushi Kawase and Atsushi Iwasaki.
\newblock Near-feasible stable matchings with budget constraints.
\newblock In {\em Proceedings of IJCAI}, pages 242--248, 2017.

\bibitem{KI2018}
Yasushi Kawase and Atsushi Iwasaki.
\newblock Approximately stable matchings with budget constraints.
\newblock In {\em Proceedings of {AAAI}}, pages 242--248, 2018.

\bibitem{KI2019}
Yasushi Kawase and Atsushi Iwasaki.
\newblock Approximately stable matchings with general constraints.
\newblock {\em CoRR}, abs/1907.04163, 2019.
\newblock \href {http://arxiv.org/abs/1907.04163} {\path{arXiv:1907.04163}}.

\bibitem{KelPfePis04}
Hans Kellerer, Ulrich Pferschy, and David Pisinger.
\newblock {\em Knapsack Problems}.
\newblock Springer-Verlag Berlin Heidelberg, 2004.

\bibitem{KP1998}
Anton~J. Kleywegt and Jason~D. Papastavrou.
\newblock The dynamic and stochastic knapsack problem.
\newblock {\em Operations research}, 46(1):17--35, 1998.

\bibitem{Komm2016}
Dennis Komm.
\newblock {\em An Introduction to Online Computation}.
\newblock Springer, 2016.

\bibitem{MV1995}
Alberto Marchetti-Spaccamela and Carlo Vercellis.
\newblock Stochastic on-line knapsack problems.
\newblock {\em Mathematical Programming}, 68(1):73--104, 1995.

\bibitem{ZCL2008}
Yunhong Zhou, Deeparnab Chakrabarty, and Rajan Lukose.
\newblock Budget constrained bidding in keyword auctions and online knapsack
  problems.
\newblock In {\em Proceedings of WWW}, pages 1243--1244, 2008.

\end{thebibliography}

\clearpage
\appendix

\section{Relationship Among \texorpdfstring{$m,\epsilon$ and $R$}{m, epsilon and R} in Algorithm~\ref{alg:weighted_general}}\label{sec:relation_m_eps_R}

Here, we prove some relationships among $m,\epsilon$ and $R$ in Algorithm~\ref{alg:weighted_general}.
\begin{lemma}\label{lem:relation_m_eps_R}
	Let $R\ge 3$, $m \coloneqq \lfloor (R-3)/2\rfloor$ and let $\epsilon > 0$ be a real such that $\log_{1+\epsilon} (1/\epsilon) = m$.
	Then, $m=\Theta(\frac{1}{\epsilon}\log\frac{1}{\epsilon})$ and $\epsilon = O(\log R/R)$
\end{lemma}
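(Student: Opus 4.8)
The plan is to work directly with the defining identity. Writing $\log_{1+\epsilon}(1/\epsilon)=m$ in natural logarithms gives $\ln(1/\epsilon)=m\ln(1+\epsilon)$, equivalently $\epsilon(1+\epsilon)^m=1$. Throughout I would use the elementary bounds $\epsilon/2\le\ln(1+\epsilon)\le\epsilon$, valid for $\epsilon\in(0,1]$ (the upper bound is standard, and the lower bound follows from $\ln(1+\epsilon)\ge\epsilon-\epsilon^2/2\ge\epsilon/2$ when $\epsilon\le 1$); note that $\epsilon\le 1$ is forced, since $m=\log_{1+\epsilon}(1/\epsilon)\ge 0$ requires $1/\epsilon\ge 1$.

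For the first claim, I would substitute these bounds into $m=\ln(1/\epsilon)/\ln(1+\epsilon)$ to obtain the sandwich $\frac{\ln(1/\epsilon)}{\epsilon}\le m\le\frac{2\ln(1/\epsilon)}{\epsilon}$. Since the two sides differ only by the constant factor $2$, and passing between $\ln$ and any fixed-base $\log$ costs another constant factor, this is exactly $m=\Theta(\frac1\epsilon\log\frac1\epsilon)$ (the degenerate case $m=0$, $\epsilon=1$ holds trivially).

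For the second claim I need an upper bound on $\epsilon$. The main obstacle is that the naive Bernoulli estimate $(1+\epsilon)^m\ge 1+m\epsilon$ only yields $\epsilon=O(1/\sqrt{m})$, which is too weak to give $O(\log R/R)$. Instead I would use the sharper bound $(1+x)^m\ge e^{mx/2}$ for $x\in(0,1]$ (again from $\ln(1+x)\ge x/2$). Setting $x_0\coloneqq\frac{2\ln m}{m}$, one checks $x_0\le 1$ for all $m\ge 1$, so $(1+x_0)^m\ge e^{mx_0/2}=e^{\ln m}=m$, whence $x_0(1+x_0)^m\ge 2\ln m\ge 1$ for $m\ge 2$. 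Because $x\mapsto x(1+x)^m$ is increasing on $[0,\infty)$ and equals $1$ at $x=\epsilon$, this gives $\epsilon\le x_0=\frac{2\ln m}{m}$.

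Finally I would translate back to $R$: from $m=\lfloor(R-3)/2\rfloor$ we have $(R-5)/2<m<R$, so for $R\ge 7$ (hence $m\ge 2$) we get $\epsilon\le\frac{2\ln m}{m}<\frac{4\ln R}{R-5}=O(\log R/R)$, and the finitely many smaller values of $R$ do not affect the $O$-statement. The only genuinely non-routine step is the estimate $\epsilon\le\frac{2\ln m}{m}$; everything else is bookkeeping with standard logarithm inequalities.
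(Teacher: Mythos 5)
Your proof is correct. For the first claim your argument is essentially the paper's: both use an elementary estimate on $\ln(1+\epsilon)$ (you sandwich it between $\epsilon/2$ and $\epsilon$; the paper equivalently uses $2\le(1+1/x)^x\le e$) to turn the defining identity into $\epsilon m\cdot\Theta(1)=\log(1/\epsilon)$, which gives $m=\Theta(\frac{1}{\epsilon}\log\frac{1}{\epsilon})$. For the second claim your route differs slightly: the paper bootstraps from $\epsilon\le\frac{\log(1/\epsilon)}{m\log 2}$ by replacing $\log\frac{1}{\epsilon}$ with $\log\bigl(\frac{1}{\epsilon}\log\frac{1}{\epsilon}\bigr)\le\log m$, whereas you test the candidate $x_0=\frac{2\ln m}{m}$ against the increasing function $x\mapsto x(1+x)^m$ and conclude $\epsilon\le x_0$ by monotonicity. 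The two methods are equally elementary and deliver the same $O(\log m/m)$ bound; your ``guess-and-check with monotonicity'' approach is arguably a bit more self-contained and avoids the implicit requirement $\log\frac{1}{\epsilon}\ge 1$ that the paper's chain of inequalities silently uses (harmless in either case since $R$ is taken large). The final translation from $m$ to $R$ is the same in both.
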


\begin{proof}
    By the definition of the base of natural logarithm $e$ and the monotonicity of $(1+1/x)^x$,
	we have $2\le (1+1/x)^x \le e$ for any $x\ge 1$.
	As $\epsilon\le 1$, we have
	\begin{align*}
		2^{\epsilon m}\le (1+\epsilon)^{\frac{1}{\epsilon}\epsilon m} &\le e^{\epsilon m}.
	\end{align*}
	By substituting $m = \log_{1+\epsilon} (1/\epsilon)$, we have $(1+\epsilon)^{\frac{1}{\epsilon}\epsilon m} = 1/\epsilon$.
	Hence, we get
	\begin{align}
		\epsilon m\log 2 \le \log \frac{1}{\epsilon} \le \epsilon m. \label{app:1}
	\end{align}
	This implies $m=\Theta(\frac{1}{\epsilon}\log\frac{1}{\epsilon})$.

	Next, we show that $\epsilon = O(\log R/R)$.
	By the inequalities \eqref{app:1}, we have
	\begin{align*}
		\epsilon &\le \frac{\log \frac{1}{\epsilon}}{m\log 2}
		 \le \frac{\log \bigl(\frac{1}{\epsilon}\log\frac{1}{\epsilon}\bigr)}{m\log 2}
		 \le \frac{\log m}{m\log 2}
		 = \frac{\log \lfloor (R-3)/2\rfloor }{\lfloor (R-3)/2\rfloor \log 2}
		 = O\left(\frac{\log R}{R}\right). \qedhere
	\end{align*}
\end{proof}

\end{document}